\newtheorem{theorem}{Theorem}
\newtheorem{lemma}{Lemma}
\newtheorem{remark}{Remark}
\title{Joint Structural Break Detection and Parameter Estimation in High-Dimensional Non-Stationary VAR Models}
\begin{document}

\author[ ]{Abolfazl Safikhani\thanks{as5012@columbia.edu}}
\author[ ]{Ali Shojaie\thanks{ashojaie@uw.edu}}
\affil[ ]{Columbia University and University of Washington}

\maketitle


\vspace{1cm}

\begin{abstract}
Assuming stationarity is unrealistic in many time series applications. A more realistic alternative is to assume piecewise stationarity, where the model is allowed to change at potentially many time points. We propose a three-stage procedure for consistent estimation of both structural change points and parameters of high-dimensional piecewise vector autoregressive (VAR) models. In the first step, we reformulate the change point detection problem as a high-dimensional variable selection one, and solve it using a penalized least square estimator with a total variation penalty. We show that the proposed penalized estimation method over-estimates the number of change points. We then propose a selection criterion to identify the change points. In the last step of our procedure, we estimate the VAR parameters in each of the segments. We prove that the proposed procedure consistently detects the number of change points and their locations. We also show that the procedure consistently estimates the VAR parameters. The performance of the method is illustrated through several simulation studies and real data examples.

\noindent\textbf{Keywords:} High-dimensional time series; Piecewise stationarity; Structural breaks; Total variation penalty.
\end{abstract}

\section{Introduction}\label{sec:intro}
Emerging applications in biology \citep{smith2012future, fujita2007modeling, mukhopadhyay2006causality} and finance \citep{de_2008, fan_2011sparse} have sparked an interest in methods for analyzing high-dimensional time series. Recent work includes new regularized estimation procedures for vector autoregressive (VAR) models \citep{Basu_2015, matteson_2017}, high-dimensional generalized linear models \citep{hall_2016} and high-dimensional point processes \citep{HansenETAL_2015,  ChenWittenShojaie_2017}. 
Related methods have also focused on joint estimation of multiple time series \citep{QiuETAL_2016}, estimation of (inverse) covariance matrices \citep{xiao2012covariance, chen2013covariance, tank2015bayesian}, and estimation of high-dimensional systems of differential equations \citep{LuETAL_2011, ChenShojaieWitten_2016}.

Despite considerable progress on both computational and theoretical fronts, the vast majority of existing work on high-dimensional time series assumes that the underlying process is \emph{stationary}. However, multivariate time series observed in many modern applications are \emph{nonstationary}. For instance, \citet{ClaridaGaliGertler_2000} show that the effect of inflation on interest rates varies across Federal Reserve regimes. Similarly, as pointed out by \citet{OmbaoVonSachsGuo_2005}, electroencephalograms (EEGs) recorded during an epileptic seizure display amplitudes and spectral distribution that vary over time. This nonstationarity in EEG signals is illustrated in Figure~\ref{fig_EEG_full}, which shows the signals recorded at 18 EEG channels during an epileptic seizure from a patient diagnosed with left temporal lobe epilepsy \citep{OmbaoVonSachsGuo_2005}. 
The sampling rate in this data is 100~Hz and the total number of time points per EEG is $T =22,768$ over $\sim$228 seconds. Based on the neurologist's estimate, the seizure took place at $t = \sim85s$. Figure~\ref{fig_EEG_full} also suggests that the magnitude and the variability of EEG signals change  around that time. 

\begin{figure}[t]
\begin{center}
\includegraphics[width=0.5\linewidth, clip=TRUE, trim=0cm 0cm 0cm 2cm]{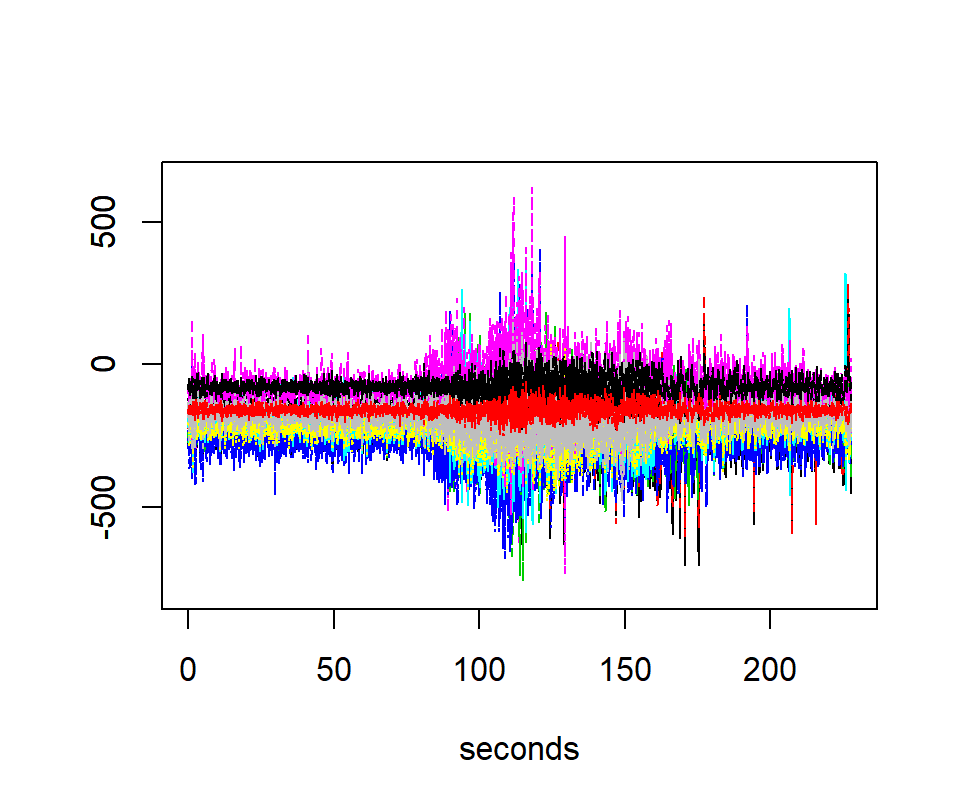}
\vspace{-0.5cm}
\caption{EEG signals from a patient diagnosed with left temporal lobe epilepsy. The data was recorded at 18 locations on the scalp during an epileptic seizure over 22,768 time points.}\label{fig_EEG_full}
\end{center}
\end{figure} 

Detecting structural break points in high-dimensional time series and obtaining reliable estimates of model parameters are important from multiple perspectives. First, structural breaks often reveal important changes in the underlying system and are, hence, scientifically important. In our EEG example, automatic detection of structural breaks can assist clinicians in identifying seizures. 
Second, changes in model parameters before and after break points often provide important scientific insight. For instance, the occurrence of epileptic seizure is expected to change the mechanism of interactions among brain regions. Such changes can be seen in Figure~\ref{fig_EEG_network}. The figure shows networks of interactions among EEG channels before and after seizure. These networks are obtained from estimates using our proposed method, as described in Section~\ref{sec:data}. Briefly, edges in the first two networks correspond to \emph{Granger causal} relations \citep{granger1969} among EEG channels before and after the period of seizure; the occurrence of seizure is also automatically detected using our proposed method. It can be seen that while the two networks share many edges, they also exhibit important differences. Perhaps most notable are changes in connectivity patterns of channels T5, P3 and Pz, which measure brain activity in the left temporal lobe, the cite of epilepsy in the patient. Without reliable estimates of model parameters, gaining such scientific insight may not be feasible. 
Third, identifying structural breaks in time series is also crucial for proper data analysis. The last network in Figure~\ref{fig_EEG_network} shows the network of interactions obtained from the full EEG data, ignoring the structural break due to seizure. This network is much more dense than the other two, and is, in fact, close to a fully connected network, which is rather unexpected. This example underscores that ignoring the structural breaks and assuming stationarity when analyzing times series can result in severe estimation bias.  

In this paper, we develop a regularized estimation procedure to simultaneously  detect the structural break points, and estimate the model parameters in high-dimensional piecewise stationary VARs with possibly many break points. 
{
We show that our proposed three-stage procedure is consistent for identifying the number and location of structural breaks in the covariance structure of multivariate time series, and for estimating the model parameters. 
To the best of our knowledge, this is the first method that can simultaneously identify the change points and estimate the model parameters in high-dimensional nonstationary time series with infinitely many break points. In fact, while change point detection in multivariate time series has been an active area of recent research (see the review in Section~\ref{sec:othermethods}), existing procedures do not provide consistent estimates of model parameters. This is in part due to the difficulty of change point detection, especially in high dimensions. More specifically, high-dimensional change point detection methods, including ours, only provide consistent estimates of the \emph{relative location} of change points. Thus, a na\"ive approach based on using the estimated change points to break the time series into `stationary' pieces, and then using high-dimensional parameter estimation procedures, may not lead to consistent estimates of model parameters. This point is formalized at the end of Section~\ref{sec:estimation} (Remark~5) and is illustrated in our simulation studies, where it can be seen that combining the state-of-the-art change point detection procedure with regularized  parameter estimation methods results in inferior estimates of model parameters. 
The third step of our procedure circumvent this difficulty by utilizing our improved rates of consistency of change point detection, and dividing the time series into segments that are guaranteed to be stationary.  

Another advantage of our method is that it can handle real high-dimensional VAR models with infinitely many structural breaks, where the number of time series components can grow exponentially fast with respect to the number of time points. 
This is in contrast to most existing methods for change point detection, which can only handle univariate or multivariate processes where the number of time series components grows at most polynomially with the number of time points 
\citep[see,  e.g.,][]{Chan_2014, cho2015multiple}. The high-dimensionality of time series brings new challenges in deriving asymptotic results, compared to univariate or finite-dimensional cases. We derive rates of consistency that are better than the previously established rates, and require less stringent assumptions in some cases; see Section~\ref{sec:comparison} for detailed comparison of our method with existing procedures. 
}

The rest of this paper is organized as follows. Before describing the piecewise stationary model in Section~\ref{sec:model}, we review related procedures for nonstationary time series in Section~\ref{sec:othermethods}. An initial estimation procedure and its asymptotic properties are discussed in Section~\ref{sec:initial}. In Section~\ref{sec:consistency}, we show that under reasonable assumptions, structural breaks in high-dimensional VAR models can be consistently estimated. 
Our procedure for consistent parameter estimation is presented in Section~\ref{sec:estimation} and 
{
finite sample tuning parameters selection for the procedure are discussed in Section~\ref{sec:tuningselection}.} 
Results of simulation experiments are presented in Section~\ref{sec:sims}. In Section~\ref{sec:data}, we illustrate the utility of the proposed method by applying it to identify structural break points in two multivariate time series. We conclude the paper with a discussion in Section~\ref{sec:disc}. Technical lemmas, proofs, algorithmic details and additional simulation results are gathered in the Appendix. 

\begin{figure}[t!]
\begin{center}
\includegraphics[width=0.6\linewidth, clip=TRUE, trim=0cm 0cm 0cm 0.8cm]{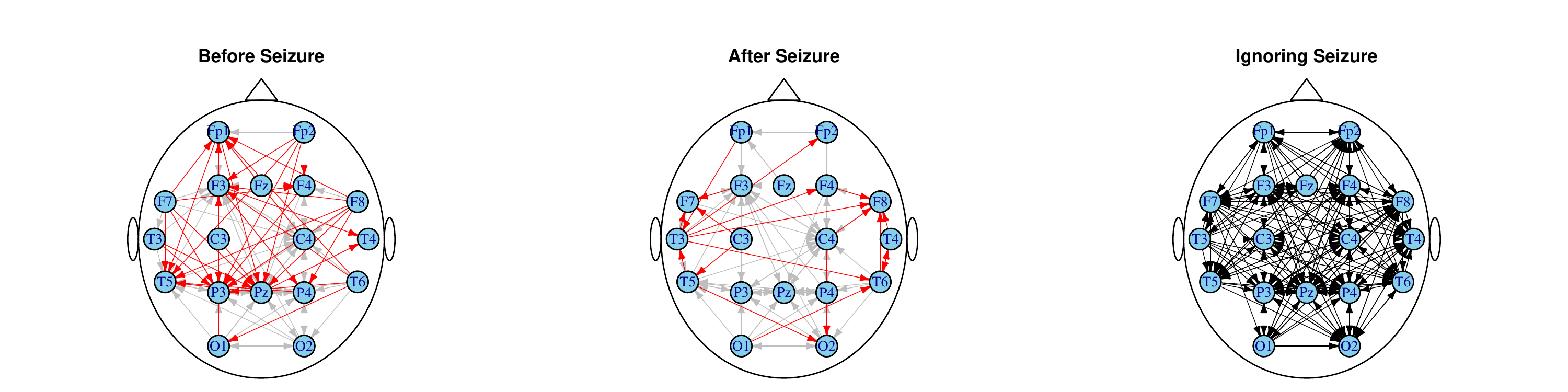}
\vspace{-0.5cm}
\caption{Network of Granger causal interactions among EEG channels based on data from Figure~\ref{fig_EEG_full}. The plots show the schematic locations of the EEG channels. The first two figures show interactions among EEG channels before and after the period of seizure. Gray edges in these two networks show common edges, while red edges show interactions identified either  before or after seizure. The rightmost network shows interactions from an estimate obtained by ignoring the structural break in the time series.}\label{fig_EEG_network}
\end{center}
\end{figure} 

\subsection{Related Methods for Nonstationary Time Series}\label{sec:othermethods}

Non-stationary VAR models have been primarily studied in univariate or low-dimensional settings. Existing approaches include models that fully parameterize the evolution of the transition matrices of time-varying VARs, or enforce a Bayesian prior on the structure of the time dependence \citep{Primiceri_2005}. 

A popular approach for analyzing nonstationary time series is to assume that the process is \emph{locally stationary}; local stationarity means that in each small time interval, the process is well-approximated by a stationary one. This notion has been studied in low dimensions by, e.g., \citet{Dahlhaus_2012}; \citet{SatoMorettinETAL_2007} proposed a wavelet-based method for estimating time-varying VAR  coefficients. 
Recently, \citet{DingQiuChen_2016} considered estimation of high-dimensional time-varying VARs by solving time-varying Yule-Walker equations based on kernelized estimates of auto-covariance matrices. 

Methods based on local stationarity are theoretically appealing and suitable in certain applications. However, local stationarity may not hold in many applications. For instance, in the above EEG example, assuming that the process can be locally approximated by a stationary one at the time of seizure may be unrealistic. A more natural assumption in such settings is that the process is \emph{piecewise stationary} --- i.e., the process is stationary in each of (potentially many) regions, e.g., before and after seizure. 

A number of methods have been proposed for analyzing univariate piecewise stationary time series. For instance, \citet{Davis_2006}, \citet{Chan_2014} and \citet{Bai_1997} propose different methods for identifying structural break points in univariate time series. Various methods have also been proposed for detecting changes in multivariate time series. One of the early contributions to this literature was the SLEX method of \citet{OmbaoVonSachsGuo_2005}, which uses time-varying wavelets to detect changes in the covariance structure of multivariate time series. The test procedure of \citet{aue2009break} addresses a similar problem in possibly nonlinear time series, whereas \citet{aue2017detecting} takes a functional data perspective in order to identify changes in the mean structure of multivariate time series. More recent approaches by \citet{cho2015multiple} and \citet{cho2016change} use variants of CUSUM statistic in order to detect structural break points in high-dimensional time series. 

Despite significant progress, existing multivariate approaches do not provide estimates of model parameters. For instance, to deal with the large number of time series, \citet{OmbaoVonSachsGuo_2005} apply a dimension reduction step, whereas \citet{aue2009break}, \citet{cho2015multiple} and \cite{cho2016change} use some variation of CUSUM statistic. As a result, these methods only estimate the structural break points. In other words, while existing methods can identify the structural breaks in the EEG example, they do not reveal mechanisms of interactions among brain regions, which is of key interest for understanding changes in brain function before and after seizure. As discussed in Section~\ref{sec:estimation}, consistent estimation of model parameters in high-dimensional piecewise stationary VAR models introduces new challenges. Addressing these challenges and providing interpretable estimates of parameters in high-dimensional piecewise stationary VAR models are two key contributions of the current paper.

\section{Piecewise Stationary VAR Models}\label{sec:model}
A piecewise stationary VAR model can be viewed as a collection of separate VAR models concatenated at multiple break points over the observed time period. More specifically, suppose there exist $ m_0 $ break points $ 0 = t_0 < t_1 < \cdots < t_{m_0} < t_{m_0 + 1} = T + 1 $ such that for $t_{j-1} \leq t < t_j, j = 1, \ldots, m_0 + 1$,
\begin{equation}\label{eq:model}
y_t = \Phi^{(1,j)} y_{t-1} + \cdots + \Phi^{(q,j)} y_{t-q} +  {\Sigma_j}^{1/2} \varepsilon_t.
\end{equation}
Here, $ y_t $ is the $p$-vector of observed time series at time $ t $; $ \Phi^{(l,j)} \in \mathbb{R}^{p \times p}$ is the (sparse) coefficient matrix corresponding to the $l$th lag of a VAR process of order $q$ during for the $j$th segment, where $ j = 1, \cdots, m_0+1 $; $ \varepsilon_t $ is a multivariate Gaussian white noise with independent components, and $ \Sigma_j$ is the covariance matrix of the noise for the $ j$th segment. To simplify the notations, we sometime denote the noise as $ \varepsilon_t $ without specifying the covariance $\Sigma_j$; however, throughout the paper, we allow for different covariance matrices in each segment. {
Note that the first few observations in each segment are, in fact, affected by the last few observations from the previous segment. Thus, the break points do not really divide the time series into stationary segments; hence, strictly speaking, model~\eqref{eq:model} is not piecewise stationary. This feature can, in general, lead to additional challenges when estimating the parameters, but is circumvented by the third step of our procedure discussed in Section~\ref{sec:estimation}.}

Our  goal is to detect the break points, $ t_j$, together with estimates of the coefficient parameters $ \Phi^{(l,j)}$ in the high-dimensional case, where $ p > T $. To this end, we generalize the change-point detection ideas of  \cite{Harchaoui_2010} and \cite{Chan_2014} to the multivariate, high-dimensional setting, and extend them to obtain consistent estimates of model parameters. More specifically, our estimation procedure utilizes the following linear regression representation of the VAR process
\begin{equation}\label{regression}
\begin{pmatrix} y_q^\prime \\  y_{q+1}^\prime \\ \vdots  \\ y_T^\prime \end{pmatrix} = \begin{pmatrix} y_{q-1}^\prime & \ldots & y_0^\prime &  & 0 & & \ldots &  & 0 & \\  y_{q}^\prime & \ldots & y_1^\prime & y_{q}^\prime & \ldots & y_1^\prime  & & & 0 &\\  & \vdots & & & & & \ddots & &   \\ y_{T-1}^\prime & \ldots & y_{T-q}^\prime & y_{T-1}^\prime & \ldots & y_{T-q}^\prime & \ldots &  y_{T-1}^\prime & \ldots &  y_{T-q}^\prime \end{pmatrix} \begin{pmatrix} \theta_1^\prime \\  \theta_{2}^\prime \\ \vdots  \\ \theta_n^\prime \end{pmatrix} + \begin{pmatrix} \varepsilon_q^\prime \\  \varepsilon_{q+1}^\prime \\ \vdots  \\  \varepsilon_T^\prime \end{pmatrix},
\end{equation}
where $ n = T - q + 1 $. Throughout the paper, the transpose of a matrix $ A $ is denoted by $ A^\prime $. Denoting $ \Phi^{(.,j)} =  \begin{pmatrix} \Phi^{(1,j)} & \ldots & \Phi^{(q,j)} \end{pmatrix} \in \mathbb{R}^{p \times pq} $, we set $  \theta_1 = \Phi^{(.,1)} $; for $ i = 2, 3, \ldots, n $, we let 
\begin{equation}
\theta_i =  \left\{
	\begin{array}{ll}
		\Phi^{(.,j+1)} - \Phi^{(.,j)}, & \mbox{when } i = t_j  \,\, \mbox{for some} \, j  \\
		0, & \mbox{otherwise}.
	\end{array}
\right.
\end{equation} 
Note that in this parameterization, $ \theta_i \neq 0 $ for $ i \geq 2 $ implies a change in the VAR coefficients. Therefore, for $j = 1, \ldots, m_0$, the structural break points $ t_j$ can be estimated as time points $ i \geq 2 $, where $ \theta_i \neq 0 $. 

Noting that \eqref{regression} is a linear regression of the form
$$ 
\mathcal{Y} = \mathcal{X}  \Theta + E, 
$$
we rewrite it in vector form as 
\begin{equation}\label{eqn:vectorform}
\textbf{Y} = \textbf{Z} \mathbf{\Theta} + \textbf{E}, 
\end{equation}
where $ \textbf{Y} = \mbox{vec}(\mathcal{Y}) $, $ \textbf{Z} = I_p \otimes \mathcal{X} $, and $ \textbf{E} = \mbox{vec}(E) $, {
with $ \otimes $ denoting the tensor product of two matrices}. Let $ \pi = n p^2 q $. Then, $ \textbf{Y} \in \mathbb{R}^{np \times 1} $, $ \textbf{Z} \in \mathbb{R}^{np \times \pi} $, $ \mathbf{\Theta} \in \mathbb{R}^{\pi \times 1} $, and $ \textbf{E} \in \mathbb{R}^{np \times 1} $.

\section{An Initial Estimator}\label{sec:initial}
The linear regression representation in \eqref{eqn:vectorform} suggests that the model parameters $\mathbf{\Theta}$ can be estimated via regularized least squares. The regularization is necessary to both handle the growing number of parameters corresponding to potential change points, as well as the number of time series $p$. A simple initial estimate of parameters $\mathbf{\Theta}$ can thus be obtained by using an $\ell_1$-penalized least squares regression of the form  
{
\begin{equation}\label{eq_estimation}
\widehat{\mathbf{\Theta}} = \mbox{argmin}_{\mathbf{\Theta}} \frac{1}{n} \| \textbf{Y} - \textbf{Z} \mathbf{\Theta} \|_2^2 + \lambda_{1,n}  \| \mathbf{\Theta} \|_1 + \lambda_{2,n} \sum_{k=1}^{n} \left \| \sum_{j=1}^{k} \theta_j \right \|_1.
\end{equation}
}
Problem~\eqref{eq_estimation} uses a fused lasso penalty \citep{tibshirani2005sparsity}, with two $\ell_1$ penalties controlling the number of break points and the sparsity of the VAR model. This problem is convex and can be solved efficiently. 
{
In fact, by Proposition~1 of \citet{friedman2007pathwise}, problem \eqref{eq_estimation} can be solved by first finding a solution, $\widetilde{\Theta}^{(0)}$, for $\lambda_{2,n} = 0 $ and then applying element-wise soft-thresholding to $\widetilde{\Theta}^{(0)}$ in order to obtain the final estimate for $\lambda_{2,n} \ne 0 $. Details of this algorithm are described in Appendix~C.} 

Despite its convenience and computational efficiency, estimates from  \eqref{eq_estimation} do not correctly identify the structural break points in the piecewise VAR process. In fact, our theoretical analysis in the next section shows that the number of estimated break points from \eqref{eq_estimation}, i.e., the number of nonzero $ \widehat{\theta}_i \neq 0 $, $ i \geq 2 $, over-estimates the true number of break points. This is because the design matrix $ \textbf{Z} $ may not satisfy the restricted eigenvalue condition \citep{BickelETAL_2009} that is needed for establishing consistent estimation of parameters. However, as we show in Section~\ref{sec:assumptions}, the model from \eqref{eq_estimation} does achieve prediction consistency. In Section~\ref{sec:consistency} we show that this initial estimator can be refined in order to obtain consistent estimates of structural break points.


\subsection{Asymptotic Properties}\label{sec:assumptions}

Denote the set of estimated change points from \eqref{eq_estimation} by 
$$
	\widehat{\mathcal{A}}_n = \left\lbrace i \geq 2 :  \widehat{\theta}_i \neq 0 \right\rbrace. 
$$ 
The total number of estimated change points is then the cardinality of the set $ \widehat{\mathcal{A}}_n $; denote $ \widehat{m} = | \widehat{\mathcal{A}}_n | $. Let $ \widehat{t}_j$ be the estimated break points for $j = 1, \ldots, \widehat{m}$. Then, the relationship between $ \widehat{\theta}_j $ and $ \widehat{\Phi}^{(.,j)}$ in each of the estimated segments can be seen as:
\begin{equation}\label{eqn:psi}
\widehat{\Phi}^{(.,1)} = \widehat{\theta}_1, \hspace{1cm} \mbox{and} \hspace{1cm} \widehat{\Phi}^{(.,j)} = \sum_{i=1}^{\widehat{t}_j} \widehat{\theta}_i, \hspace{1cm} j = 1, 2, \ldots, \widehat{m}. 
\end{equation}

In this section, we show that $ \widehat{m} \geq m_0 $. We also show that there exist $ m_0 $ points within $\widehat{\mathcal{A}}_n$ that are `close' to the true break points. To this end, we first establish the prediction consistency of the estimator \eqref{eq_estimation}. Using a more careful analysis, we then show that the penalized least squares in \eqref{eq_estimation} identifies a larger set of \emph{candidate} break points. These result justify the second step of our estimation procedure described in the Section~\ref{sec:consistency}, which searches over the break points in $\widehat{\mathcal{A}}_n$ to find an optimal set of break points. 

Before stating our assumptions, we first define a few notations. 
Denote the number of nonzero elements in the $ k$-th row of $ \Phi^{(.,j)} $ by $ d_{kj} $, $ k = 1, 2, \ldots, p $ and $ j = 1, 2, \ldots, m_0 $. Further, for each $ j = 1, 2, \ldots, m_0 + 1 $ and $ k = 1, \ldots, p $, denote by $ \mathcal{I}_{kj} $ the set of all column indexes of $ \Phi_k^{(.,j)} $ at which there is a nonzero term, where $ \Phi_k^{(.,j)} $ denotes the $ k$-th row of $ \Phi^{(.,j)} $. Let $ \mathcal{I} = \cup_{k,j} \mathcal{I}_{kj} $, and define $ d_n = \max_{1 \leq k \leq p, 1 \leq j \leq m_0+1} |\mathcal{I}_{kj}| $. Further, let $ d_{n}^\star = \sum_{j=1}^{m_0+1} \sum_{k=1}^{p} d_{kj} $ be the total sparsity of the model. 
{
Note that our theoretical analysis concerns the high-dimensional case with many break points, where $p$, $m_0$ and the network sparsity increase with the number of time points, $T$. More specifically, $p \equiv p (n)$ and $m_0 \equiv m_0 (n)$ and $d_{kj} \equiv d_{kj}(n)$, where $n = T-q+1$. To simplify the notation, we suppress the $n$-index.}

\begin{itemize}
\item[A1] For each $ j =1, 2, \ldots, m_0+1 $, the process $ y_t^{(j)} = \Phi^{(1,j)} y_{t-1}^{(j)} + \cdots + \Phi^{(q,j)} y_{t-q}^{(j)} + {\Sigma_j}^{1/2} \varepsilon_t $ is a stationary Gaussian time series. Denote the covariance matrices $ \Gamma_j (h) = \mbox{cov} \left( y_t^{(j)}, y_{t+h}^{(j)} \right) $ for $ t, h \in \mathbb{Z} $. Also, assume that for $ \kappa \in [-\pi, \pi] $, the spectral density matrices $ f_j (\kappa) = {(2\pi)}^{-1} \sum_{l \in \mathbb{Z}} \Gamma_j (l) e^{-\sqrt{-1} \kappa l} $ exist; further 
$$ \max_{1 \leq j \leq m_0+1} \mathcal{M}(f_j) = \max_{1 \leq j \leq m_0+1} \left( \mbox{ess sup}_{\kappa \in [-\pi, \pi]} \Lambda_{\mbox{max}} (f_j(\kappa)) \right) < + \infty, 
$$ 
and 
$$ \min_{1 \leq j \leq m_0+1} \textbf{m}(f_j) = \min_{1 \leq j \leq m_0+1}  \left( \mbox{ess sup}_{\kappa \in [-\pi, \pi]} \Lambda_{\mbox{min}} (f_j(\kappa)) \right) > 0, 
$$ 
where $ \Lambda_{\mbox{max}}(A) $ and $ \Lambda_{\mbox{min}}(A) $ are the largest and smallest eigenvalues of the symmetric or Hermitian matrix $ A $, respectively.
\item[A2] The matrices $ \Phi^{(.,j)} $ are sparse. More specifically, for all  $ k = 1, 2, \ldots, p $ and $ j = 1, 2, \ldots, m_0 $, $ d_{kj} \ll p $,   {
i.e., $ d_{kj}/p = o(1) $}. Moreover, there exists a positive constant $ M_\Phi > 0  $ such that 
$$  
\max_{1 \leq j \leq m_0+1} \left\| \Phi^{(.,j)}  \right\|_\infty  \leq M_\Phi. 
$$
\item[A3] There exists a positive constant $ v $ such that 
{
$$ 
\min_{1 \leq j \leq m_0}   \left\| \Phi^{(.,j+1)} -  \Phi^{(.,j)} \right\|_2   \geq v > 0. 
$$
}
Moreover, there exists a vanishing positive sequence $ \gamma_n  $ such that, as $ n \rightarrow \infty $,
\[
\hspace{-1.5cm}  \min_{1 \leq j \leq m_0+1} | t_j - t_{j-1} | / (n\gamma_n) \rightarrow + \infty, \, , {
\,\,\, \mbox{and} \,\,\, d_{n}^\star  \sqrt{\frac{\log p}{n \gamma_n}} \rightarrow 0.} 
\] 
\end{itemize}

Assumption~A1 allows us to obtain appropriate probability bounds in high dimensions. This assumption does not restrict the applicability of the method since it is valid for large families of VAR models \citep{Basu_2015}. The second part of A1 will also be needed in the proof of consistency of VAR parameters once the break points are detected. Assumption~A2 is related to the total sparsity of the model. The sequence $ \gamma_n $ is directly related to the detection rate of the break points $t_j; j = 1, \ldots, m_0$. Assumption~A3 connects this detection rate to the tuning parameter chosen in the estimation procedure. Also, this assumption puts a minimum distance-type requirement on the coefficients in different segments. This can be regarded as the extension of Assumption~H2 in \citet{Chan_2014} for univariate time series to the high-dimensional case. 

As pointed out earlier, and discussed in \cite{Chan_2014} and \cite{Harchaoui_2010}, the design matrix $\textbf{Z}$ in \eqref{eq_estimation} may not satisfy the restricted eigenvalue condition needed for parameter estimation consistency \citep{BickelETAL_2009}.  Thus, as a first step towards establishing the consistency of the proposed procedure, we next establish the prediction consistency of the estimator from \eqref{eq_estimation}. 

\begin{theorem}\label{thm_pred_error}
Suppose A1 and A2 hold. Choose $ \lambda_{1,n} =  2 C \sqrt{\frac{\log(n) + 2\log(p) + \log(q)}{n}} $ for some $ C  > 0 $ {
and $ \lambda_{2,n} = o\left( (n d_n^\star)^{-1} \right) $}, and assume $ m_0 \leq m_n $ with $ m_n = o \left( \lambda_n^{-1} \right) $. Then, with high probability approaching 1 as $ n \to +\infty $, 
\begin{equation}\label{eq:predconsistency}
\frac{1}{n} \left\| \textbf{Z} \left( \widehat{ \mathbf{\Theta} } - \mathbf{\Theta}  \right)  \right\|_2^2 \leq 2 M_\Phi \lambda_n m_n   \max_{1 \leq j \leq m_0+1} \left\lbrace \sum_{k=1}^{p} \left( d_{kj} + d_{k(j-1)}  \right) \right\rbrace + o(1).
\end{equation}
\end{theorem}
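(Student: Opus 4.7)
The plan is to follow the standard ``slow-rate'' oracle-inequality template for $\ell_1$-penalized least squares, adapted to the fused-lasso objective in \eqref{eq_estimation}. Writing $T_n(\mathbf{\Theta}) = \sum_{k=1}^n \| \sum_{j=1}^k \theta_j \|_1$ for the fused term, the optimality of $\widehat{\mathbf{\Theta}}$ combined with $\mathbf{Y} = \mathbf{Z}\mathbf{\Theta} + \mathbf{E}$ and cancellation of $\|\mathbf{E}\|_2^2$ produces the basic inequality
\begin{equation*}
\tfrac{1}{n}\|\mathbf{Z}(\widehat{\mathbf{\Theta}} - \mathbf{\Theta})\|_2^2 \leq \tfrac{2}{n}\mathbf{E}^\prime \mathbf{Z}(\widehat{\mathbf{\Theta}} - \mathbf{\Theta}) + \lambda_{1,n}\bigl(\|\mathbf{\Theta}\|_1 - \|\widehat{\mathbf{\Theta}}\|_1\bigr) + \lambda_{2,n}\bigl(T_n(\mathbf{\Theta}) - T_n(\widehat{\mathbf{\Theta}})\bigr),
\end{equation*}
and I immediately drop the nonnegative $-\lambda_{2,n}T_n(\widehat{\mathbf{\Theta}})$.

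The technical core is to show that, on an event of probability tending to one, $\tfrac{2}{n}\|\mathbf{Z}^\prime \mathbf{E}\|_\infty \leq \lambda_{1,n}$, so that H\"older converts the stochastic term into $\lambda_{1,n}\|\widehat{\mathbf{\Theta}} - \mathbf{\Theta}\|_1$. Each coordinate of $n^{-1}\mathbf{Z}^\prime \mathbf{E}$ is a sum of the form $n^{-1}\sum_t y_{t-l,k}\varepsilon_{t,m}$, a product of jointly Gaussian variables drawn from the piecewise-stationary VAR, so a direct i.i.d.\ sub-Gaussian bound is unavailable. I would argue segment by segment and apply the Hanson--Wright-type exponential inequality for stable Gaussian VAR processes in \citet{Basu_2015}, whose hypothesis is exactly the spectral-density bound $\mathcal{M}(f_j) < \infty$ from A1, and then take a union bound over the $np^2q$ coordinates of $\mathbf{Z}^\prime \mathbf{E}$. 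The prescription $\lambda_{1,n} = 2C\sqrt{(\log n + 2\log p + \log q)/n}$ is precisely calibrated so that this union bound vanishes. This concentration step is the chief obstacle: accommodating non-stationarity across the $m_0$ segments and the lag-$q$ temporal dependence while keeping the bound uniform in $j$ requires careful use of A1 and rules out off-the-shelf tail inequalities.

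On the above event the basic inequality reduces to
\[ \tfrac{1}{n}\|\mathbf{Z}(\widehat{\mathbf{\Theta}} - \mathbf{\Theta})\|_2^2 \leq \lambda_{1,n}\|\widehat{\mathbf{\Theta}} - \mathbf{\Theta}\|_1 + \lambda_{1,n}\bigl(\|\mathbf{\Theta}\|_1 - \|\widehat{\mathbf{\Theta}}\|_1\bigr) + \lambda_{2,n}T_n(\mathbf{\Theta}). \]
Because no restricted-eigenvalue condition on $\mathbf{Z}$ is assumed, I avoid any cone decomposition and apply only the triangle inequality $\|\widehat{\mathbf{\Theta}} - \mathbf{\Theta}\|_1 \leq \|\widehat{\mathbf{\Theta}}\|_1 + \|\mathbf{\Theta}\|_1$; the $\|\widehat{\mathbf{\Theta}}\|_1$ contributions cancel, leaving the slow-rate bound $\tfrac{1}{n}\|\mathbf{Z}(\widehat{\mathbf{\Theta}} - \mathbf{\Theta})\|_2^2 \leq 2\lambda_{1,n}\|\mathbf{\Theta}\|_1 + \lambda_{2,n}T_n(\mathbf{\Theta})$.

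Finally I evaluate the two deterministic pieces. Expressing $\mathbf{\Theta}$ through its nonzero blocks $\theta_1 = \Phi^{(.,1)}$ and $\theta_{t_j} = \Phi^{(.,j+1)} - \Phi^{(.,j)}$ (equivalently, a telescoping sum starting from $\Phi^{(.,0)} := 0$), the triangle inequality together with the entry-wise bound $\|\Phi^{(.,j)}\|_\infty \leq M_\Phi$ and the row-sparsity counts in A2 yields
\[ \|\mathbf{\Theta}\|_1 \leq M_\Phi (m_0+1) \max_{1 \leq j \leq m_0+1} \sum_{k=1}^p \bigl(d_{kj} + d_{k(j-1)}\bigr) \leq M_\Phi\, m_n \max_{1 \leq j \leq m_0+1} \sum_{k=1}^p \bigl(d_{kj} + d_{k(j-1)}\bigr). \]
For the fused term, $T_n(\mathbf{\Theta}) = \sum_{j=1}^{m_0+1}(t_j - t_{j-1})\|\Phi^{(.,j)}\|_1 \leq n M_\Phi d_n^\star$, so the assumption $\lambda_{2,n} = o\bigl((n d_n^\star)^{-1}\bigr)$ forces $\lambda_{2,n} T_n(\mathbf{\Theta}) = o(1)$. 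Substituting these bounds and identifying $\lambda_n$ with $\lambda_{1,n}$ delivers \eqref{eq:predconsistency}.
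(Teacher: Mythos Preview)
Your proposal is correct and follows essentially the same route as the paper: the basic inequality from optimality, the deviation bound $\tfrac{2}{n}\|\mathbf{Z}'\mathbf{E}\|_\infty \le \lambda_{1,n}$ (which the paper isolates as Lemma~\ref{lemma_first}, proved exactly via the Basu--Michailidis concentration argument you sketch), H\"older plus the slow-rate triangle inequality to get $2\lambda_{1,n}\|\mathbf{\Theta}\|_1$, and the same bookkeeping on $\|\mathbf{\Theta}\|_1$ and the fused term $T_n(\mathbf{\Theta})$. The only cosmetic difference is that the paper splits $\sum_i\|\theta_i-\widehat\theta_i\|_1$ over the active set $\mathcal{A}=\{t_1,\ldots,t_{m_0}\}$ and its complement before cancelling, whereas you apply the triangle inequality to the whole vector; both yield the identical bound $2\lambda_{1,n}\sum_{i\in\mathcal{A}\cup\{1\}}\|\theta_i\|_1$.
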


Theorem~\ref{thm_pred_error} is proved in Appendix~B. Note that this theorem imposes an upper bound on the model sparsity, as the right hand side of \eqref{eq:predconsistency} must go to zero as $ n \rightarrow \infty $. In Section~\ref{sec:consistency}, we specify the limit on the sparsity needed for consistent identification of structural break points.

We now turn to the original problem of estimating the number of break points and  locating them. The next result shows that the number of selected change points, $ \widehat{m} $, based on \eqref{eq_estimation} will be at least as large as the true number, $ m_0 $. Moreover, there exists at least one estimated change point in $ n \gamma_n$-radius neighborhood of each each true change point. 
Before stating the theorem, we need some additional notation. Let $ \mathcal{A}_n = \lbrace t_1, t_2, \ldots, t_{m_0} \rbrace $ be the set of true change points. Following  \cite{Boysen_2009} and \cite{Chan_2014}, define the Hausdorff distance between {
two countable sets in real line} as  
$$ 
d_H (A, B) = \max_{b \in B} \min_{a \in A} |b - a|. 
$$
{
Note that the above definition is not symmetric and therefore not a real distance. However, this is the version of function $ d_H (A, B) $ used in our  next theorem.} 

\begin{theorem}\label{thm_Hausdorff}
Suppose A1--A3 hold. {
Choose $ \lambda_{1,n} =  2 C_1 \sqrt{\frac{\log(n) + 2\log(p) + \log(q)}{n}} $, and $ \lambda_{2,n} = \frac{C_2}{n} \sqrt{\frac{\log p}{n \gamma_n}} $ for some large constants $ C_1 , C_2 > 0 $.} Then, as $ n \rightarrow + \infty $,
$$ 
\mathbb{P} \left(  | \widehat{\mathcal{A}}_n | \geq m_0  \right) \rightarrow 1,
$$
and
$$  
\mathbb{P} \left(  d_H \left( \widehat{\mathcal{A}}_n, \mathcal{A}_n  \right)  \leq n \gamma_n \right) \rightarrow 1. 
$$
\end{theorem}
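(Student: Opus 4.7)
My plan is to argue by contradiction using the prediction-consistency bound of Theorem~\ref{thm_pred_error}. The heuristic is: if no estimated change point sits within $n\gamma_n$ of some true break point $t_j$, then the estimator fails to register the jump $\Phi^{(.,j+1)}-\Phi^{(.,j)}$ over a long window, creating a prediction error bounded below by a multiple of $v^2\gamma_n$; this contradicts the upper bound from Theorem~\ref{thm_pred_error} which, under the prescribed $\lambda_{1,n}$ and $\lambda_{2,n}$, is of strictly smaller order than $\gamma_n$. Once the Hausdorff bound is in hand, the cardinality statement follows from the minimum-spacing part of A3.

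First I would set up the geometry of an uncovered true break point. Fix $j \in \{1,\ldots,m_0\}$ and suppose for contradiction that $\widehat{\mathcal{A}}_n \cap [t_j-n\gamma_n,\, t_j+n\gamma_n) = \emptyset$. By \eqref{eqn:psi} the map $i \mapsto \widehat{\Phi}^{(.,\cdot)}$ jumps only at elements of $\widehat{\mathcal{A}}_n$, so a single matrix $\widehat{\Phi}^\star$ is used by the estimator throughout this window. Since $\|\Phi^{(.,j+1)} - \Phi^{(.,j)}\|_2 \geq v$ by A3, the triangle inequality yields $\max\bigl(\|\widehat{\Phi}^\star - \Phi^{(.,j)}\|_2,\,\|\widehat{\Phi}^\star - \Phi^{(.,j+1)}\|_2\bigr) \geq v/2$. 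Assume without loss of generality the maximum is attained at $\Phi^{(.,j)}$; by A3 the sub-window $W_j^- = [t_j - n\gamma_n, t_j)$ sits entirely in the stationary segment $[t_{j-1},t_j)$ for $n$ large.

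Next I would lower-bound the prediction error and derive the contradiction. On $W_j^-$ the process is stationary Gaussian with spectral density $f_j$, so Assumption~A1 gives a strictly positive lower bound $c_{\min} := 2\pi\,\textbf{m}(f_j) > 0$ on the minimum eigenvalue of the population Gram matrix. A standard concentration inequality for the sample Gram matrix of a stable Gaussian VAR (as, e.g., in \citet{Basu_2015}) yields, with probability tending to one simultaneously over $j = 1,\ldots,m_0$,
\begin{equation*}
\frac{1}{n} \bigl\|\textbf{Z}(\widehat{\Theta}-\Theta)\bigr\|_2^2 \;\geq\; \frac{|W_j^-|}{n} \cdot \frac{c_{\min}}{2} \bigl\|\widehat{\Phi}^\star - \Phi^{(.,j)}\bigr\|_F^2 \;\geq\; \frac{c_{\min}\, v^2}{8}\, \gamma_n.
\end{equation*}
Under the tuning choices in the theorem, A3 guarantees $\lambda_{2,n} = o((n d_n^\star)^{-1})$ so Theorem~\ref{thm_pred_error} applies, and its bound \eqref{eq:predconsistency} is of order $m_n\, d_n^\star \sqrt{\log p / n}$, which A3 forces to be $o(\gamma_n)$ for a suitably slow $m_n$. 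This contradicts the displayed lower bound, so with probability tending to one every $t_j$ is within $n\gamma_n$ of some element of $\widehat{\mathcal{A}}_n$, proving $d_H(\widehat{\mathcal{A}}_n,\mathcal{A}_n) \leq n\gamma_n$.

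Finally, the cardinality claim follows because A3 implies $|t_j - t_{j-1}| \gg n\gamma_n$, so the $n\gamma_n$-neighbourhoods of $t_1,\ldots,t_{m_0}$ are pairwise disjoint; each contains at least one estimated change point by the previous step, hence $|\widehat{\mathcal{A}}_n| \geq m_0$. The main obstacle is the lower bound in the contradiction step: one must establish a restricted-eigenvalue-type inequality for the sample Gram matrix over sub-windows of length only $n\gamma_n$ (rather than $n$), uniformly over the $m_0$ candidate neighbourhoods, and verify that the upper bound from Theorem~\ref{thm_pred_error} with the chosen $\lambda_{2,n}$ does sit below this $\gamma_n$ threshold. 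The former relies on the Gaussianity in A1 together with the well-separatedness of the $t_j$ (so that a union bound over only $m_0$ events suffices), while the latter hinges on the precise scaling in A3 connecting $d_n^\star$, $\log p$, $m_n$ and $n\gamma_n$.
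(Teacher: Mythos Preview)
Your argument has a genuine gap in the lower-bound step. You write
\[
\frac{1}{n}\bigl\|\textbf{Z}(\widehat{\Theta}-\Theta)\bigr\|_2^2 \;\geq\; \frac{|W_j^-|}{n}\cdot\frac{c_{\min}}{2}\,\bigl\|\widehat{\Phi}^\star - \Phi^{(.,j)}\bigr\|_F^2,
\]
appealing to a ``standard concentration inequality'' for the sample Gram matrix. But in the high-dimensional regime of this paper ($p$ may grow exponentially in $n$), the $pq\times pq$ sample Gram matrix over a window of length only $n\gamma_n$ is singular whenever $pq>n\gamma_n$, so its minimum eigenvalue is zero. The results in \citet{Basu_2015} give only a \emph{restricted} eigenvalue inequality of the form $v'Sv\ge\alpha\|v\|_2^2-\tau\|v\|_1^2$, useful only when $\|v\|_1/\|v\|_2$ is controlled via a cone condition or a sparsity bound on $v$. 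You have no a~priori control on the sparsity or cone membership of $\widehat{\Phi}^\star-\Phi^{(.,j)}$: the first-stage estimator is precisely the one for which the paper says the RE condition cannot be verified (see the discussion preceding Theorem~\ref{thm_pred_error}). A secondary issue is that your claim ``A3 forces [the Theorem~\ref{thm_pred_error} bound] to be $o(\gamma_n)$'' does not follow: A3 gives $d_n^\star\sqrt{\log p/(n\gamma_n)}\to 0$, which makes the bound $o(m_n\sqrt{\gamma_n})$, not $o(\gamma_n)$.

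This is exactly where the $\lambda_{2,n}$ penalty---which your argument treats as essentially inert, used only to meet the hypothesis of Theorem~\ref{thm_pred_error}---does the real work. The paper's proof does not go through the global prediction bound at all. Instead it builds a competitor $\Psi$ to $\widehat{\Theta}$ that replaces the estimated coefficient by the true $\Phi^{(.,j_0)}$ on the sub-window $[t_{j_0-1}\vee\widehat{t}_j,\,t_{j_0}]$ and compares the penalized objective at $\widehat{\Theta}$ and $\Psi$. After rearrangement, the $\lambda_{2,n}$ term produces the cone inequality $\|\Phi^{(.,j_0)}-\widehat{\Phi}_{j+1}\|_{1,\mathcal{I}^c}\le 3\|\Phi^{(.,j_0)}-\widehat{\Phi}_{j+1}\|_{1,\mathcal{I}}$, and only then can the RE bound from Lemma~\ref{lemma_bound} convert the quadratic form into $\|\cdot\|_2^2$ and yield $\|\Phi^{(.,j_0)}-\widehat{\Phi}_{j+1}\|_2\to 0$. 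Repeating this on the interval to the right of $t_{j_0}$ gives $\|\Phi^{(.,j_0+1)}-\widehat{\Phi}_{j+1}\|_2\to 0$ as well, contradicting A3. In short, the missing ingredient in your plan is a mechanism to place $\widehat{\Phi}^\star-\Phi^{(.,j)}$ in a cone; the paper supplies it through the optimality comparison and the $\lambda_{2,n}$ penalty, not through Theorem~\ref{thm_pred_error}.
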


{
For this theorem, $ \lambda_{1,n} $ could be as large as $ \lambda_{1,n} = O \left( \sqrt{\frac{\gamma_n \log p }{n}} \, \right) $. However, for compatibility, we use the same rate as in Theorem~\ref{thm_pred_error}.} The rate of consistency for break point detection in Theorem~\ref{thm_Hausdorff} is $ n \gamma_n $, which can be chosen as small as possible assuming that Assumptions~A2 and A3 hold. 
$ \gamma_n $ also depends on the minimum distance between consecutive true break points, as well as the number of time series, $ p $. When $ m_0 $ is finite, one can choose $ \gamma_n = {(\log n \log p)}/{n} $ or $ \gamma_n = {(\log\log n \log p)}/{n} $. This means that the convergence rate for estimating the relative locations of the break points, i.e., $ t_j/T $ using $ \widehat{t}_j/T $, could be as low as $ {(\log\log n \log p)}/{n} $. In Section~\ref{sec:comparison}, we compare these rates with those obtained in related procedures.

{
\begin{remark}\label{remark:0}
When $ m_0 $ is known, similar arguments as in the proof of Theorem~\ref{thm_Hausdorff} lead to the same consistency rate, i.e., $ n \gamma_n $, for locating the break points. Later, in Theorem~\ref{thm_selection}, we consider the case of unknown $ m_0 $ and show that the consistency rate for this general case is of order $ B m_0 n \gamma_n {d_n^\star}^2 $ for a large enough constant $ B > 0 $. Compared to the known $ m_0 $ case, the additional term $ B m_0 {d_n^\star}^2 $ quantifies the additional complexity of estimating the number of break points.
\end{remark}
}

The second part of Theorem~\ref{thm_Hausdorff} shows that even though we  select more points than needed, there exists a subset of the estimated points, with cardinality $ m_0 $ that and estimates the true break points at the same rate as if $ m_0 $ was known. This result motivates the second stage of our estimation procedure, discussed in the next section, which removes the redundant break points. 

\section{Consistent Estimation of Structural Breaks}\label{sec:consistency}

Theorem~\ref{thm_Hausdorff} shows that the penalized estimation procedure \eqref{eq_estimation} over-estimates the number of break points. A second stage screening is thus needed to remove the redundant estimated change points and consistently estimate the true change points. To this end, we propose a screening procedure, based on a modification of the procedure in  \citet{Chan_2014}. The main idea is to develop an \emph{information criterion} based on a new penalized least squares estimation procedure, in order to screen the candidate break points found in the first estimation stage. Formally, for a fixed $ m $ and estimated change points $ s_1, \ldots, s_m $, we form the following linear regression:
\begin{equation}\label{eq_regression_second}
\begin{pmatrix} y_q^\prime \\  y_{q+1}^\prime \\ \vdots  \\ y_T^\prime \end{pmatrix} = \begin{pmatrix} Y_{q-1}^\prime  \\ \vdots & 0 & \ldots & 0 \\ Y_{s_1-1}^\prime \\ & Y_{s_1}^\prime \\ 0 & \vdots & \ldots & 0 \\ & Y_{s_2-1}^\prime \\ &&& \\ \vdots & \vdots & \ddots & \vdots \\ &&& \\  &&& Y_{s_m}^\prime \\ 0 & 0 && \vdots \\ &&& Y_{T-1}^\prime \end{pmatrix} \begin{pmatrix} \theta_{(q,s_1)} ^\prime \\  \theta_{(s_{1},s_2)} ^\prime \\ \vdots  \\ \theta_{(s_{m},T)} ^\prime \end{pmatrix} + \begin{pmatrix} \xi_q^\prime \\  \xi_{q+1}^\prime \\ \vdots  \\  \xi_T^\prime \end{pmatrix}.
\end{equation}
$$ 
\mathcal{Y} = \mathcal{X}_{s_1, \ldots, s_m} \theta_{s_1, \ldots, s_m} + \Xi, 
$$
where $ \mathcal{X}_{s_1, \ldots, s_m} \in \mathbb{R}^{n \times \pi_m} $,  $ \theta_{s_1, \ldots, s_m} = \left( \theta_{(q,s_1)}^\prime, \theta_{(s_1,s_2)}^\prime, \ldots, \theta_{(s_m,T)}^\prime \right)^\prime \in \mathbb{R}^{\pi_m \times p} $, with $ \pi_m = (m+1)pq $. We estimate $ \theta_{s_1, \ldots, s_m} $ as the solution of a penalized regression, 
\begin{equation}\label{eq_estimation_second}
\widehat{\theta}_{s_1, \ldots, s_m} = \mbox{argmin}_{\theta_{s_1, \ldots, s_m}} \sum_{i=1}^{m+1} \left(  \frac{1}{s_{i}-s_{i-1}} \sum_{t=s_{i-1}}^{s_i-1}  \| y_t - \theta_{(s_{i-1},s_i)} Y_{t-1} \|_2^2 + \eta_{(s_{i-1},s_{i})} \|\theta_{(s_{i-1},s_i)} \|_1 \right),
\end{equation}
with tuning parameters $ \eta_n = \left( \eta_{(s_{0},s_{1})}, \ldots, \eta_{(s_{m},s_{m+1})} \right) $, where $ s_0 = q $ and $ s_{m+1} = T $. 

Now, let
\begin{equation}\label{ddd}
L_n(s_1, s_2, \ldots, s_m; \eta_n) =  \| \mathcal{Y} - \mathcal{X}_{s_1, \ldots, s_m} \widehat{\theta}_{s_1, \ldots, s_m} \|_F^2 + \sum_{i=1}^{m+1} \eta_{(s_{i-1},s_{i})} \| \widehat{\theta}_{(s_{i-1},s_i)} \|_1,
\end{equation}
{
where $ \| . \|_F $ is the Frobenius norm of the matrix.}
Then, for a suitably chosen sequence $ \omega_n $, specified in Assumption~A4 below, consider the following information criterion:

\begin{equation}\label{eq:IC}
\mathrm{IC} (s_1, \ldots, s_m; \eta_n) =  L_n(s_1, s_2, \ldots, s_m; \eta_n)  + m \omega_n. 
\end{equation}

The second stage of our procedure selects a subset of initial $\widehat{m}$ break points from \eqref{eq_estimation} by solving 
\begin{equation}\label{eq_selection}
( \widetilde{m}, \widetilde{t}_j; j = 1,\ldots,\widetilde{m} ) =  \mbox{argmin}_{0 \leq m \leq \widehat{m}, \, \textbf{s} = (s_1, \ldots, s_m) \in \widehat{\mathcal{A}}_n } \mathrm{IC}(\textbf{s}; \eta_n).
\end{equation}

To establish the consistency of the screening procedure \eqref{eq_selection}, we need two additional assumptions. As before, $ d_n^\star = \sum_{j=1}^{m_0+1} \sum_{k=1}^{p} d_{kj} $ is the total sparsity of the model. 
\begin{itemize}
\item[A4] Let $ \Delta_n = \min_{1 \leq j \leq m_0} |t_{j+1} - t_j| $. Then, $ m_0 n \gamma_n {d_n^\star}^2 /\omega_n \rightarrow 0 $, and $ \Delta_n / (m_0 \omega_n) \rightarrow + \infty $.
\item[A5] There exist a large positive constant $ c > 0 $ such that (a) if $ |s_{i} - s_{i-1}| \leq n \gamma_n $, then $ \eta_{(s_{i-1},s_{i})} = c \sqrt{n \gamma_n \log p} $; (b) if there exist $ t_j$ and $t_{j+1} $ such that $ | s_{i-1} - t_j | \leq n \gamma_n $ and $ | s_i - t_{j+1} | \leq n \gamma_n $, then, $ \eta_{(s_{i-1},s_{i})} = 2 \left( c \sqrt{\frac{\log p}{s_i - s_{i-1}}}   + M_\Phi d_n^\star \frac{n \gamma_n}{s_i - s_{i-1}} \right)  $; (c) otherwise, $ \eta_{(s_{i-1},s_{i})} = 2 \left( c \sqrt{\frac{\log p}{s_i - s_{i-1}}}   + M_\Phi d_n^\star \right)  $.
\end{itemize}

We can now state our main result on consistency of change point detection. 
\begin{theorem}\label{thm_selection}
Suppose A1--A5 hold. Then, as $ n \rightarrow + \infty $, the minimizer  $ ( \widetilde{m}, \widetilde{t}_j, j=1,\ldots,\widetilde{m} ) $ of \eqref{eq_selection} satisfies 
$$ 
\mathbb{P} \left( \widetilde{m} = m_0  \right) \rightarrow 1.
$$
Moreover, there exists a positive constant $ B > 0  $ such that 
$$
\mathbb{P} \left( \max_{1 \leq j \leq m_0} | \widetilde{t}_j - t_j | \leq B m_0 n \gamma_n {d_n^\star}^2 \right) \rightarrow 1.  
$$
\end{theorem}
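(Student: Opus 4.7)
The plan is to prove the two claims together by showing that the information criterion $\mathrm{IC}(\mathbf{s};\eta_n)$ evaluated at the correct configuration is strictly smaller than at any competing configuration, with high probability. As a reference point, let $\mathbf{s}^\star \subset \widehat{\mathcal{A}}_n$ be a canonical subset of $m_0$ candidates that lie within $n\gamma_n$ of $t_1,\ldots,t_{m_0}$; such a subset exists with probability tending to one by Theorem~\ref{thm_Hausdorff}. The goal is then to compare $\mathrm{IC}(\mathbf{s};\eta_n)$ to $\mathrm{IC}(\mathbf{s}^\star;\eta_n)$ across three regimes: (I) $|\mathbf{s}| < m_0$, i.e.\ at least one true break point lies strictly interior to some estimated segment of $\mathbf{s}$; (II) $|\mathbf{s}| \geq m_0$ but $\max_j \min_i |s_i - t_j| > B m_0 n\gamma_n {d_n^\star}^2$; (III) $|\mathbf{s}| > m_0$ but the close-to-truth points are all present.

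For regime (I), the key is a lower bound on $L_n(\mathbf{s};\eta_n) - L_n(\mathbf{s}^\star;\eta_n)$. On a segment $(s_{i-1},s_i)$ that straddles a true break point $t_j$ with $|s_{i-1}-t_{j-1}|, |s_i-t_{j+1}|$ of order at most $\Delta_n$, the best constant-VAR fit must miss the regime change, incurring extra squared-error that scales like $\min\{t_j-s_{i-1}, s_i-t_j\}\cdot \|\Phi^{(\cdot,j+1)}-\Phi^{(\cdot,j)}\|_2^2 \gtrsim \Delta_n v^2$ (up to spectral-density constants from A1), after accounting for the penalty contribution via A5. By Assumption A4, this excess dominates $m_0\omega_n$, so the IC strictly prefers $\mathbf{s}^\star$. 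This step uses A5(c) to rule out the $\ell_1$ penalty absorbing the misfit, and the stationary covariance bounds from A1 to convert squared coefficient differences into squared prediction error.

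For regimes (II) and (III), the competitor can capture all true breaks, but either slightly misaligns them or introduces spurious ones. Here the plan is to upper bound $L_n(\mathbf{s}^\star;\eta_n) - L_n(\mathbf{s};\eta_n)$ by $O(m_0 n\gamma_n {d_n^\star}^2)$ using the segment-wise lasso rates: on a well-aligned segment with misalignment $\leq n\gamma_n$, Assumption A5(b) and standard restricted-eigenvalue arguments (inherited from A1 via \citet{Basu_2015}-type concentration) yield in-segment estimation error $O(d_n^\star \log p/(s_i-s_{i-1})) + O(M_\Phi^2 d_n^\star\cdot n\gamma_n/(s_i-s_{i-1}))$, and summing over $m_0 + 1$ segments gives the claimed order. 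The penalty gap $m\omega_n - m_0\omega_n \geq \omega_n$ is larger than this excess by A4 (which forces $m_0 n\gamma_n {d_n^\star}^2 = o(\omega_n)$), so $\mathbf{s}^\star$ is preferred. A symmetric argument handles (II): a grossly misaligned configuration forces at least one segment to act like case~(I), giving an $\Omega(n\gamma_n {d_n^\star}^2)$ excess that cannot be absorbed.

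The hard part will be regime (I), specifically controlling the $\ell_1$-penalized in-segment fit when a segment contains a true break. One must argue that the minimizer $\widehat\theta_{(s_{i-1},s_i)}$ cannot "average" the two regimes away at cost smaller than $\Delta_n v^2$: this requires combining the lower bound on $\|\Phi^{(\cdot,j+1)}-\Phi^{(\cdot,j)}\|_2$ from A3 with a deviation inequality for the sample cross-covariances $\frac{1}{s_i-s_{i-1}}\sum_t y_t y_{t-1}'$ that mix two stationary regimes, so that the bias term in the quadratic loss is bounded below by a constant multiple of its population counterpart. Once these three case-by-case bounds are in place, the first conclusion $\mathbb{P}(\widetilde m = m_0)\to 1$ and the location bound follow by taking unions over $O\bigl(\binom{\widehat m}{m}\bigr)$ subsets and using Assumption A4 to close the deterministic inequalities.
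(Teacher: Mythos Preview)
Your proposal is correct and follows essentially the same route as the paper. The paper likewise anchors the argument at a reference configuration $\mathbf{s}^\star=(\widehat t_1,\ldots,\widehat t_{m_0})\subset\widehat{\mathcal A}_n$ supplied by Theorem~\ref{thm_Hausdorff}, proves the upper bound $L_n(\mathbf{s}^\star;\eta_n)\le \sum_t\|\varepsilon_t\|_2^2+K m_0 n\gamma_n{d_n^\star}^2$, and then invokes a uniform lower bound on $L_n$ for any configuration that misses a true break (this is Lemma~\ref{lemma_selection} in the paper, which is exactly the ``hard part'' you describe). The only organizational difference is that the paper first separates $\widetilde m<m_0$ and $\widetilde m>m_0$ to conclude $\mathbb P(\widetilde m=m_0)\to 1$, and then handles location consistency on the event $\{\widetilde m=m_0\}$ by choosing $B=2K/c$ so that the lower-bound constant beats the upper-bound constant; your regimes (I)--(III) repackage the same three comparisons. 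Two small points to tighten: in your regime~(II) the excess from a misalignment of size $B m_0 n\gamma_n{d_n^\star}^2$ is of order $c\,B m_0 n\gamma_n{d_n^\star}^2$, not merely $\Omega(n\gamma_n{d_n^\star}^2)$, and it is this explicit dependence on $B$ that lets you close the argument against the $K m_0 n\gamma_n{d_n^\star}^2$ upper bound; and the union over $\binom{\widehat m}{m}$ subsets is not needed at the theorem level, since Lemma~\ref{lemma_selection} already delivers a bound uniform over all $(s_1,\ldots,s_m)\subset\{1,\ldots,T\}$.
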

The proof of Theorem~\ref{thm_selection}, given in Appendix~B, relies heavily on the result presented in Lemma~\ref{lemma_selection}, which is stated and derived in Appendix~A.  
{
While the statement of Lemma~\ref{lemma_selection} is similar to Lemma~6.4 in \citet{Chan_2014}, its proof has important differences to address the additional challenges that arise in our setting. The first difference is due to our definition of information criterion (IC). The IC in \eqref{eq:IC} includes three parts: the sum of squared errors, the $ L_1 $ norm of the estimated VAR parameters in \eqref{eq_estimation_second}, and the penalty on the number of break points, $ \omega_n $. In contrast, the IC in Equation~2.9 of \citet{Chan_2014} does not include an $ L_1 $ penalty. The reason for including this additional penalty is the high-dimensional nature of our problem. Specifically, when the distance between two consecutive estimated break points in the first step is less than $ n \gamma_n $, the restricted eigenvalue condition cannot be verified for the segment defined by these two break points. Therefore, there is no control on the behavior of the estimated VAR parameter within this segment. This makes is harder to evaluate the effect of wrong estimation in the sum of squared term. The addition of the $L_1$ penalty helps circumvent this problem. The second key difference is that the proof of Lemma~6.4 in \citet{Chan_2014} relies on the fact that for each candidate segment, the least squared estimator of the AR parameter has a closed-form representation, which can be utilized to quantify the effect of missing a true break point on the sum of squared error. In contrast, in our high-dimensional setting, there is no closed form solution for \eqref{eq_estimation_second}. This complicates the proof. Thus, in proof of Lemma~\ref{lemma_selection}, each candidate segment in the first step may need a different treatment; also, for some segments, specific restricted eigenvalue and deviation bounds need to be verified in order to characterize the asymptotic behavior of VAR parameters. 
}

\begin{remark}\label{remark:1}
For the case when $ m_0 $ is finite, we can set $ \gamma_n = {(\log n \log p)}/{n} $, $ \lambda_n = o\left( {(\log n \log p)}/{n p} \right) $, and $ \omega_n = {(\log n \log p)}^{1+\nu} $ for some positive $ \nu > 0 $. For these rates, the model can have total sparsity $ d_n^\star = o\left( {(\log n \log p)}^{\nu/2} \right) $. 
\end{remark}

\begin{remark}\label{remark:2}
The proposed procedure can be also applied to low-dimensional time series. For example, with $ p = c n^a $ for positive constants $ c$ and $a$, the probability bounds derived in Lemma~\ref{lemma_bound} would be sharp enough to achieve the desired consistency results {
similar to those for the high-dimensional case in Theorem~\ref{thm_selection}}.
\end{remark}

\begin{remark}\label{remark:3}
Selecting the tuning parameter $ \eta $ in Assumption A5 is challenging in practice, since the distance between candidate break points from the initial estimation to the true break points is unknown. Thus, while the specified tuning parameters achieve optimal consistency rates for locating the break points, they are not practical in finite sample settings and applications. To overcome this challenge, we can instead consider a fixed tuning parameter $ \eta $ in all the candidate segments as $ \eta = C m_0 \sqrt{\Delta_n^\star \log p}/n $ for some large enough positive constant $ C > 0 $, where $ \Delta_n^\star = \max_{1 \leq j \leq m_0} |t_{j+1} - t_j| $. We can still show the consistency of the proposed procedure in \eqref{eq_selection} with this fixed $ \eta $. However, the consistency rate for locating the break points using this fixed rate would be different from that achieved in Theorem~\ref{thm_selection}. For finite $ m_0 $, the rate would be of order $ (n \log p)^{1/2 + \nu} $ for some positive $ \nu > 0 $ as compared to the rate $ {(\log n \log p)}^{1+\nu} $ when the tuning parameters are selected as in Assumption~A5. In all simulation studies and real data applications, $ \eta $ was selected according to the fixed rate mentioned in this remark.
\end{remark}

When the number of change points selected at the first stage $\widehat{m} = | \widehat{\mathcal{A}}_n | $ is large, the second screening step for finding the minimizer of the information criterion $ \mathrm{IC} $ could be computationally demanding. In order to reduce the computational cost, we can use a backward elimination algorithm (BEA), similar to \citet{Chan_2014}, to approximate the optimal point at a lower computational cost. The idea of this algorithm is to start with the full set of selected points, $\widehat{\mathcal{A}}_n$; in each step, we then remove one unnecessary point until no further reduction is possible. Details of the algorithm are given in Appendix~C. 

While the proposed BEA algorithm is not guaranteed to achieve the minimizer of \eqref{eq_selection}, it only requires to search $ \widehat{m}^2 $ sets in order to find the break points. The algorithm thus results in a significant reduction in the computational time when $ \widehat{m} $ is large. The proposed BEA algorithm was used in simulation studies of Section~\ref{sec:sims} and real data examples of Section~\ref{sec:data}, and seems to perform very well in all cases.

\subsection{Comparison with Other Methods}\label{sec:comparison}

Before discussing the parameter estimation consistency in Section~\ref{sec:estimation}, in this section, we compare our rates of consistency for break point estimation with related methods. That is because existing approaches do not provide consistent parameter estimation. By Theorem~\ref{thm_selection}, our rate is of order $ m_0 n \gamma_n {d_n^\star}^2$. Thus, the exact rate depends on values of $ m_0 $, $ \gamma_n $, and $ {d_n^\star} $, which are governed by Assumptions A1--A5.  

\citet{Chan_2014} consider structural break detection in univariate AR processes. When $ m_0 $ is finite, their rate of consistency for estimating the location of break points is of order $  {\log n } $ under the assumption that the distance between two consecutive break points is at least $ {(\log n )}^{1+\nu} $ for some positive $ \nu > 0 $. Our method can be seen as an extension of \citeauthor{Chan_2014} to high dimensions: choosing $ \gamma_n = \log n \log p / n $, the consistency rate of our method is of order $ {\log n \log p} \, {d_n^\star}^2 $ assuming that consecutive break point distances are at least of the same order. The additional factor $ {\log p} \, {d_n^\star}^2 $ quantifies the complexity of the problem in the high-dimensional setting. In the univariate case, or even in multivariate case with fixed dimension $p$, our method achieves the same rates of consistency as \citeauthor{Chan_2014}, i.e., $  {\log n } $. 
 
The test procedure of \citet{aue2009break} can locate break points in the covariance structure of multivariate time series, and achieve a rate of consistency of order $  {\log \log n } $. Interestingly, we can also set $ \gamma_n = \log \log n \log p / n $. In this case, for finite number of break points, our rates for consistently  locating the break points can be $  \log \log n \log p \, {d_n^\star}^2 $, which differs with the rate derived in \citet{aue2009break} by a factor of order $ \log p \, {d_n^\star}^2 $. This factor is, again, due to the fact in our analysis the number of time series, $ p $, is allowed to grow exponentially with $ n $. In contrast, the analysis in \citet{aue2009break} does not directly take the rate of increase in $p$ into account. For fixed $p$, both methods give similar rates for locating the break points, i.e., $  {\log \log n } $.

The recent proposal of \citet{cho2015multiple} uses a CUSUM statistic to identify the number of break points together with their locations. The proposal of \citet{cho2015multiple} is the closest in spirit to our approach as it also identifies structural breaks in high-dimensional time series. However, aside from consistent estimation of model parameters discussed in the Introduction, the two methods have a number of differences. First, in \citet{cho2015multiple}, the number of time series $ p $ is allowed to grow polynomially with the number of time points $ T $. In contrast, we allow $ p $ to grow exponentially with $ T $. Second, the minimum distance between two consecutive break points allowed in \citet{cho2015multiple} is of order $ \Delta_n = T^{\psi} $ for some $ \psi \in (6/7,1) $. In our setting, depending on the sparsity of the model, this rate could be as low as $ \Delta_n = {(\log n \log p)}^{1+\nu} {d_n^\star}^2 $ for some positive $ \nu > 0 $. Therefore, for sparse enough VARs, our method can detect considerably closer break points. Finally, our rate of consistency for estimating the break point locations is of order $ m_0 n \gamma_n {d_n^\star}^2 $, which could be as low as $ {(\log n \log p)}^{1+\nu} $ if we set $ \gamma_n = \log n \log p / n $ and $ d_n^\star = O \big({ ( \log n \log p)}^{\nu/2} \big)$. \citet{cho2015multiple} can achieve a similar rate when $ \Delta_n $ is of order $ T $. However, when $ \Delta_n $ is smaller and is of order $ T^{\psi} $ for some $ \psi \in (6/7,1) $, \citeauthor{cho2015multiple}'s rate of consistency will be of order $ T^{2 - 2 \psi} $, which is larger than our logarithmic rate.

\section{Consistent Parameter Estimation}\label{sec:estimation}
Theorems~\ref{thm_Hausdorff} and \ref{thm_selection} suggest that we can consistently estimate the location (and number) of change points in high-dimensional time series. However, even with such estimates, consistent estimation of $ \Phi^{(.,j)}$ parameters in nonstationary high-dimensional VAR models remains challenging. This challenge primarily stems from the inexact nature of structural break estimation. More specifically, while we know that the estimated break points are in some neighborhood of the true break points, they are not guaranteed to segment the time series into stationary components. A more careful analysis is thus needed to ensure the consistency of VAR parameters. 

The key to our approach for consistent parameter estimation is that Theorems~\ref{thm_Hausdorff} and \ref{thm_selection} imply that removing the selected break points together with large enough $ R_n $-radius neighborhoods will also remove the true break points. We can thus obtain stationary segments at the cost of discarding some portions of the observed time series. Theorem~\ref{thm_Hausdorff} suggests that the radius $R_n$ can be as small as $ n \gamma_n $. However, based on Theorem~\ref{thm_selection}, in order not to keep any redundant break points, $ R_n $ needs to be at least $ B m_0 n \gamma_n {d_n^\star}^2 $ for a large value $ B > 0 $. 

Given the results in Theorems~\ref{thm_selection}, suppose, without loss of generality, that we have selected $ m_0 $ break points using the procedure developed in Section~\ref{sec:consistency}. Denote these estimated break points  by $ \tilde{t}_1, \ldots, \tilde{t}_{m_0} $. Then, by Theorem~\ref{thm_selection}, 
$$
\mathbb{P} \left( \max_{1 \leq j \leq m_0} | \tilde{t}_j - t_j | \leq R_n \right) \rightarrow 1,  
$$
as $ n \rightarrow \infty $. 
Denote $ r_{j1} = \tilde{t}_j - R_n - 1 $, $ r_{j2} = \tilde{t}_j + R_n + 1 $ for $ j = 1, \ldots, m_0 $, and set $ r_{02} = q $ and $ r_{(m_0+1)1} = T $. Further, define the intervals $ I_{j+1} = [ r_{j2}, r_{(j+1)1}  ] $ for $ j = 0, \ldots, m_0 $. The idea is to form a linear regression on $ \cup_{j=0}^{m_0} I_{j+1} $ and estimate the auto-regressive parameters by minimizing an $\ell_1$-regularized least squares criterion. More specifically, we form the following linear regression:
\begin{equation}\label{eq_regression_third}
\begin{pmatrix} y_q^\prime \\ \vdots  \\  y_{r_{11}}^\prime \\  y_{r_{12}}^\prime \\ \vdots \\ y_{r_{21}}^\prime \\ \\ \vdots \\ \\ y_{r_{m_0 1}}^\prime \\ \vdots \\ y_T^\prime \end{pmatrix} = \begin{pmatrix} Y_{q-1}^\prime  \\ \vdots & 0 & \ldots & 0 \\ Y_{r_{11}-1}^\prime \\ & Y_{r_{12}-1}^\prime \\ 0 & \vdots & \ldots & 0 \\ & Y_{r_{21}-1}^\prime \\ &&& \\ \vdots & \vdots & \ddots & \vdots \\ &&& \\  &&& Y_{r_{m_0 2}-1}^\prime \\ 0 & 0 && \vdots \\ &&& Y_{T-1}^\prime \end{pmatrix} \begin{pmatrix} \beta_1^\prime \\  \beta_{2}^\prime \\ \vdots  \\ \beta_{m_0+1}^\prime \end{pmatrix} + \begin{pmatrix}  \zeta_q^\prime \\ \vdots  \\  \zeta_{r_{11}}^\prime \\  \zeta_{r_{12}}^\prime \\ \vdots \\ \zeta_{r_{21}}^\prime \\ \\ \vdots \\ \\ \zeta_{r_{m_0 1}}^\prime \\ \vdots \\ \zeta_T^\prime \end{pmatrix}.
\end{equation}
This regression can be written in compact form as 
$$ 
\mathcal{Y}_{\textbf{r}} = \mathcal{X}_{\textbf{r}} B + E_{\textbf{r}} 
$$
or, in a vector form, as 
\begin{equation}\label{eq_estimation_third_matrix_form}
{\textbf{Y}}_{\textbf{r}} =  \textbf{Z}_{\textbf{r}} \textbf{B}  + \textbf{E}_{\textbf{r}}
\end{equation}
where $ \textbf{Y}_{\textbf{r}} = \mbox{vec}(\mathcal{Y}_{\textbf{r}}) $, $ {\textbf{Z}_{\textbf{r}}} = I_p \otimes \mathcal{X}_{\textbf{r}} $, $ \textbf{B} = \mbox{vec}(B) $, $ {\textbf{E}}_{\textbf{r}} = \mbox{vec}(E_{\textbf{r}}) $, and $ {\textbf{r}} $ is the collection of all $ r_{j1}$ and $ r_{j2} $ for $ j = 0, \ldots, m_0+1 $. Let $ \tilde{\pi} = (m_0+1) p^2 q $, $ N_j = \mbox{length} (I_{j+1}) = r_{(j+1)1}-r_{j2} $ for $ j = 0, \ldots, m_0 $ and $ N = \sum_{j=1}^{m_0} N_j $.
Then, $ \textbf{Y}_{\textbf{r}} \in \mathbb{R}^{N p \times 1} $, $ {\textbf{Z}_{\textbf{r}}} \in \mathbb{R}^{N p \times \tilde{\pi}} $, $ \textbf{B} \in \mathbb{R}^{\tilde{\pi} \times 1} $, and $ {\textbf{E}}_{\textbf{r}} \in \mathbb{R}^{N p \times 1} $. We estimate the VAR parameters by solving
\begin{equation}\label{eq_estimation_third}
\widehat{\textbf{B}} = \mbox{argmin}_{\textbf{B}} {N}^{-1} \left\| \textbf{Y}_{\textbf{r}} - {\textbf{Z}_{\textbf{r}}} \textbf{B} \right\|_2^2 + \rho_n  \left\| \textbf{B} \right\|_1.
\end{equation}

We obtain the following consistency result. 
\begin{theorem}\label{thm_parameter_consistency}
Suppose A1--A5 hold and $ m_0 $ is unknown and $ R_n = B m_0 n \gamma_n {d_n^\star}^2 $. Assume also that $ \Delta_n > \varepsilon n $ for some large positive $ \varepsilon > 0 $ and $ \rho_n = C \sqrt{\frac{\log \tilde{\pi}}{N}} $ for large enough $ C > 0 $. (Note that $ N/n = O(1) $.) Then, as $ n \rightarrow + \infty $, the minimizer  $ \widehat{\textbf{B}} $ of \eqref{eq_estimation_third} satisfies 
$$ 
\left\| \widehat{\mathbf{B}} - \Phi  \right\|_\ell = O_p \left( (d_n^\star)^{1/\ell} \rho_n \right) \mbox{ for } \ell = 1, 2.
$$
\end{theorem}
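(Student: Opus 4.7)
The plan is to exploit the accurate localization of change points guaranteed by Theorem~\ref{thm_selection} to show that each working interval $I_{j+1}$ lies, with high probability, entirely inside a single stationary regime, and then to apply a standard Lasso oracle inequality to the block-diagonal design matrix $\mathcal{X}_{\textbf{r}}$.

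First, I would define the good event $\mathcal{E}_n = \{\widetilde{m} = m_0\} \cap \{\max_j |\tilde{t}_j - t_j| \le R_n\}$; by Theorem~\ref{thm_selection}, $\mathbb{P}(\mathcal{E}_n) \to 1$, and on $\mathcal{E}_n$ every interval $I_{j+1}$ lies strictly inside the true segment $(t_j, t_{j+1})$, so the observations on $I_{j+1}$ are generated by the stationary VAR with parameters $\Phi^{(\cdot,j+1)}$ up to an exponentially decaying transient from the break at $t_j$ (negligible because $R_n$ grows polynomially in $n$ while the transient decays geometrically under A1). Because $\Delta_n > \varepsilon n$ and $R_n = o(n)$ under A3--A4, each $N_j = |I_{j+1}|$ is of order $n$, so $N = \Theta(n)$, which justifies the stated scaling $\rho_n = C\sqrt{\log\tilde{\pi}/N}$.

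Next, I would exploit the block structure: $\mathcal{X}_{\textbf{r}}$ is block diagonal with blocks $\mathcal{X}^{(j+1)}$ each arising from a single stationary Gaussian VAR regime, and $\textbf{Z}_{\textbf{r}} = I_p \otimes \mathcal{X}_{\textbf{r}}$. For each block I would invoke the Gaussian stationary-VAR restricted-eigenvalue result of \citet{Basu_2015} (their Proposition 4.2), which gives a lower curvature of order $\textbf{m}(f_{j+1}) > 0$ (positive by A1) on the cone of vectors supported on at most $d_n^\star$ coordinates, provided $N_j \gtrsim d_n^\star \log(pq)$; this holds under A4. A union bound over the $m_0 + 1$ blocks yields a block-diagonal RE constant $\alpha \ge c\min_j \textbf{m}(f_{j+1})$ for $\textbf{Z}_{\textbf{r}}$ with probability tending to $1$.

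Third, I would establish the deviation bound $\|\textbf{Z}_{\textbf{r}}^\prime \textbf{E}_{\textbf{r}}/N\|_\infty \le \rho_n/2$. Within each block this reduces to bounding $\|(\mathcal{X}^{(j+1)})^\prime \varepsilon^{(j+1)}/N_j\|_\infty$, for which the Hanson--Wright / Gaussian-chaos type concentration bounds in \citet{Basu_2015} give each coordinate to be $O_p\!\bigl(\sqrt{\log(pq)/N_j}\bigr)$, the constants controlled by $\mathcal{M}(f_{j+1}) < \infty$ from A1. A union bound over the $\tilde{\pi} = (m_0{+}1)p^2 q$ coordinates yields $\|\textbf{Z}_{\textbf{r}}^\prime \textbf{E}_{\textbf{r}}/N\|_\infty = O_p(\sqrt{\log \tilde{\pi}/N})$, which is absorbed into $\rho_n/2$ by choosing $C$ large enough. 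Combining the RE condition and the deviation bound with the standard Lasso oracle inequality (e.g., \citet{BickelETAL_2009}) yields $\|\widehat{\textbf{B}} - \Phi\|_2 = O_p(\sqrt{d_n^\star}\,\rho_n)$ and $\|\widehat{\textbf{B}} - \Phi\|_1 = O_p(d_n^\star\,\rho_n)$, which is the conclusion.

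The main obstacle is verifying the RE condition and deviation bound jointly for the block-diagonal design while accounting for the fact that the process on $I_{j+1}$ is not \emph{exactly} stationary: the first $q$ observations of $I_{j+1}$ inherit influence from observations before the removed buffer, which may straddle the true break $t_j$. I would handle this by decomposing $\mathcal{X}^{(j+1)}$ as the sum of the genuine stationary design plus a perturbation whose entries decay geometrically in the distance from $t_j$; since $R_n \to \infty$ polynomially and the VAR is geometrically ergodic under A1--A2, this perturbation is negligible in operator norm with probability going to one, so all the high-probability RE and deviation statements survive unchanged.
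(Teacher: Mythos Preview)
Your proposal is correct and follows essentially the same route as the paper: reduce to verifying a restricted-eigenvalue condition for the block-diagonal design $\textbf{Z}_{\textbf{r}}$ and a deviation bound $\|\textbf{Z}_{\textbf{r}}^\prime\textbf{E}_{\textbf{r}}/N\|_\infty\le \rho_n/2$, both via the concentration machinery of \citet{Basu_2015}, and then invoke the standard Lasso oracle inequality. The only notable differences are cosmetic: the paper normalizes each block as $\mathcal{X}_i^\prime\mathcal{X}_i/N$ (so the target is $(N_i/N)\Gamma_i(0)$, and the assumption $\Delta_n>\varepsilon n$ is used to lower-bound $N_i/N$) rather than $\mathcal{X}_i^\prime\mathcal{X}_i/N_i$, and it derives RE from Proposition~2.4 plus a discretization argument rather than citing the packaged RE statement directly; also, the paper's proof does not explicitly treat the geometric transient you raise in your last paragraph, so your handling of that point is in fact more careful than the original.
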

Theorem~\ref{thm_parameter_consistency} is proved in Appendix B, where it is also shown that, if $ m_0 $ is known, then it is enough to set $ R_n = n \gamma_n $. 

A natural question is whether consistent parameter estimation can be achieved without the third step of our procedure. The next remark emphasizes that removing the $R_n$-radius of estimated break points and fast rates of convergence play a crucial role in consistent parameter estimation.

{
\begin{remark}
Existing approaches for change point detection in multivariate and high-dimensional settings, including ours and the proposal of  \citet{cho2015multiple}, are only able to consistently estimate the \emph{relative location} of change points, $ t_j / T $. As a result, using the estimated change points to partition the time series into `stationary' segments, and then estimating the VAR parameters for each segment using a regularized estimation procedure may not lead to consistent estimates of VAR parameters. This is because if the distance between the break point $ t_j $ and its estimation $ \tilde{t}_j $ is large relative to the minimum distance between two consecutive break points, then the restricted eigenvalue (RE) condition may not hold for the specific segment starting with $ t_j $. As discussed in \citet{BickelETAL_2009}, the RE condition is critical for establishing the consistency of parameter estimation in high-dimensional settings. Hence, removing the $R_n$-radius of estimated break points, together with our improved rates of change point detection, are key in achieving parameter estimation consistency in high-dimensional VAR models. 
In contrast, the consistency rate for locating break points using the method of \citet{cho2015multiple} is of order $ T^{2-2\psi} $ with $ \psi \in (6/7,1) $. Now, if, similar to our setting, the minimum distance allowed between two consecutive break points in the model is of order $ \left( \log T  \, \log p \right)^{1+\nu} $ for some $ \nu > 0 $, then the estimated break points might be far from the true break points since $ {T^{2-2\psi}}/{\left( \log T  \, \log p \right)^{1+\nu}} $ may diverge. This violates the RE condition and may result in inconsistent parameter estimation using $L_1$-reqularization. 
\end{remark}
}


{
\section{Tuning Parameter Selection}\label{sec:tuningselection}

Our proposed three-stage procedure relies on multiple tuning parameters, $ \lambda_{1,n}, \lambda_{2,n}, \eta_n, \omega_n $, $ \rho_n $ and $R_n$. Although the theoretical rates for these parameters are derived in previous sections, their selection in finite sample applications needs further discussion. In this section, we provide guidance on selecting these tuning parameters.

\begin{itemize}
\item[$ \lambda_{1,n} $:] We can select $ \lambda_{1,n} $ by cross-validation. Let $ \mathcal{T} $ be a set of equally spaced time points, starting from a randomly selected initial time point. The data without observations in $ \mathcal{T} $ can then be used in the first step of our procedure to estimate $\Theta$ for a range of values for $ \lambda_{1,n} $. The parameters estimated in the first step are then used to predict the series at time points in $ \mathcal{T}$ . The value of $ \lambda_{1,n} $ which minimizes the mean squared prediction error over $ \mathcal{T} $ is the cross-validated choice of $ \lambda_{1,n} $.

\item[$ \lambda_{2,n} $:] As described previously, the rate for $ \lambda_{2,n} $ vanishes fast as $ T $ increases. Thus for simplicity, we suggest setting $ \lambda_{2,n} $ to zero. This choice was used in all of the numerical analyses in the paper---both simulations studies and real data applications---and seems to give satisfactory results. 

\item[$ \eta_n $:] Following Remark~\ref{remark:3}, we suggest a fixed rate for 
this parameter, and, based on the theoretical rate in Remark~\ref{remark:3}, set $ \eta_n = {(\log n \log p)}/{n} $. This choice was used in all of the numerical analyses in the paper and provides very good results.

\item[$ \omega_n $:] The selection of $ \omega_n $ is the most difficult,  since it depends on how large changes in the VAR parameters must be in order to consider them as break points in finite sample applications. A similar tuning parameter  appears in other work on the topic, including \citet{cho2015multiple} and \citet{Chan_2014}. Following Assumption~A4, in our analysis we set $ \omega_n = C {\left( \log n \log p \right)}^{3/2} $ for some $ C > 0 $. The range for the constant $ C $ used in our analysis is within the interval $ [0, 1]$. 

\item[$ \rho_n $:] Finally, we also need to select the tuning parameter $ \rho_n $ for parameter estimation. We select $ \rho_n $ as the minimizer of the combined Bayesian Information Criterion (BIC) for all the segments. Following \citet{lutkepohl2005new} and \citet{zou2007degrees}, for $ j = 0, \ldots, \tilde m $ we define the BIC on the interval $ I_{j+1} = [ r_{j2}, r_{(j+1)1}  ] $ as
$$ 
\mbox{BIC}(j, \rho_n) = \log (\mbox{det} \widehat{\Sigma}_{\varepsilon,j} ) + \frac{\log (r_{(j+1)1} - r_{j2})}{(r_{(j+1)1} - r_{j2})} \left\| \widehat{\beta}_{j+1} \right\|_0,  
$$
where $ \widehat{\Sigma}_{\varepsilon,j} $ is the residual sample covariance matrix with $ \widehat{\textbf{B}} $ estimated  in \eqref{eq_estimation_third}, and $ \| \widehat{\beta}_{j+1} \|_0 $ is the number of nonzero elements in $ \widehat{\beta}_{j+1} $; then $ \rho_n $ is selected as
\begin{equation}\label{eq_BIC}
 \widehat{\rho}_n = \mbox{argmin}_{\rho_n} \sum_{j=0}^{\tilde m} \mbox{BIC}(j, \rho_n). 
\end{equation}
{
\item[$ R_n $:] Recall from Section~\ref{sec:estimation} that we need to remove the selected break points together with their $R_n$-radius neighborhood before estimating the parameters using \eqref{eq_estimation_third}. In practice, the radius $ R_n $ needs to be estimated. However, a closer look at the proof of Theorem~\ref{thm_selection} together with Assumption~A4 suggest that $ \omega_n $ can be chosen as an upper bound for the selection radius $ R_n $. In other words, in the statement of Theorem~\ref{thm_selection}, the radius $ B m_0 n \gamma_n {d_n^\star}^2 $ can be replaced by $ \omega_n $ and the result would still hold. Formally,
$$
\mathbb{P} \left( \max_{1 \leq j \leq m_0} | \widetilde{t}_j - t_j | \leq \omega_n \right) \rightarrow 1, 
$$
as $ n \rightarrow \infty $. Therefore, in all simulation scenarios and data applications, we set $ R_n = \omega_n $.}
\end{itemize}
}

%

\section{Simulation Studies}\label{sec:sims}

\subsection{Preliminaries}\label{sec:simsec1}
We evaluate the performance of the proposed three-stage estimator with respect to both structural break detection and parameter estimation.  {
In this section, we consider three simulations scenarios. The first two scenarios examine low-dimensional settings, with $ T = 300 $, $ p = 20 $, $ q = 1 $ and $ m_0 = 2 $; the third scenario examines a high-dimensional setting with $ T = 80 $, $ p = 100 $, $ q = 1 $ and $ m_0 = 1 $. Detail of simulation settings in each scenario are explained in Section~\ref{sec:simsec2} and Appendix~D, where results from two additional simulation settings are also presented\footnote{See also https://github.com/abolfazlsafikhani/SBDetection.}.
In all, except Scenario~5 in Appendix~D, results are averaged over 100 randomly generated data sets with mean zero and $ \Sigma_\varepsilon = 0.01 I_T $; a non-diagonal $\Sigma_\varepsilon$ is considered in Scenario~5.}

For structural break detection, we compare our method with the sparsified binary segmentation-multivariate time series (SBS-MVTS) approach of \citet{cho2015multiple}. For both methods, we report the locations of the estimated break points and the percentage of simulations where each break point is correctly identified. 
This percentage is calculated as the proportion of simulations, where selected break points are close to each of the true break points. More specifically, a selected break point is counted as a `success' for the first true break point, $t_1$, if it is in the interval $ [ 0, t_1 + 0.5 (t_2 - t_1)) $; similarly, a selected break point is counted as a `success' for the second true break point, $t_2$, if it falls in the interval $ [ t_1 + 0.5 (t_2 - t_1), T ] $. 
For our method, we also report the percentage of cases where the break point is correctly estimated in the $R_n$-radius of truth. 
The results are reported in Table~\ref{table_sim_all}.

For parameter estimation, we evaluate the performance of our procedure by reporting the (relative) estimation error, as well as true and false positive rates, calculated by comparing the nonzero patterns of true and estimated coefficient matrices. {
Since existing procedures do not provide consistent estimates of parameters in piecewise stationary VAR models, we compare our procedure to a procedure based on the state-of-the-art methods in high-dimensional change point detection and VAR estimation. More specifically, we use the estimated change points from SBS-MVTS in order to partition the time series into `stationary' segments, but without removing the $ R_n$-radius of selected break points. We then use our $\ell_1$-penalized procedure \eqref{eq_estimation_third} to estimate the VAR parameters in each segment. The results in Table~\ref{table_sim_1_estimation} clearly show the advantages of our procedure and highlight the importance of removing the $ R_n$-radius of selected break points in the third step of our procedure; see Section~\ref{sec:simsec2} for details.}

\begin{figure}[t!]
\begin{center}
\includegraphics[width=0.45\textwidth, clip=TRUE, trim=0cm 0cm 0cm 1.5cm]{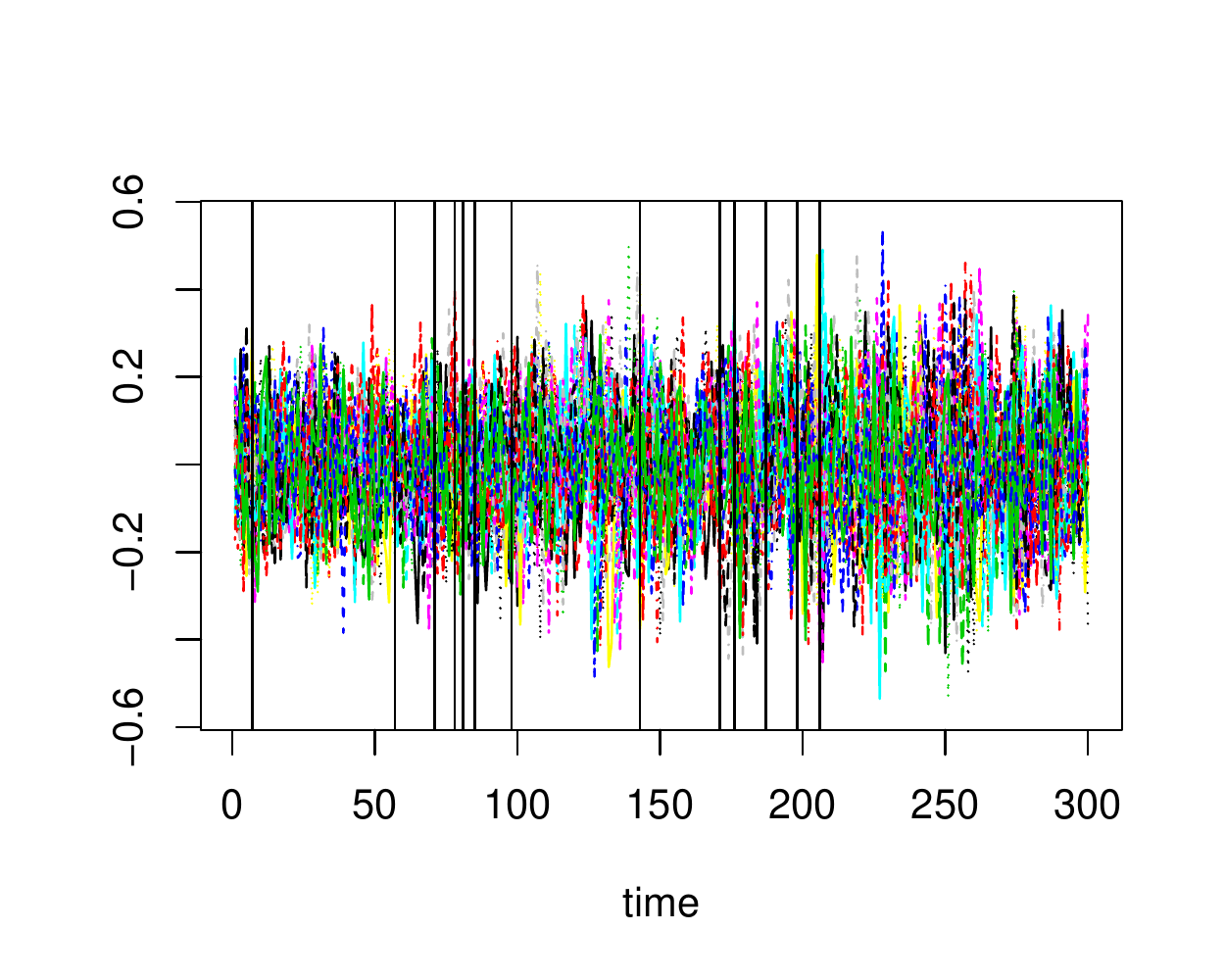}
\includegraphics[width=0.45\textwidth, clip=TRUE, trim=0cm 0cm 0cm 1cm]{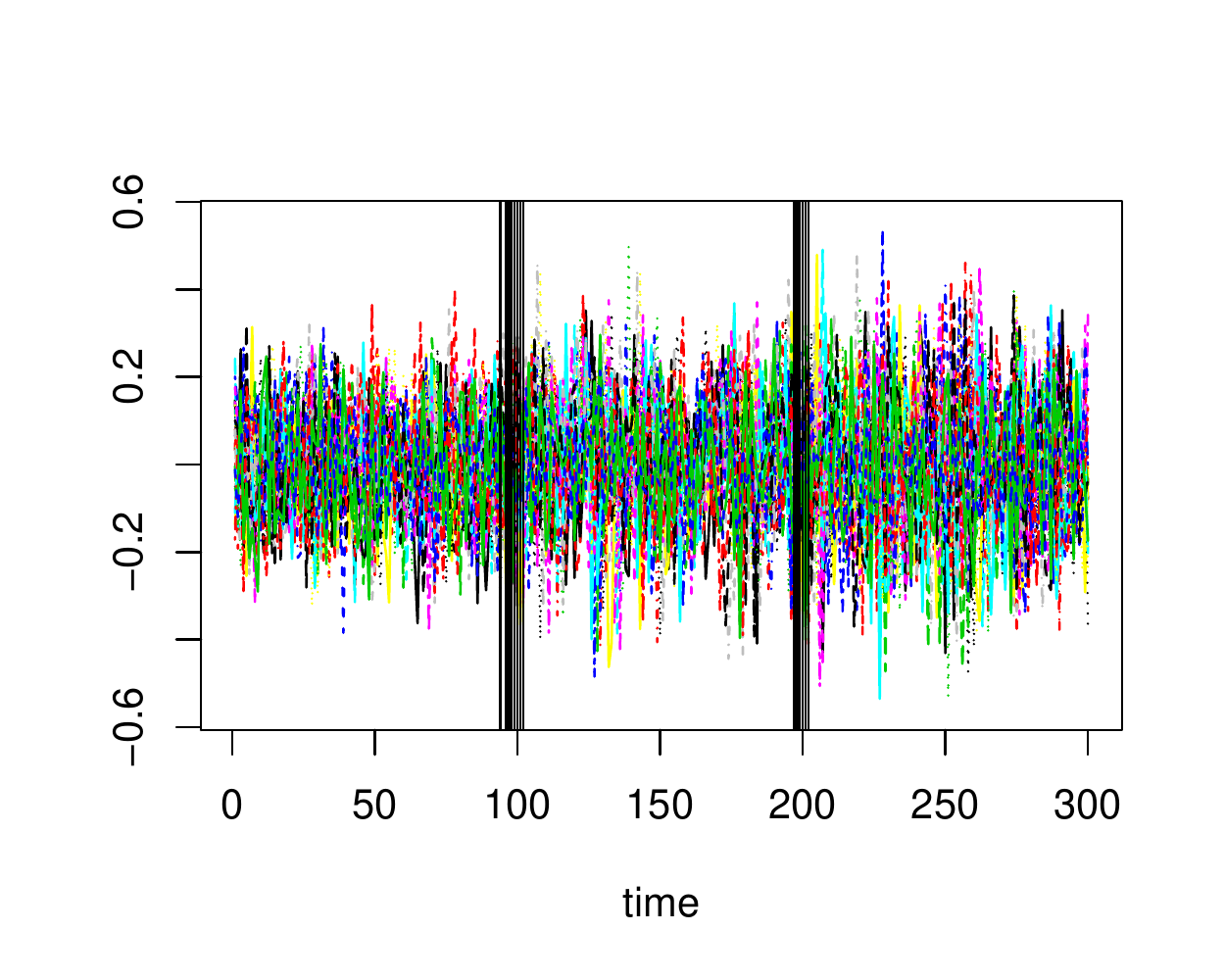}
\vspace{-0.5cm}
\caption{Left: Estimated break points from the first stage of our proposed procedure (Equation~\ref{eq_estimation}) for a single runs in Simulation Scenario 1; on average $\sim$13 points are selected in the first stage. Right: Final selected break points for all 100 simulation runs in Simulation Scenario 1.}
\label{fig_step_1}
\end{center}
\end{figure}

\subsection{Simulation Scenarios}\label{sec:simsec2}

\textit{Simulation Scenario 1 (Simple $ \Phi $ and break points close to the center)}. In the first scenario, the autoregressive coefficients are chosen to have the same structure but different values; 
{see Appendix~D for details.} In this scenario, $ t_1 = 100 $ and $ t_2 = 200 $, which means that break points are not close to the boundaries. 

{
Before comparing our procedure with the SBS-MVTS method of \citet{cho2015multiple}, we take a closer look at the first two steps of our procedure in this simulation setting. 
}
To this end, we plot the break points in one simulated data sets in the left panel of Figure~\ref{fig_step_1}. As expected from Theorem~\ref{thm_Hausdorff}, more than 2 break points are detected using the first stage estimator. However, some break points are indeed in a small neighborhood of true change points. Our second-stage screening procedure eliminates the extra candidate points, leaving only the two closest points to the true change points. The final selected points in all 100 simulation runs are shown in the right panel of Figure~\ref{fig_step_1}, and confirm the consistency of the proposed method in selecting the break points. 

\emph{Simulation Scenario 2 (Simple $ \Phi $ and break points close to the boundaries)}. The coefficient matrices in this simulation are similar to those in Scenario 1. However, in this scenario, the break points are closer to the boundaries. Specifically, $ t_1 = 50 $ and $ t_2 = 250 $. 

{
\textit{Simulation Scenario 3 (High-dimensional setting with simple $ \Phi $)}. In this scenario, $ T = 80 $, $ p = 100 $ and there is a single break point at $ t = 40 $. The autoregressive coefficients are chosen to have the same structure as the first two VAR matrices in Scenario~1.
}

\begin{table}
\hspace{-1cm}
\caption{\label{table_sim_all} Mean and standard deviation of estimated break point locations, the percentage of simulation runs where break points are correctly detected (selection rate)}
\centering
\begin{tabular}{lcccccc} 
  \hline
& method & break point & truth & mean & std & selection rate \\ 
  \hline
    \hline
    Scenario 1 & & & & & \\
    & SBS-MVTS & 1 & 0.3333 & 0.3513 & 0.039 & 0.85  \\ 
    \, & &  2 & 0.6667 & 0.6425 & 0.0558 & 0.87 \\
   & Our method & 1 & 0.3333 & 0.3318 & 0.0104 & 1 \\ 
  \,& & 2 & 0.6667 & 0.6584 & 0.0153 & 1 \\ 
    Scenario 2 & & & & & \\
      & SBS-MVTS & 1 & 0.1667 & 0.31 & 0.0802 & 0.94 \\ 
  \,& & 2 & 0.8333 & 0.6414 & 0.102 & 0.68 \\
  & Our method  & 1 & 0.1667 & 0.1763 & 0.022 & 1 \\ 
  \,& & 2 & 0.8333 & 0.7971 & 0.023 & 1 \\ 
    Scenario 3 & & & & & \\
        & SBS-MVTS & 1 & 0.5 & 0.4688 & 0.1504 & 0.64 \\ 
    & Our method & 1 & 0.5 & 0.4975 & 0.0223 & 1 \\
   \hline
\end{tabular}

\end{table}

\subsection{Simulation Results}\label{sec:simsec3}

The mean and standard deviation of locations of selected break points, relative to the sample size $ T $ --- i.e., $\widetilde{t}_1 / T $ and $ \widetilde{t}_2 / T $ --- for all simulation scenarios are summarized in Table~\ref{table_sim_all}. The results clearly indicate that, in all settings, our procedure accurately detects both the number of break points, as well as their locations. They also suggest that our procedure produces more accurate estimates of break points than the SBS-MVTS method. The advantage of our method is more pronounced when comparing the percentage of times where the break points are correctly detected using each method. 

{
In addition to overall improvement in detecting break points, the simulation results also indicate that our procedure offers significant advantages over SBS-MVTS in in Simulation Scenarios 2 and 3. In Scenario~2, where the break points are closer to the boundaries, the detection performance of our procedure does not deviate much from Scenario 1, but the performance on locating the break points seems to be slightly worse. This is in stark contrast to SBS-MVTS, which gives worse estimates of break points in this simulation setting.
Similarly, our procedure continues to perform well in the high-dimensional setting of Scenario 3, whereas SBS-MVTS performs considerably worse than Scenario 1.}

Table~\ref{table_sim_1_estimation} summarizes the results for autoregressive parameter estimation in all three simulation scenarios. The table shows mean and standard deviation of relative estimation error (REE), as well as true positive (TPR) and false positive rates (FPR) of the estimates. The results suggest that our method also performs well in terms of parameter estimation. However, true positive rates are low in Scenario 2. One potential explanation for the reduced TPR in this scenario is that the first and third segments used for estimation in this scenario contain less than 40 time points,  compared to 90 time points in Scenario~1. This shorter length makes it harder to estimate the parameters and correctly select zero and nonzero coefficients. {
Comparing our procedure and the na\"ive procedure based on SBS-MVTS, introduced in Section~\ref{sec:simsec1}, indicates that our procedure is superior in all simulation scenarios, both in terms of estimation error and variable selection.
Since the two methods use the same estimation procedure, this advantage can be attributed to our proposal, and, in particular, to removing the $R_n$-neighborhood of the estimator break points. These findings affirm our theoretical results.} 

\begin{table}
\hspace{-1cm}
\caption{\label{table_sim_1_estimation} Results of parameter estimation for all three simulation scenarios. The table shows mean and standard deviation of relative estimation error (REE), true positive rate (TPR), and false positive rate (FPR) for estimated coefficients.}
\centering
\begin{tabular}{lccccc} 
  \hline
 & Method & REE & SD(REE) & TPR & FPR \\
  \hline
    \hline 
  Simulation 1   & & & & &  \\
  & Our Method   & 0.3385 & 0.0282 & 1.00 & 0.036  \\
  & SBS-MVTS     & 0.4113 & 0.2678 & 1.00 & 0.039  \\
  Simulation 2   & & & & &  \\
  & Our Method   & 0.654 & 0.052 & 0.72 &  0.03  \\ 
  & SBS-MVTS     & 0.901 & 0.154 & 0.63 & 0.03  \\
    Simulation 3   & & & & &  \\
  & Our Method   & 0.6422 & 0.0234 & 0.91 & 0.003 \\ 
  & SBS-MVTS     & 0.8023 & 0.089 & 0.67 & 0.003  \\ 
   \hline
\end{tabular}
\end{table}

\section{Applications}\label{sec:data}

\subsection{EEG Data}\label{sec:EEG}
In this section, we revisit the EEG data discussed in Section~\ref{sec:intro}. Recall that the data consists of electroencephalogram (EEG) signals recorded at 18 locations on the scalp of a patient diagnosed with left temporal lobe epilepsy during an epileptic seizure. The sampling rate is 100 Hz and the total number of time points per EEG is $ T =22,768 $ over 228 seconds. The time series for all 18 EEG channels are shown in Figure~\ref{fig_EEG_full}. The seizure was estimated to take place at $ t = 85 s $. Examining the EEG plots, it can be seen that the magnitude and the volatility of signals change simultaneously around that time. 
To speed up the computations, we select ten observation per second and reduce the total time points to $ T = 2276 $. 

Data from one of the EEG channels (P3) was previously used by \cite{Davis_2006} and \cite{Chan_2014} for detecting structural breaks in the time series. As a comparison, we apply the SBS-MVTS method of \cite{cho2015multiple} as well as our procedure to detect the break points based on changes in all 18 time series. Table~\ref{table_EEG} shows the location of the selected break points using the Auto-PARM method of \cite{Davis_2006} and the two-stage procedure of \cite{Chan_2014}, based on data from channel P3, as well as those estimated using our method and SBS-MVTS based on all 18 channels. The selected break points by our method are also shown in Figure~\ref{fig_EEG_selected}. 

Our method detects a break point at $ t = 83 $, which is close to the seizure time identified by neurologists. Most other break points selected by our method are also close to break points detected by the two univariate approaches and SBS-MVTS. However, the main advantage of our method is that it also provides consistent estimates of VAR parameters. As shown in Figure~\ref{fig_EEG_network}, these estimates can be used to gain novel insight into changes in mechanisms of neuronal interactions before and after seizure. 

Given the proximity of selected break points between $t=83$ and $t=162$, in order to obtain the networks in Figure~\ref{fig_EEG_network}, we consider the time segments before and after these two time points. More specifically, using the procedure of Section~\ref{sec:estimation}, we discard observations in the $R_n$ radius before $t=83$ and after $t=162$ in order to ensure the stationarity of remaining observations. We then use the $\ell_1$-penalized least square estimator of \eqref{eq_estimation_third}, with tuning parameter selected by BIC \eqref{eq_BIC}, to obtain estimates of VAR parameters before and after seizure. Network edges in Figure~\ref{fig_EEG_network} correspond to nonzero estimated coefficients that are at least larger than 0.05 in magnitude. This thresholding is motivated by the known over-selection property of lasso \citep{ShojaieBasuMichailidis_2012} and is used to improve the interpretability of estimated networks. 

\begin{figure}[t]
\begin{center}
\includegraphics[width=0.5\linewidth, clip=TRUE, trim=0mm 0mm 00mm 20mm]{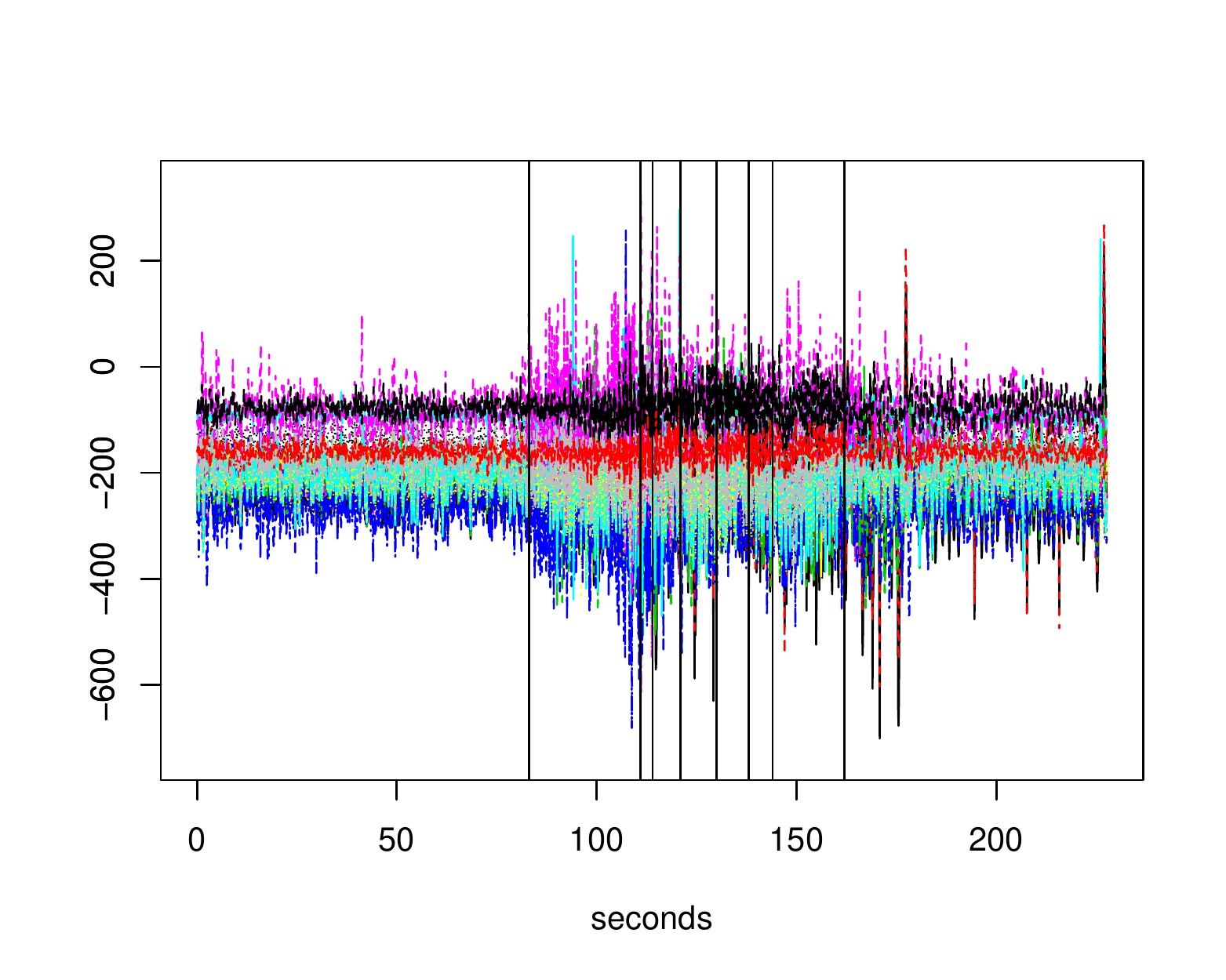}
\vspace{-0.5cm}
\caption{EEG data over 228 seconds with the 8 selected break points using our method.}\label{fig_EEG_selected}
\end{center}
\end{figure}

\begin{table}
\caption{\label{table_EEG} Location of break points detected in the EEG data using four estimation methods. The locations are rounded to the closest integer.}
\centering
\begin{tabular}{llllllllllll}
\hline
Methods & 1 & 2 & 3 & 4 & 5 & 6 & 7 & 8 & 9 & 10 & 11 \\ 
  \hline
  \hline
  Auto--PARM & 86 & 90 & 106 & 121 & 133 & 149 & 162 & 175 & 206 & 208 & 326 \\ 
   Chan et al. (2014) & 84 & 106 & 120 & 134 & 155 & 177 & 206 & 225 & -- & -- & -- \\ 
SBS--MVTS & 84 & 107 & 114 & 126 & 133 & 143 & 157 & 176 & -- & -- & --  \\ 
  Our method & 83 & 111 & 114 & 121 & 130 & 138 & 144 & 162 & -- & -- & --  \\ 
   \hline
\end{tabular}

\end{table}

\subsection{Yellow Cab Demand in NYC}\label{sec:taxi}

As a second example, we apply our method and SBS-MVTS to the yellow cab demand data in New York City (NYC), obtained from the NYC Taxi \& Limousine Commission's website\footnote{http://www.nyc.gov/html/tlc/html/about/trip--record--data.shtml}. Here, the number of yellow cab pickups are aggregated spatially over the zipcodes and temporally over 15 minute intervals during April 16th, 2014. We only consider the zipcodes with more than 50 cab calls to obtain a better approximation using linear VAR models. This results in time series for 39 zipcodes observed over 96 time points. To identify structural break points, we consider a differenced version of the data in order to remove first order non-stationarities.

\begin{table}[h]
\caption{\label{table_Taxi} The location of break points for the NYC Yellow Cab Demand data.}
\centering
\begin{tabular}{llllllllllll}
\hline
  & 1 & 2 & 3 & 4 & 5   \\ 
  \hline
  \hline
SBS--MVTS & 6am & 11:30am & -- & -- & --   \\
  Our method & 7am & 8:15am & 10am &  6pm & 7pm  \\ 
   \hline
\end{tabular}

\end{table}

Table~\ref{table_Taxi} shows the 5 break points detected by our method, along with two break points identified by SBS-MVTS; the differenced time series and  detected break points by our method are also shown in Figure~\ref{fig_Taxi_selected}. 
Based on data from NYC Metro (MTA), morning rush hour traffic in the city occurs between 6:30AM and 9:30AM, whereas the afternoon rush hour starts from 3:30PM and continues until 06:00PM. Interestingly, the selected break points by our method are very close to the rush hour start/end times during a typical day. Specifically, the selected break points at 7AM, 10AM, and 6PM are close to rush hour periods in NYC. These results suggest that the covariance structure of cab demands between the zipcodes in NYC may significantly change before and after the rush hour periods. Even with the least conservative tuning parameters, SBS-MVTS only selects two break points in this example, and does not identify any break points in the afternoon rush hour period.

\begin{figure}[h]
\begin{center}
\includegraphics[width=0.5\linewidth, clip=TRUE, trim=0cm 0cm 0cm 2cm]{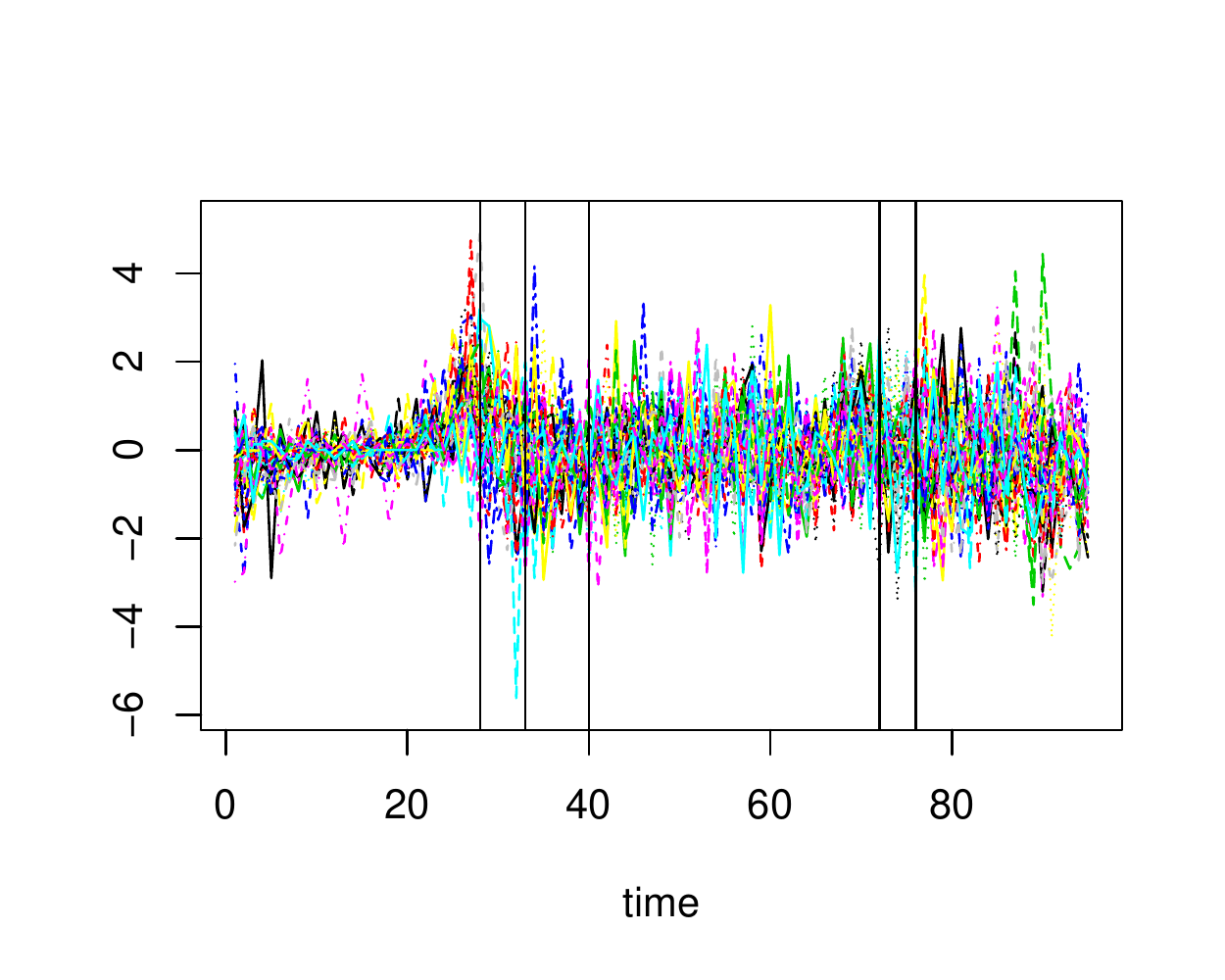}
\vspace{-0.5cm}
\caption{The NYC Yellow Cab Demand differenced time series from 39 different zipcodes over a single day with 96 time points. The 5 selected break points by the proposed method are shown as vertical lines.}\label{fig_Taxi_selected}
\end{center}
\end{figure}

\section{Discussion}\label{sec:disc}

We proposed a three-stage method for simultaneous detection of structural break points and parameter estimation in high-dimensional piecewise stationary VAR models. 

We showed that the proposed method consistently estimates the total number and location of break points. Moreover, it consistently estimates the parameters of the underlying high-dimensional sparse piecewise stationary VAR model.  Numerical experiments in three simulation settings and two real data applications corroborate these theoretical findings. In particular, in both real data examples considered, the break points detected using the proposed method are in agreement with the nature of the data sets. 

When the total number of break points, $ m_0 $, is finite, the rate of consistency for detecting break point locations relative to the sample size $ T $ depends on three factors: (1) the number of time points, $ T $, (2) the number of time series, $ p $, and (3) the total sparsity of the model, $ d_n^\star $. In the univariate case, \cite{Chan_2014} obtained a consistency rate of order $ ({\log n})/{n} $. In the high-dimensional case, the rate shown here is of order $ ({{d_n^\star}^2 \log n \log p })/{n} $. The $\log p$ and $d_n^\star$ factors in this rate highlight the challenges of change point detection in high dimensions. 
The proposed procedure also allows the number of break points to increase with the sample size, as long as the minimum distance between consecutive break points is large enough, as characterized by Assumptions A3 and A4. 

A limitation of the proposed procedure is the need to select multiple tuning parameter. Among these, selecting the penalty parameter for the second stage estimator \eqref{eq_estimation_second} can be challenging in practice. In the numerical studies in this paper, a simplified version of this tuning parameter was used. However, this simplified version does not guarantee optimal rates of consistency for break point estimation. Investigating optimal choices of tuning parameters for the proposed procedure can be a fruitful area of future research. 
Theoretical analysis of data-driven choices of tuning parameters discussed in Section~\ref{sec:tuningselection} can be another future of direction research. 

\bibliography{Safikhani_Reference}

\clearpage
\section*{Appendix}\label{sec:appendix}
In Appendix~A, we collect technical lemmas needed to prove the main results. Proofs of the main results are given Appendix~B. 
{Details of the algorithm for solving the optimization problem \eqref{eq_estimation} are given in Appendix~C. Finally, further details on simulations settings, and additional simulation results are reported in Appendix~D.}

\subsection*{Appendix~A: Technical Lemmas}

\begin{lemma}\label{lemma_first}
There exist constants $ c_i > 0 $ such that for $ n \geq c_0 \left( \log(n) + 2\log(p) + \log(q)  \right) $, with probability at least $ 1 - c_1 \exp \left( - c_2 \left( \log(n) + 2\log(p) + \log(q)  \right) \right) $, we have 
\begin{equation}
\left|\left|  \frac{\textbf{Z}^\prime \textbf{E} }{n}  \right|\right|_\infty \leq c_3 \sqrt{\frac{\log(n) + 2\log(p) + \log(q)}{n}}
\end{equation}
\end{lemma}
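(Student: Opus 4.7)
The strategy is to reduce $\|\mathbf{Z}^\prime \mathbf{E}/n\|_\infty$ to a maximum over $np^2 q$ scalar deviation problems and to control each via a sub-exponential concentration inequality tailored to the piecewise-stationary Gaussian VAR. First, using the Kronecker identity $\mathbf{Z} = I_p \otimes \mathcal{X}$ together with $\mathbf{E} = \mathrm{vec}(E)$, I would rewrite
\[
\mathbf{Z}^\prime \mathbf{E}/n \;=\; \mathrm{vec}\!\bigl(\mathcal{X}^\prime E / n\bigr),
\]
so that $\|\mathbf{Z}^\prime \mathbf{E}/n\|_\infty$ equals the entrywise maximum of $\mathcal{X}^\prime E / n$. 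The total number of scalar entries is the dimension of $\mathbf{Z}^\prime \mathbf{E}$, which is $\pi = np^2 q$, matching the logarithmic factors in the target bound.

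Second, I would write each entry as $\frac{1}{n}\sum_{t=q}^{T} \mathcal{X}_{t-q+1,k}\,\varepsilon_{t,\ell}$, where by the block-lower-triangular structure of $\mathcal{X}$ in \eqref{regression} the design entry $\mathcal{X}_{t-q+1,k}$ is either zero or equals some lagged component $y_{t-h,m}$ with $1 \le h \le q$. In either case, $\mathcal{X}_{t-q+1,k}$ is measurable with respect to the filtration generated by $\{\varepsilon_s : s \le t-1\}$, so each summand is a martingale difference and, being a product of two (possibly dependent across $t$) Gaussians, is sub-exponential. Under A1, the spectral density of the stacked process $(y_t,\varepsilon_t)$ is uniformly bounded in operator norm by $\mathcal{M}^\star := \max_{1 \le j \le m_0+1} \mathcal{M}(f_j) < \infty$. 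Hence, applying a Hanson–Wright-type inequality (for example, Proposition~2.4 of \citet{Basu_2015}) to the quadratic form representation of $\sum_t \mathcal{X}_{t-q+1,k}\,\varepsilon_{t,\ell}$ yields, for each fixed $(k,\ell)$,
\[
\mathbb{P}\!\left( \left| \tfrac{1}{n}\textstyle\sum_t \mathcal{X}_{t-q+1,k}\,\varepsilon_{t,\ell} \right| > \tau \right) \;\le\; c_1 \exp\!\bigl(-c_2\, n \min(\tau,\tau^2)\bigr),
\]
with constants depending only on $\mathcal{M}^\star$ and $\max_j \|\Sigma_j\|$.

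Third, I would handle the piecewise-stationary nature of $\mathcal{X}$ by splitting the sum across the (at most $m_0+1$) stationary segments and applying the above inequality to each; the uniformity of the spectral-density bound across segments keeps the constants fixed, while joining the pieces only multiplies the failure probability by $m_0+1$, which is absorbed into the constants. Then a union bound over the $np^2 q$ entries with the choice $\tau = c_3\sqrt{(\log n + 2\log p + \log q)/n}$ gives a failure probability of at most $c_1 \exp(-c_2(\log n + 2\log p + \log q))$, provided the quadratic branch of the exponent dominates, i.e.\ provided $n \ge c_0(\log n + 2\log p + \log q)$, which is the sample size hypothesis of the lemma.

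The main obstacle will be rigorously justifying the uniform sub-exponential tail across both entries and segments. The Basu–Michailidis bound is stated for a single stationary VAR, so the task is to verify that its proof mechanism—Gaussian quadratic concentration together with a spectral-norm bound on the covariance operator—carries over to the block design $\mathcal{X}$ arising from piecewise stationarity. The uniform bound $\max_j \mathcal{M}(f_j) < \infty$ and the independence of $\varepsilon_t$ from the strict past make this essentially a concatenation argument, but it must be executed carefully so that constants do not depend on $n$, $p$, or $m_0$.
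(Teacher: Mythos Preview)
Your proposal is correct and follows essentially the same route as the paper: reduce $\mathbf{Z}'\mathbf{E}/n$ to $\mathrm{vec}(\mathcal{X}'E/n)$, bound each of the $np^2q$ scalar entries via the sub-exponential concentration of Proposition~2.4(b) in \citet{Basu_2015} using $\mathrm{cov}(y_{q+t},\varepsilon_{q+t+l})=0$, then union-bound with $\tau \asymp \sqrt{(\log n + 2\log p + \log q)/n}$. Your explicit segment-splitting step to handle piecewise stationarity is more careful than the paper, which simply invokes ``an argument similar to'' the Basu--Michailidis bound without spelling this out.
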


\begin{proof}
Note that $ \frac{1}{n} \textbf{Z}^\prime \textbf{E} = \frac{1}{n} (I_p \otimes \mathcal{X}^\prime) \textbf{E} = \mbox{vec} (\mathcal{X}^\prime E)/n $. Let $ \mathcal{X}(h,.) $ and $ \mathcal{X}(h,l) $ be the $ h$-th block column and the $ l$-th column of the $ h$-th block column of $ \mathcal{X} $, respectively, for $ 1 \leq h \leq n $, $ 1 \leq l \leq d $. More specifically, 

\begin{equation}
\mathcal{X}(h,.) = \begin{pmatrix}  & 0 & \\ & \vdots & \\ & 0 & \\  y_{q+h-2}^\prime & \ldots & y_{h-1}^\prime \\ & \vdots & \\  y_{T-1}^\prime & \ldots & y_{T-q}^\prime \end{pmatrix}_{n \times pq},  \hspace{1cm} \mathcal{X}(h,l) = \begin{pmatrix} 0 \\  \vdots \\ 0 \\  y_{q+h-l-1}^\prime \\  \vdots  \\ y_{T-l}^\prime \end{pmatrix}_{n \times p}. 
\end{equation} 
Now, 
\begin{equation}
\left\|  \frac{\textbf{Z}^\prime \textbf{E} }{n}  \right\|_\infty = \max_{1 \leq h \leq n, 1 \leq l \leq d, 1 \leq i,j \leq p} \left|  e_i^\prime \left( \frac{\mathcal{X}^\prime (h,l) E }{n}  \right) e_j  \right|,
\end{equation}
where $ e_i \in \mathbb{R}^p $ with the $ i$-th element equals to 1 and zero on the rest. Note that,
$$  \frac{\mathcal{X}^\prime (h,l) E }{n} = \frac{1}{n} \sum_{t=h-l-1}^{T-q-l} y_{q+t} \varepsilon^\prime_{q+t+l}. $$

Now, since $ \mbox{cov} (y_{q+t},  \varepsilon_{q+t+l}) = 0 $ for all $ t, l, h $, an  argument similar to Proposition~2.4(b) of \cite{Basu_2015} shows that for fixed $ i, j, h, l $, there exist $ k_1, k_2 > 0 $ such that for all $ \eta > 0 $:
$$ 
\mathbb{P} \left( \left|  e_i^\prime \left( \frac{\mathcal{X}^\prime (h,l) E }{n}  \right) e_j  \right| > k_1 \eta  \right) \leq 6 \exp \left(  - k_2 n \min (\eta, \eta^2)  \right). 
$$

The result follows by setting $ \eta = k_3  \sqrt{\frac{\log(n) + 2\log(p) + \log(q)}{n}} $ for a large enough $ k_3 > 0 $, and taking the union over the $ \pi = n p^2 q $ possible choices of $ i, j, h, l $.  
\end{proof}

\begin{lemma}\label{lemma_KKT}
Let $ \widehat{\mathbf{\Theta}} $ be defined as in \eqref{eq_estimation}. Then, under the assumptions of Theorem~\ref{thm_pred_error}:
\begin{equation}
\sum_{l = \widehat{t}_j}^{n} Y_{l-1} \left( y_l^\prime - Y_{l-1}^\prime \sum_{i=1}^{l} \widehat{\theta}_i^\prime \right) = \frac{n \lambda_n}{2} \mbox{sign} (\widehat{\theta}_{\widehat{t}_j}^\prime), \hspace{1cm} \mbox{for} \,\,\,  j = 1, 2, ..., \widehat{m},
\end{equation}
where $ Y_l^\prime = \left( y_l^\prime \ldots y_{l-q+1}^\prime  \right)_{1 \times pq} $, and 
\begin{equation}
\left|\left|  \sum_{l = j}^{n} Y_{l-1} \left( y_l^\prime - Y_{l-1}^\prime \sum_{i=1}^{l} \widehat{\theta}_i^\prime \right) \right|\right|_\infty \leq \frac{n \lambda_n}{2}, \hspace{1cm} \mbox{for} \,\,\,  j = q-1, 2, ..., n.
\end{equation}
Moreover, $ \sum_{i=1}^{t} \widehat{\theta}_i = \widehat{\Phi}^{(.,j)} $ for $ \widehat{t}_{j-1} \leq t \leq \widehat{t}_j - 1$, $ j = 1, 2, ..., |  \mathcal{A}_n| $. 
\end{lemma}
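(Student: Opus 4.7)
The plan is to read the first two displayed statements as the first-order optimality (KKT) conditions for the convex program \eqref{eq_estimation}, and the third as a direct consequence of the parameterization. The key structural fact to exploit is the block-triangular pattern of the design matrix $\textbf{Z}$ in \eqref{regression}: the block of parameters $\theta_k$ appears in row $l$ of $\textbf{Z}\mathbf{\Theta}$ if and only if $l \geq k$. Rewriting the loss row by row gives $n^{-1}\|\textbf{Y}-\textbf{Z}\mathbf{\Theta}\|_2^2 = n^{-1}\sum_{l=1}^{n} \|y_l - Y_{l-1}^\prime \sum_{i=1}^{l} \theta_i^\prime\|_2^2$, which makes the dependence on each $\theta_k$ completely explicit.

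First, I would differentiate the loss with respect to $\theta_k^\prime$. The triangular dependence above collapses the sum to $l \geq k$ and yields the matrix gradient $-\tfrac{2}{n}\sum_{l=k}^{n} Y_{l-1}\bigl(y_l^\prime - Y_{l-1}^\prime \sum_{i=1}^{l} \widehat{\theta}_i^\prime\bigr)$, a $pq \times p$ object that is exactly $-\tfrac{2}{n}$ times the left-hand side of both displayed formulas. The subdifferential of $\lambda_n \|\theta_k\|_1$ is coordinatewise: it equals $\lambda_n\,\mbox{sign}(\theta_{k,ij})$ where $\theta_{k,ij}\neq 0$, and is the interval $[-\lambda_n,\lambda_n]$ where $\theta_{k,ij}=0$. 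Forcing zero into the subdifferential at the minimizer gives the exact equality whenever $\widehat{\theta}_k \neq 0$, in particular at $k=\widehat{t}_j$, which is the first displayed statement. Applying the uniform bound $\|\mbox{sign}(\cdot)\|_\infty \leq 1$ then upgrades this to the $\ell_\infty$ inequality of the second statement, valid at every index $k$ regardless of whether $\widehat{\theta}_k$ is zero.

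Third, the last identity requires no optimization at all: by the estimator's construction $\widehat{\theta}_i = 0$ for every $i \notin \widehat{\mathcal{A}}_n$, so for any $t$ with $\widehat{t}_{j-1}\leq t \leq \widehat{t}_j - 1$ the partial sum $\sum_{i=1}^{t} \widehat{\theta}_i$ picks up exactly the jumps at $\widehat{t}_1,\ldots,\widehat{t}_{j-1}$ and therefore agrees with $\widehat{\Phi}^{(.,j)}$ via the definition \eqref{eqn:psi}.

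I expect the only nontrivial subtlety to be the presence of the second penalty $\lambda_{2,n}\sum_{k=1}^{n}\|\sum_{j=1}^{k}\theta_j\|_1$ in \eqref{eq_estimation}, which would ordinarily contribute an additional term to the subdifferential. The resolution is provided by the remark immediately following \eqref{eq_estimation}: by the Friedman--Hastie result, the minimizer for $\lambda_{2,n}\neq 0$ is obtained from the $\lambda_{2,n}=0$ solution by entrywise soft-thresholding, so the KKT analysis above is exactly what governs the underlying pure $\ell_1$ problem and $\lambda_n$ in the lemma plays the role of $\lambda_{1,n}$. The remaining obstacle is mostly bookkeeping: keeping the transposes and the $pq\times p$ block dimensions straight throughout, since all of the KKT quantities are matrix-valued rather than vector-valued.
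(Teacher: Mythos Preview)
Your proposal is correct and takes the same approach as the paper: the paper's own proof is the single sentence ``The result follows directly from the KKT condition of the optimization problem \eqref{eq_estimation},'' and your write-up simply unpacks that line, computing the block-gradient of the loss and matching it against the $\ell_1$ subdifferential. Your added discussion of the $\lambda_{2,n}$ contribution (and its resolution via the Friedman--Hastie soft-thresholding remark) is in fact more careful than the paper, which leaves that point implicit.
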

\begin{proof}
The result follows directly from the KKT condition of the optimization problem \eqref{eq_estimation}.
\end{proof}

\begin{lemma}\label{lemma_bound}
Under assumption A1, there exist constants $ c_i > 0 $ such that with probability at least $ 1 - c_1 \exp(-c_2 (\log(q) + 2 \log(p))  ) $, 

\begin{equation}\label{eq:lemma_bound_1}
\sup_{1 \leq j \leq m_0, s \geq t_j, |  t_j - s  |  > n \gamma_n  }   \left|\left|  {(t_j - s)}^{-1} \left( \sum_{l=s}^{t_j - 1} Y_{l-1} Y_{l-1}^\prime - \Gamma_j^q (0) \right)  \right|\right|_\infty \leq c_3 \sqrt{\frac{\log(q) + 2 \log(p)}{n \gamma_n}},
\end{equation}
where $ \Gamma_j^q (0) = \mathbb{E} (Y_{l-1} Y_{l-1}^\prime) $. Moreover,  
\begin{equation}\label{eq:lemma_bound_2}
\sup_{1 \leq j \leq m_0, s \geq t_j, |  t_j - s  |  > n \gamma_n  }  \left|\left|  {(t_j - s)}^{-1} \sum_{l =  s }^{t_j - 1} Y_{l-1} \varepsilon_l^\prime  \right|\right|_\infty \leq c_3 \sqrt{\frac{\log(q) + 2 \log(p)}{n \gamma_n}}.
\end{equation}

\end{lemma}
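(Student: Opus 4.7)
\textbf{Proof proposal for Lemma~\ref{lemma_bound}.} The plan is to establish both bounds by combining a pointwise concentration inequality for stationary Gaussian VAR processes (of the type used in Proposition~2.4 of \citet{Basu_2015}) with a union bound over matrix entries and over the admissible segment endpoints. Throughout, I will use the spectral density bounds in Assumption~A1 to control the sub-Gaussian-type parameters uniformly across the $m_0+1$ stationary regimes.

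First I would fix one regime index $j$, a single starting point $s$ with $N := t_j - s > n\gamma_n$, and a single pair of coordinate indices $(a,b)$ with $1\le a,b \le pq$. Since on the interval $[s, t_j-1]$ the lagged process $\{Y_{l-1}\}$ is stationary Gaussian with spectral density bounded above and below (by A1, giving $\mathcal{M}(f_j)$ and $\textbf{m}(f_j)$ finite and positive), a direct application of the Hanson--Wright-type bound used in Proposition~2.4(a) of \citet{Basu_2015} yields
\begin{equation*}
\mathbb{P}\!\left(\left|\,e_a^\prime\!\left[N^{-1}\sum_{l=s}^{t_j-1} Y_{l-1}Y_{l-1}^\prime - \Gamma_j^q(0)\right]\!e_b\right| > \eta\right) \leq k_1 \exp\!\bigl(-k_2 N \min(\eta,\eta^2)\bigr),
\end{equation*}
with constants $k_1,k_2$ depending only on $\max_j \mathcal{M}(f_j)$. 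The analogous bound for the cross term $N^{-1}\sum_l Y_{l-1}\varepsilon_l^\prime$ follows from Proposition~2.4(b) of \citet{Basu_2015}, using that $Y_{l-1}$ and $\varepsilon_l$ are uncorrelated at each time $l$; an argument very similar to the one already used in the proof of Lemma~\ref{lemma_first} above applies verbatim here.

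Next I would choose $\eta = c_3\sqrt{(\log q + 2\log p)/(n\gamma_n)}$. Since $N > n\gamma_n$, we obtain $N\eta^2 \ge c_3^2(\log q + 2\log p)$, so the pointwise tail bound becomes $k_1 \exp(-k_2 c_3^2(\log q + 2\log p))$. A union bound over the $(pq)^2 = p^2 q$ entries of the matrix then costs at most a factor of $p^2 q$, which is absorbed into the exponent by taking $c_3$ sufficiently large. Taking a further union bound over the admissible segment endpoints $s$ (at most $n$ choices) and over $1 \leq j \leq m_0 \le n$ costs at most a factor of $n^2$; this is again absorbed into the exponent by enlarging $c_3$, using that in the high-dimensional regime $\log n = O(\log p)$ (or otherwise by carrying an extra $\log n$ factor that is dominated by $\log p$). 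The result is the claimed uniform bound with probability at least $1 - c_1\exp(-c_2(\log q + 2\log p))$.

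The main obstacle will not be the concentration inequality itself, which is standard for stationary Gaussian VARs with bounded spectral density, but rather two bookkeeping issues. First, the supremum in \eqref{eq:lemma_bound_1}--\eqref{eq:lemma_bound_2} ranges over all $s$ with $|t_j-s| > n\gamma_n$, so I need the cardinality of this index set to be polynomial in $n$ in order to absorb it into the exponent via the union bound; this is immediate. Second, the constants in the Basu--Michailidis bound depend on the spectral-density quantity $\mathcal{M}(f_j)$ of the $j$-th regime, and since the supremum is taken across all $m_0$ regimes, I must use the uniform upper bound $\max_{1\le j\le m_0+1}\mathcal{M}(f_j) < \infty$ provided by A1 so that a single pair $(k_1,k_2)$ works for every $j$. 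With those two ingredients in hand, enlarging $c_3$ as needed delivers the stated exponential probability bound.
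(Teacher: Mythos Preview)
Your proposal is correct and follows essentially the same approach as the paper: invoke the entrywise concentration bounds of Proposition~2.4 in \citet{Basu_2015} for each stationary regime, choose $\eta$ proportional to $\sqrt{(\log q + 2\log p)/(n\gamma_n)}$, and take a union bound over matrix entries. Two minor remarks: the covariance matrix $Y_{l-1}Y_{l-1}^\prime$ is $pq\times pq$, so there are $(pq)^2 = p^2 q^2$ entries rather than $p^2 q$ (the cross term $Y_{l-1}\varepsilon_l^\prime$ is $pq\times p$ with $p^2 q$ entries), but this extra $\log q$ factor is harmless. Also, you are more explicit than the paper in taking a union bound over the endpoint $s$ and regime index $j$; the paper absorbs this into the constants without spelling it out, whereas you invoke $\log n = O(\log p)$, which is consistent with the high-dimensional regime the paper works in.
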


\begin{proof} 
The proof of this lemma is similar to that of Proposition~2.4 in \cite{Basu_2015}. Here we briefly outline the main steps of the proof, while omitting the details. For \eqref{eq:lemma_bound_1}, note that using an argument similar to  Proposition~2.4(a) in \cite{Basu_2015}, there exist $ k_1, k_2 > 0 $ such that for each fixed $ k, l = 1, \cdots, pq $,
\begin{equation}
\mathbb{P} \left( \left|  e_k^\prime  \frac{ \sum_{l=s}^{t_j - 1} Y_{l-1} Y_{l-1}^\prime - \Gamma_j^q (0) }{t_j - s} e_l  \right| > k_1 \eta \right) \leq 6 \exp (-k_2 n \gamma_n \min(\eta, \eta^2) ).
\end{equation}
Setting $ \eta = k_3 \sqrt{\frac{\log(q p^2)}{n \gamma_n}} $, and taking union over all possible values of $ k, l $, we obtain \eqref{eq:lemma_bound_1}. 

The proof for \eqref{eq:lemma_bound_2}, is similar to Lemma~\ref{lemma_first}. Again, there exist $ k_1, k_2 > 0 $ such that for each fixed $ k = 1, ..., pq $, $ l = 1, ..., p $,
\begin{equation}
\mathbb{P} \left( \left|  e_k^\prime  \frac{ \sum_{l =  s }^{t_j - 1} Y_{l-1} \varepsilon_l^\prime  }{t_j - s} e_l  \right| > k_1 \eta \right) \leq 6 \exp (-k_2 n \gamma_n \min(\eta, \eta^2) ).
\end{equation}
Setting $ \eta = k_3 \sqrt{\frac{\log(q p^2)}{n \gamma_n}} $, and taking union over all possible values of $ k, l $, we get:
\begin{equation}
 \left|\left|  {(t_j - s)}^{-1} \left( \sum_{l=s}^{t_j - 1} Y_{l-1} Y_{l-1}^\prime - \Gamma_j^q (0) \right)  \right|\right|_\infty \leq c_3 \sqrt{\frac{\log(q) + 2 \log(p)}{n \gamma_n}},
\end{equation}
and 
\begin{equation}
 \left|\left|  {(t_j - s)}^{-1} \sum_{l =  s }^{t_j - 1} Y_{l-1} \varepsilon_l^\prime  \right|\right|_\infty \leq c_3 \sqrt{\frac{\log(q) + 2 \log(p)}{n \gamma_n}},
\end{equation}
with high probability converging to 1 for any $ j = 1, 2, \cdots, m_0 $, as long as $ |  t_j - s  |  > n \gamma_n $ and $ s \geq t_{j-1} $. 
Note that the constants $ c_1, c_2$ and $c_3 $ can be chosen large enough  such that the upper bounds above would be independent of the break point $ t_i $. Therefore, we have the desired upper bounds verified with probability at least $ 1 - c_1 \exp(-c_2 (\log(q) + 2 \log(p))  ) $. 
\end{proof}

\begin{lemma}\label{lemma_selection}
Under the assumptions of Theorem~\ref{thm_selection}, for $ m < m_0 $, there exist constants $ c_1, c_2 > 0 $ such that:

\begin{equation}\label{eq_lower_bound}
\mathbb{P} \left( \min_{(s_1, ..., s_m) \subset \lbrace 1, ..., T \rbrace} L_n(s_1, s_2, ..., s_m; \eta_n)  >  \sum_{t=q}^{T} || \varepsilon_t ||_2^2  + c_1 \Delta_n - c_2 m n \gamma_n {d_n^\star}^2 \right) \rightarrow 1,
\end{equation}
where $ \Delta_n =  \min_{1 \leq j \leq m_0+1} | t_j - t_{j-1} |  $. 
\end{lemma}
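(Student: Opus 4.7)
The plan is to exploit a pigeonhole argument: because $m < m_0$, every candidate configuration $(s_1, \dots, s_m)$ (with $s_0 = q$, $s_{m+1} = T$) must place at least one candidate segment $[s_{i^*-1}, s_{i^*})$ that entirely engulfs two consecutive true break points $t_{j^*} < t_{j^*+1}$; by Assumption~A3 this segment then contains a homogeneous sub-interval $J^* := [t_{j^*}, t_{j^*+1})$ of length at least $\Delta_n$ on which the data is driven by $\Phi^{(\cdot, j^*+1)}$, as well as a second homogeneous sub-interval (also of length $\gtrsim \Delta_n$) driven by some other $\Phi^{(\cdot, k)}$. No single fitted matrix $\hat\theta := \hat\theta_{(s_{i^*-1}, s_{i^*})}$ can be simultaneously close to two matrices whose operator-norm gap is at least $v$, so by the triangle inequality $\|\Phi^{(\cdot, j^*+1)} - \hat\theta\|_2 \geq v/2$ or $\|\Phi^{(\cdot, k)} - \hat\theta\|_2 \geq v/2$.

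The next step is to lower-bound the sum-of-squares on whichever of the two homogeneous sub-intervals is poorly approximated. Because the process is stationary there, Assumption~A1 gives a strictly positive bound on $\Lambda_{\min}(\Gamma^q(0))$, and Lemma~\ref{lemma_bound} yields concentration of the sample second-moment matrix around $\Gamma^q(0)$ at the sub-interval scale (which is $\gg n\gamma_n$ under A4). Combining these gives a restricted-eigenvalue-type inequality $\sum_{t} \|(\Phi^{(\cdot, \cdot)} - \hat\theta) Y_{t-1}\|_2^2 \geq c \Delta_n \|\Phi^{(\cdot, \cdot)} - \hat\theta\|_F^2 \geq c' v^2 \Delta_n =: c_1 \Delta_n$ with high probability. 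Expanding $\|y_t - \hat\theta Y_{t-1}\|_2^2 = \|\varepsilon_t\|_2^2 + \|(\Phi^{(\cdot,\cdot)} - \hat\theta) Y_{t-1}\|_2^2 + 2\varepsilon_t^\prime (\Phi^{(\cdot,\cdot)} - \hat\theta) Y_{t-1}$ and summing over $J^*$ produces the desired $\sum_t \|\varepsilon_t\|_2^2 + c_1 \Delta_n$ core. On every other candidate segment, the same decomposition, applied regime by regime, lower bounds the contribution by the corresponding $\sum_t \|\varepsilon_t\|_2^2$ up to cross-term error.

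The cross terms $\sum_t \varepsilon_t^\prime (\Phi^{(\cdot,\cdot)} - \hat\theta) Y_{t-1}$ are bounded by $\|\Phi - \hat\theta\|_1 \, \|\sum_t Y_{t-1}\varepsilon_t^\prime\|_\infty$. The deviation factor is controlled uniformly in $(s_1,\dots,s_m)$ by a union-bounded version of Lemma~\ref{lemma_bound}, while $\|\Phi - \hat\theta\|_1 \leq 2 M_\Phi d_n^\star$ by Assumption~A2 and the analogous crude bound on $\hat\theta$. The $L_1$ penalty in $L_n$ contributes at most $\sum_{i=1}^{m+1} \eta_{(s_{i-1}, s_i)} M_\Phi d_n^\star$, which, under the three rates prescribed in Assumption~A5, is absorbed together with the cross-term bound into the slack term $c_2 m n \gamma_n (d_n^\star)^2$. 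A union bound over the $O(T^{m_0})$ candidate configurations still gives vanishing failure probability under the scaling imposed by A4, since the deviation bounds in Lemma~\ref{lemma_bound} decay exponentially in $n\gamma_n$.

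The main obstacle is the first half of the second step: because $\hat\theta_{(s_{i^*-1}, s_{i^*})}$ is not the ordinary least-squares estimator on the offending segment but rather the $L_1$-penalized joint minimizer coupling all candidate segments through \eqref{eq_estimation_second}, one cannot invoke a closed-form projection argument as in \citet[Lemma 6.4]{Chan_2014}. Instead, a Karush--Kuhn--Tucker characterization analogous to Lemma~\ref{lemma_KKT} must be combined with the restricted eigenvalue argument on $J^*$ to force $\|\Phi^{(\cdot,\cdot)} - \hat\theta\|_F \geq v/2$. The KKT manipulation is itself delicate because on candidate segments shorter than $n\gamma_n$ the restricted eigenvalue condition need not hold, so those segments cannot be analyzed pathwise and must be handled by absorbing their contribution directly into the $L_1$-penalty slack---this is precisely why Assumption~A5 distinguishes the three cases (a)--(c), and why the lemma's bookkeeping term carries the $(d_n^\star)^2$ factor rather than just $d_n^\star$.
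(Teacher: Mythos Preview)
Your overall strategy---expand $\|y_t-\hat\theta Y_{t-1}\|_2^2$, use the triangle inequality to force $\|\Phi^{(\cdot,j)}-\hat\theta\|_2\ge v/4$ on at least one long homogeneous sub-interval, and harvest a $c_1\Delta_n$ gain via a restricted-eigenvalue lower bound---is exactly the paper's case~(c) argument. Two points need correction.

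First, the pigeonhole claim is misstated. When $m<m_0$ it is \emph{not} true that some candidate segment must engulf two consecutive true break points: take $m_0=2$, $m=1$, $s_1=t_1$; then neither candidate segment contains two break points. The correct (and weaker) statement, which the paper uses, is that some $t_{j}$ satisfies $\min_i |s_i - t_j| > \Delta_n/4$, so the candidate segment containing $t_j$ has homogeneous pieces of length $\ge \Delta_n/4$ on both sides of $t_j$. Your conclusion about two long sub-intervals is still salvageable, but the route to it is this isolation argument, not the engulfing one.

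Second, and more seriously, your control of the cross term via the ``crude bound'' $\|\Phi-\hat\theta\|_1\le 2M_\Phi d_n^\star$ is not justified: nothing bounds $\|\hat\theta\|_1$ a~priori, and even if it did, the resulting error $O(d_n^\star\sqrt{n\log p})$ is too large to be absorbed into $c_2 m n\gamma_n(d_n^\star)^2$. The paper's mechanism is different: from the basic lasso inequality (comparing the penalized objective at $\hat\theta$ and at $\Phi^{(\cdot,j+1)}$, with the tuning rate in A5(c) calibrated to dominate the deviation plus the misspecification bias $M_\Phi d_n^\star$), one obtains the \emph{cone condition} $\|\Phi^{(\cdot,j+1)}-\hat\theta\|_1\le 4\sqrt{d_n^\star}\,\|\Phi^{(\cdot,j+1)}-\hat\theta\|_2$, and likewise for $\Phi^{(\cdot,j)}$. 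This converts the $\ell_1$ cross term into an $\ell_2$ one that can be directly balanced against the quadratic gain, yielding the completed-square lower bound $c|I|\,\|\cdot\|_2\bigl(\|\cdot\|_2 - c''\sqrt{d_n^\star\log p/|I|}\bigr)$. Note this is precisely where the specific three-case rates for $\eta$ in A5 enter, not merely as ``penalty slack'' but as the calibration that makes the cone condition hold despite the segment being misspecified. Your last paragraph misplaces the obstacle: the triangle inequality alone gives $\|\Phi-\hat\theta\|_2\ge v/4$ with no KKT needed; the basic-inequality argument is needed for the cone condition, not for that lower bound.
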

\begin{proof} 
Since $ m < m_0 $, there exists a point $ t_j $ such that $ | s_i - t_j | > \Delta_n/4 $. In order to find a lower bound on the sum of the least squares, we consider three different cases: (a) $ | s_i - s_{i-1} | \leq n \gamma_n $; (b) there exist two true break points $ t_j, t_{j+1} $ such that $ | s_{i-1} - t_j | \leq n \gamma_n $ and $ | s_i - t_{j+1} | \leq n \gamma_n $; and (c) otherwise. The idea is to find a lower bound for the sum of squared errors plus the penalty term for each case. Here, we consider only one candidate for each case. The general case can be argued similarly, but omitted here to avoid complex notations. 

Denote the estimated parameter in each of the estimated segments below by $ \widehat{\theta} $.
For case (a), consider the case where the interval $ (s_{i-1}, s_i) $ is inside a true segment. In other words, suppose there exists $ j $ such that $ t_j < s_{i-1} < s_i < t_{j+1} $. Now, 
\begin{eqnarray}
\sum_{t=s_{i-1}}^{s_i-1} || y_t - \widehat{\theta} Y_{t-1}   ||_2^2 &=& \sum_{t=s_{i-1}}^{s_i-1} || \varepsilon_t ||_2^2 + \sum_{t=s_{i-1}}^{s_i-1} || (\Phi^{(.,j+1)} -  \widehat{\theta} )  Y_{t-1}  ||_2^2 \nonumber \\
& + & 2 \sum_{t=s_{i-1}}^{s_i-1} Y_{t-1}^\prime (\Phi^{(.,j+1)} -  \widehat{\theta} )^\prime \varepsilon_t \nonumber \\
& \geq & \sum_{t=s_{i-1}}^{s_i-1} || \varepsilon_t ||_2^2 - \left|  2 \sum_{t=s_{i-1}}^{s_i-1} Y_{t-1}^\prime (\Phi^{(.,j+1)} -  \widehat{\theta} )^\prime \varepsilon_t \right| \nonumber \\ 
& \geq & \sum_{t=s_{i-1}}^{s_i-1} || \varepsilon_t ||_2^2 - c \sqrt{n \gamma_n \log p} || \Phi^{(.,j+1)} -  \widehat{\theta} ||_1.
\end{eqnarray}
Therefore, given the tuning parameter selected based on Assumption~A4, we have:
\begin{equation}
\sum_{t=s_{i-1}}^{s_i-1} || y_t - \widehat{\theta} Y_{t-1}   ||_2^2 + \eta_{(s_{i-1}, s_i)} || \widehat{\theta} ||_1 \geq \sum_{t=s_{i-1}}^{s_i-1} || \varepsilon_t ||_2^2 - c \sqrt{n \gamma_n \log p} || \Phi^{(.,j+1)} ||_1.
\end{equation}

For case (b), consider the case where $ s_{i-1} < t_j $ and $ s_i < t_{j+1} $. Now, similar arguments as in Proposition~4.1 of \citet{Basu_2015} show that by the tuning parameter selected based on A4(b), we have:
\begin{equation}\label{eq_converge}
|| \Phi^{(.,j+1)} -  \widehat{\theta} ||_1 \leq 4 \sqrt{d_n^\star} || \Phi^{(.,j+1)} -  \widehat{\theta} ||_2, \,\,\, \mbox{and} \,\,\, || \Phi^{(.,j+1)} -  \widehat{\theta} ||_2 \leq c_3 \sqrt{d_n^\star} \eta_{(s_{i-1}, s_i)}.
\end{equation}
To see this, observe that $ \widehat{\theta} $ in \eqref{eq_estimation_second} minimizes the least squares plus the $ \ell_1 $ norm loss function. Therefore, the value of this objective function for $ \widehat{\theta} $ will be smaller than any other choice of parameters, including $ \Phi^{(.,j+1)} $. Hence, 
\begin{eqnarray}
\frac{1}{s_i - s_{i-1}} \sum_{t=s_{i-1}}^{s_i-1} || y_t - \widehat{\theta} Y_{t-1}   ||_2^2 + \eta_{(s_{i-1}, s_i)} || \widehat{\theta} ||_1 &\leq & \frac{1}{s_i - s_{i-1}} \sum_{t=s_{i-1}}^{s_i-1} || y_t - \Phi^{(.,j+1)} Y_{t-1}   ||_2^2 \nonumber \\
& + &  \eta_{(s_{i-1}, s_i)} || \Phi^{(.,j+1)} ||_1.
\end{eqnarray}
Some rearrangements lead to:
\begin{eqnarray}\label{eq_Basu}
0 \leq c^\prime ||  \Phi^{(.,j+1)} -  \widehat{\theta} ||_2^2 
& \leq & \frac{1}{s_i - s_{i-1}} \sum_{t=s_{i-1}}^{s_i-1} Y_{t-1}^\prime {\left( \Phi^{(.,j+1)} -  \widehat{\theta} \right)}^\prime \left( \Phi^{(.,j+1)} -  \widehat{\theta} \right) Y_{t-1} \nonumber \\
& \leq & \frac{2}{s_i - s_{i-1}} \sum_{t=s_{i-1}}^{s_i-1} Y_{t-1}^\prime {\left( \Phi^{(.,j+1)} -  \widehat{\theta} \right)}^\prime \left( y_t - \Phi^{(.,j+1)} Y_{t-1} \right) \nonumber \\
& \hspace{1cm} + &  \eta_{(s_{i-1}, s_i)} \left( || \Phi^{(.,j+1)} ||_1 - || \widehat{\theta} ||_1 \right) \nonumber \\
& \leq & \left( c \sqrt{\frac{\log p}{s_i - s_{i-1}}} + M_\Phi d_n^\star \frac{n \gamma_n}{s_i - s_{i-1}}  \right) || \Phi^{(.,j+1)} -  \widehat{\theta} ||_1 \nonumber \\
& \hspace{1cm} + & \eta_{(s_{i-1}, s_i)} \left( || \Phi^{(.,j+1)} ||_1 - || \widehat{\theta} ||_1 \right) \nonumber \\
& \leq & \frac{\eta_{(s_{i-1}, s_i)}}{2} || \Phi^{(.,j+1)} -  \widehat{\theta} ||_1 + \eta_{(s_{i-1}, s_i)} \left( || \Phi^{(.,j+1)} ||_1 - || \widehat{\theta} ||_1 \right) \nonumber \\
& \leq & \frac{3 \eta_{(s_{i-1}, s_i)}}{2} || \Phi^{(.,j+1)} -  \widehat{\theta} ||_{1,\mathcal{I}} - \frac{ \eta_{(s_{i-1}, s_i)}}{2} || \Phi^{(.,j+1)} -  \widehat{\theta} ||_{1,\mathcal{I}^c} \nonumber \\
& \leq & 2 \eta_{(s_{i-1}, s_i)} || \Phi^{(.,j+1)} -  \widehat{\theta} ||_{1}.
\end{eqnarray}
This ensures that $ || \Phi^{(.,j+1)} -  \widehat{\theta} ||_{1,\mathcal{I}^c} \leq 3 || \Phi^{(.,j+1)} -  \widehat{\theta} ||_{1,\mathcal{I}} $, and hence $ || \Phi^{(.,j+1)} -  \widehat{\theta} ||_{1} \leq 4 || \Phi^{(.,j+1)} -  \widehat{\theta} ||_{1,\mathcal{I}} \leq 4 \sqrt{d_n^\star} || \Phi^{(.,j+1)} -  \widehat{\theta} ||_{2} $. This comparison between $ L_1 $ and $ L_2 $ norms of the error term together with the bound in Equation~\ref{eq_Basu} will get the desired consistency rates in \eqref{eq_converge}. 

Similar to case (a), using lemma (\ref{lemma_bound}), we have:
\begin{eqnarray}\label{eq:31}
\sum_{t=t_j}^{s_i-1} || y_t - \widehat{\theta} Y_{t-1}   ||_2^2 & \geq & \sum_{t=t_j}^{s_i-1} || \varepsilon_t ||_2^2 + c | s_i - t_j | \, || \Phi^{(.,j+1)} -  \widehat{\theta} ||_2^2 - c^\prime \sqrt{| s_i - t_j | \log p} \, || \Phi^{(.,j+1)} -  \widehat{\theta} ||_1 \nonumber \\
& \geq &  \sum_{t=t_j}^{s_i-1} || \varepsilon_t ||_2^2 + c | s_i - t_j | \, || \Phi^{(.,j+1)} -  \widehat{\theta} ||_2 \left( || \Phi^{(.,j+1)} -  \widehat{\theta} ||_2 - \frac{c^\prime}{c} \sqrt{\frac{ d_n^\star \log p }{|s_i - t_j|}} \right) \nonumber \\
& \geq & \sum_{t=t_j}^{s_i-1} || \varepsilon_t ||_2^2 - c^\prime d_n^\star \log p.
\end{eqnarray}
Also, for the interval $ ( s_{i-1}, t_j ) $, we have:
\begin{eqnarray}\label{eq:32}
\sum_{t=s_{i-1}}^{t_j-1} || y_t - \widehat{\theta} Y_{t-1}   ||_2^2 & \geq & \sum_{t=s_{i-1}}^{t_j-1} || \varepsilon_t ||_2^2 - c^\prime \sqrt{n \gamma_n \log p} \, || \Phi^{(.,j)} -  \widehat{\theta} ||_1 \nonumber \\
& \geq & \sum_{t=s_{i-1}}^{t_j-1} || \varepsilon_t ||_2^2 - c^\prime \sqrt{n \gamma_n \log p} \, \left( || \Phi^{(.,j+1)} -  \widehat{\theta} ||_1 + || \Phi^{(.,j+1)} -  \Phi^{(.,j)} ||_1 \right) \nonumber \\
& \geq & \sum_{t=s_{i-1}}^{t_j-1} || \varepsilon_t ||_2^2 - c^\prime \sqrt{n \gamma_n \log p} \, \left( d_n^\star \eta_{(s_{i-1}, s_i)} + || \Phi^{(.,j+1)} -  \Phi^{(.,j)} ||_1 \right) \nonumber \\
& \geq & \sum_{t=s_{i-1}}^{t_j-1} || \varepsilon_t ||_2^2 - c^\prime  d_n^\star \sqrt{n \gamma_n \log p}.
\end{eqnarray}
Combining equations \eqref{eq:31} and \eqref{eq:32} gives:
\begin{equation}
\sum_{t=s_{i-1}}^{s_i-1} || y_t - \widehat{\theta} Y_{t-1}   ||_2^2  \geq  \sum_{t=s_{i-1}}^{s_i-1} || \varepsilon_t ||_2^2 - c^\prime  d_n^\star \sqrt{n \gamma_n \log p}.
\end{equation}

For case (c), consider the case where $ s_{i-1} < t_j < s_i $ with $ |s_{i-1} - t_j| > \Delta_n/4 $ and $ |s_{i} - t_j| > \Delta_n/4 $. Similar arguments as in Proposition~4.1 of \citet{Basu_2015} shows that :
\begin{equation}
|| \Phi^{(.,j+1)} -  \widehat{\theta} ||_1 \leq 4 \sqrt{d_n^\star} || \Phi^{(.,j+1)} -  \widehat{\theta} ||_2, \,\,\, \mbox{and} \,\,\, || \Phi^{(.,j)} -  \widehat{\theta} ||_1 \leq 4 \sqrt{d_n^\star} || \Phi^{(.,j)} -  \widehat{\theta} ||_2.
\end{equation}

Note that in this case, the restricted eigenvalue condition does not hold. Therefore, the convergence of the $ \widehat{\theta} $ cannot be verified. The reason is that in this case, the two parts of the true segments which intersect with the estimated segment have large lengths. If the length of one of them was negligible as compared to the other segment, one could still verify the restricted eigenvalue, but that's not the case here. However, the deterministic part of the deviation bound argument holds with the suitable choice of the tuning parameter. Now, similar to case (b), on both intervals $ (s_{i-1},t_j) $ and $ (t_j, s_i) $:
\begin{eqnarray}\label{eq:consequtive}
\sum_{t=s_{i-1}}^{t_j-1} || y_t - \widehat{\theta} Y_{t-1}   ||_2^2 
& \geq & \sum_{t=s_{i-1}}^{t_j-1} || \varepsilon_t ||_2^2 + c | t_j - s_{i-1} | \, || \Phi^{(.,j)} -  \widehat{\theta} ||_2^2 \nonumber \\
& \hspace{1cm} - & c^\prime \sqrt{| t_j - s_{i-1} | \log p} \, || \Phi^{(.,j)} -  \widehat{\theta} ||_1 \nonumber \\
& \geq &  \sum_{t=s_{i-1}}^{t_j-1} || \varepsilon_t ||_2^2 \nonumber \\
 & \hspace{-2.5cm} + & \hspace{-2cm} c | t_j - s_{i-1} | \, || \Phi^{(.,j)} -  \widehat{\theta} ||_2 \left( || \Phi^{(.,j)} -  \widehat{\theta} ||_2 - \frac{c^\prime}{c} \sqrt{\frac{ d_n^\star \log p }{|t_j - s_{i-1}|}} \right),
\end{eqnarray}
and 
\begin{eqnarray}
\sum_{t=t_j}^{s_i-1} || y_t - \widehat{\theta} Y_{t-1}   ||_2^2 & \geq & \sum_{t=t_j}^{s_i-1} || \varepsilon_t ||_2^2 + c | s_i - t_j | \, || \Phi^{(.,j+1)} -  \widehat{\theta} ||_2^2 - c^\prime \sqrt{| s_i - t_j | \log p} \, || \Phi^{(.,j+1)} -  \widehat{\theta} ||_1 \nonumber \\
& \hspace{-2cm} \geq & \hspace{-1cm} \sum_{t=t_j}^{s_i-1} || \varepsilon_t ||_2^2 + c | s_i - t_j | \, || \Phi^{(.,j+1)} -  \widehat{\theta} ||_2 \left( || \Phi^{(.,j+1)} -  \widehat{\theta} ||_2 - \frac{c^\prime}{c} \sqrt{\frac{ d_n^\star \log p }{|s_i - t_j|}} \right).
\end{eqnarray}

Since $ || \Phi^{(.,j+1)} -  \Phi^{(.,j)} ||_2 \geq v > 0 $, either $ || \Phi^{(.,j+1)} -  \widehat{\theta} ||_2 \geq v/4 $ or $ || \Phi^{(.,j)} -  \widehat{\theta} ||_2 \geq v/4 $. Assume that $ || \Phi^{(.,j)} -  \widehat{\theta} ||_2 \geq v/4 $. Then, based on Equation~\ref{eq:consequtive}, for some $ c_1 > 0 $,
\begin{equation}\label{eq:37}
\sum_{t=s_{i-1}}^{t_j-1} || y_t - \widehat{\theta} Y_{t-1}   ||_2^2  \geq \sum_{t=s_{i-1}}^{t_j-1} || \varepsilon_t ||_2^2 + c_1 \Delta_n.
\end{equation}
For the second interval we have:
\begin{equation}\label{eq:38}
\sum_{t=t_j}^{s_i-1} || y_t - \widehat{\theta} Y_{t-1}   ||_2^2 \geq  \sum_{t=t_j}^{s_i-1} || \varepsilon_t ||_2^2 - c^\prime d_n^\star \log p.
\end{equation}
Combining \eqref{eq:37} and \eqref{eq:38}, leads to:
\begin{equation}
\sum_{t=s_{i-1}}^{s_i-1} || y_t - \widehat{\theta} Y_{t-1}   ||_2^2  \geq \sum_{t=s_{i-1}}^{s_i-1} || \varepsilon_t ||_2^2 + c_1 \Delta_n - c^\prime d_n^\star \log p.
\end{equation}

Note that another situation may arise in this case, where $ |s_{i-1} - t_j| > n \gamma_n $ and $ |s_{i} - t_j| > n \gamma_n $. Using similar augments as above in this situation, we get the following lower bound:
\begin{equation}
\sum_{t=s_{i-1}}^{s_i-1} || y_t - \widehat{\theta} Y_{t-1}   ||_2^2  \geq \sum_{t=s_{i-1}}^{s_i-1} || \varepsilon_t ||_2^2 - c^\prime {d_n^\star}^2 n \gamma_n.
\end{equation}

Putting all of the cases together will yield the result.

\end{proof}

\subsection*{Appendix~B: Proof of Main Results}

\begin{proof}[Proof of Theorem~\ref{thm_pred_error}]
By definition of $ \widehat{\Theta} $, we get
\begin{eqnarray}
\frac{1}{n} || \textbf{Y} - \textbf{Z} \widehat{\mathbf{\Theta}}  ||_2^2 + \lambda_{1,n} \sum_{i=1}^{n} || \widehat{\theta}_i ||_1 + \lambda_{2,n} \sum_{k=1}^{n} \left \| \sum_{j=1}^{k} \widehat{\theta}_j  \right \|_1 
& &  \nonumber \\
& \hspace{-9cm} \leq & \hspace{-4.5cm} \frac{1}{n} || \textbf{Y} - \textbf{Z} \mathbf{\Theta} ||_2^2 + \lambda_{1,n} \sum_{i=1}^{n} || \theta_i ||_1 + \lambda_{2,n} \sum_{k=1}^{n} \left \| \sum_{j=1}^{k} \theta_j \right \|_1.
\end{eqnarray}

Denoting $ \mathcal{A} = \lbrace t_1, t_1, \cdots, t_{m_0}  \rbrace $, we have:
\begin{eqnarray}
\frac{1}{n} \left|\left| \textbf{Z} \left( \widehat{\mathbf{\Theta}} - \mathbf{\Theta}  \right)  \right|\right|_2^2 &\leq & \frac{2}{n} \left(  \widehat{\mathbf{\Theta}} - \mathbf{\Theta} \right)^\prime \textbf{Z}^\prime \textbf{E} + \lambda_{1,n} \sum_{i=1}^{n} || {\theta}_i ||_1 - \lambda_{1,n} \sum_{i=1}^{n} || \widehat{\theta}_i ||_1 \nonumber \\ 
& + & \lambda_{2,n} \sum_{k=1}^{n} \left \| \sum_{j=1}^{k} \theta_j \right \|_1 - \lambda_{2,n} \sum_{k=1}^{n} \left \| \sum_{j=1}^{k} \widehat{\theta}_j  \right \|_1 \nonumber \\
& \leq & 2 \left|\left|  \frac{\textbf{Z}^\prime \textbf{E} }{n}  \right|\right|_\infty \sum_{i=1}^{n}  || \theta_i - \widehat{\theta}_i ||_1 + \lambda_{1,n} \sum_{i \in \mathcal{A}} \left( || \theta_i ||_1 -  || \widehat{\theta}_i ||_1  \right) - \lambda_{1,n} \sum_{i \in \mathcal{A}^c} || \widehat{\theta}_i ||_1 \nonumber \\ 
& + & \lambda_{2,n} \sum_{j=1}^{m_0+1} (t_j - t_{j-1}) \| \Phi^{(.,j)} \|_1 \nonumber \\ 
& \leq & \lambda_{1,n} \sum_{i \in \mathcal{A}}  || \theta_i - \widehat{\theta}_i ||_1 + \lambda_{1,n} \sum_{i \in \mathcal{A}} \left( || \theta_i ||_1 -  || \widehat{\theta}_i ||_1  \right) + \lambda_{2,n} n \, d_n^\star \nonumber \\
& \leq & 2 \lambda_{1,n} \sum_{i \in \mathcal{A}} || \theta_i ||_1 + + \lambda_{2,n} n \, d_n^\star \nonumber \\
& \leq & 2 \lambda_{1,n} m_n  \max_{1 \leq j \leq m_0+1} \left|\left| \Phi^{(.,j)} - \Phi^{(.,j-1)} \right|\right|_1 + o(1) \nonumber \\
& \hspace{-5cm} \leq & \hspace{-2.5cm} 4 C m_n  \max_{1 \leq j \leq m_0+1} \left\lbrace \sum_{k=1}^{p} \left( d_{kj} + d_{k(j-1)}  \right) \right\rbrace   M_\Phi \sqrt{\frac{\log(n) + 2\log(p) + \log(q)}{n}} + o(1),
\end{eqnarray}
with high probability approaching to 1 due to Lemma~\ref{lemma_first}. 
\end{proof}

\begin{proof}[Proof of Theorem~\ref{thm_Hausdorff}]

{
The proof is different from Theorem~2.2 in \cite{Chan_2014} and Proposition~5 in \cite{Harchaoui_2010} due to the additional penalty added in equation \eqref{eq_estimation}. For a matrix $ A \in \mathbb{R}^{pq \times p} $, let $ || A ||_{1, \mathcal{I}} = \sum_{(j,k) \in \mathcal{I}} |a_{jk}| $. 

First, we focus on the second part. Suppose for some $ j = 1, \cdots, m_0 $, $ | \widehat{t}_j - t_j | > n \gamma_n $. Then, there exists a true break point $ t_{j_0} $ which is isolated from all the estimated points, i.e., $ \min_{1 \leq j \leq m_0} | \widehat{t}_j - t_{j_0} | > n \gamma_n  $. In other words, there exists an estimated break point $ \widehat{t}_j $ such that, $ t_{j_0} - t_{j_0-1} \vee \widehat{t}_j \geq n \gamma_n $ and $ t_{j_0+1} \wedge \widehat{t}_{j+1} - t_{j_0} \geq n \gamma_n $. The idea of the proof is to show the estimated AR parameter estimated in the interval $ [t_{j_0-1} \vee \widehat{t}_j, t_{j_0+1} \wedge \widehat{t}_{j+1}] $ converges in $ L_2 $ to both $ {\Phi}^{(.,j_0)} $ and $ {\Phi}^{(.,j_0+1)} $ which contradicts with assumption A3. This is due to the fact that the length of the interval is large enough to verify restricted eigenvalue and deviation bound inequalities needed to show parameter estimation consistency. 

Based on the definition of $ \widehat{\Theta} $ in \eqref{eq_estimation}, the value of the function defined in \eqref{eq_estimation} is minimized exactly at $ \widehat{\Theta} $. This means that any other choice of parameters yields to higher value in \eqref{eq_estimation}. First, we focus on the interval $ [t_{j_0-1} \vee \widehat{t}_j, t_{j_0}] $. Define a new parameter sequence $ \psi_k $'s, $ k = 1, ..., n $ with $ \psi_k = \widehat{\theta}_k $ except for two time points $ k = \widehat{t}_j $ and $ k = t_{j_0} $. For these two points set $ \psi_{\widehat{t}_j} = \Phi^{(.,j_0)} - \widehat{\Phi}_j $ and $ \psi_{t_{j_0}} = \widehat{\Phi}_{j+1} - \Phi^{(.,j_0)} $ where $ \widehat{\Phi}_j = \sum_{k=1}^{t_{j_0-1} \vee \widehat{t}_j - 1} \widehat{\theta}_k $ and $ \widehat{\Phi}_{j+1} = \sum_{k=1}^{t_{j_0} \vee \widehat{t}_j} \widehat{\theta}_k $, i.e. $ \widehat{\theta}_{t_{j_0} \vee \widehat{t}_j} = \widehat{\Phi}_{j+1} - \widehat{\Phi}_{j} $. Denoting $ \Psi =  \mbox{vector} (\psi_1, ..., \psi_n) \in \mathbb{R}^{\pi \times 1} $, we have

\begin{eqnarray}\label{eq:minimizer}
\frac{1}{n} \| \textbf{Y} - \textbf{Z} \mathbf{\widehat{\Theta}} \|_2^2 + \lambda_{1,n}  \| \mathbf{\widehat{\Theta}} \|_1 + \lambda_{2,n} \sum_{k=1}^{n} \left \| \sum_{j=1}^{k} \widehat{\theta}_j \right \|_1 &\leq & \frac{1}{n} \| \textbf{Y} - \textbf{Z} \mathbf{\Psi} \|_2^2 + \lambda_{1,n}  \| \mathbf{\Psi} \|_1 \\ \nonumber
&+& \lambda_{2,n} \sum_{k=1}^{n} \left \| \sum_{j=1}^{k} \psi_j \right \|_1.
\end{eqnarray}

Some rearrangement of equation \eqref{eq:minimizer} leads to

\begin{eqnarray}\label{eq:newproof}
0 &\leq& c \, \| \Phi^{(.,j_0)} - \widehat{\Phi}_{j+1} \|_2^2 \nonumber \\ & \leq & \frac{1}{t_{j_0} - t_{j_0-1} \vee \widehat{t}_j} \sum_{l=t_{j_0-1} \vee \widehat{t}_j}^{t_{j_0}-1} \left( \Phi^{(.,j_0)} - \widehat{\Phi}_{j+1} \right)^\prime Y_{l-1} Y_{l-1}^\prime \left( \Phi^{(.,j_0)} - \widehat{\Phi}_{j+1} \right) \nonumber \\  
& \leq & \frac{1}{t_{j_0} - t_{j_0-1} \vee \widehat{t}_j} \sum_{l=t_{j_0-1} \vee \widehat{t}_j}^{t_{j_0}-1} Y_{l-1}^\prime \left( \Phi^{(.,j_0)} - \widehat{\Phi}_{j+1} \right) \varepsilon_l \nonumber \\  
&+&  \frac{n \lambda_{1,n}}{t_{j_0} - t_{j_0-1} \vee \widehat{t}_j} \left( \| \Phi^{(.,j_0)} - \widehat{\Phi}_{j+1} \|_1 + \| \Phi^{(.,j_0)} - \widehat{\Phi}_{j} \|_1 - \| \widehat{\Phi}_{j+1} - \widehat{\Phi}_{j} \|_1\right) \nonumber \\
&+& n \lambda_{2,n} \left( \| \Phi^{(.,j_0)} \|_1 - \| \widehat{\Phi}_{j+1} \|_1 \right) \nonumber \\
& \leq &  \left( \frac{2 n \lambda_{1,n}}{t_{j_0} - t_{j_0-1} \vee \widehat{t}_j} + C \sqrt{\frac{\log p}{n \gamma_n}}  \right) \| \Phi^{(.,j_0)} - \widehat{\Phi}_{j+1} \|_1 + n \lambda_{2,n} \left( \| \Phi^{(.,j_0)} \|_1 - \| \widehat{\Phi}_{j+1} \|_1 \right) \nonumber \\
& \leq &  \frac{1}{2}n \lambda_{2,n} \| \Phi^{(.,j_0)} - \widehat{\Phi}_{j+1} \|_1 + n \lambda_{2,n} \left( \| \Phi^{(.,j_0)} \|_1 - \| \widehat{\Phi}_{j+1} \|_1 \right) \nonumber \\
& \leq &  \frac{3}{2}n \lambda_{2,n} \| \Phi^{(.,j_0)} - \widehat{\Phi}_{j+1} \|_{1,\mathcal{I}} - \frac{1}{2}n \lambda_{2,n} \| \Phi^{(.,j_0)} - \widehat{\Phi}_{j+1} \|_{1,\mathcal{I}^c}.
\end{eqnarray}

In equation \eqref{eq:newproof}, the second inequality holds with high probability converging to 1 due to first part of Lemma~\ref{lemma_bound} and the fact that $ t_{j_0} - t_{j_0-1} \vee \widehat{t}_j \geq n \gamma_n $. The fourth inequality holds with high probability converging to 1 due to second part of Lemma~\ref{lemma_bound} and triangular inequality. The fifth inequality is based on the assumption A3 and the selection for $ \lambda_{2,n} $ in the statement of the theorem. The last inequality holds by sparsity assumption. This implies that 

\begin{equation}
\| \Phi^{(.,j_0)} - \widehat{\Phi}_{j+1} \|_2 = o_p \left( d_{n}^\star \sqrt{\frac{\log p}{n \gamma_n}} \right),
\end{equation}
which means that $ \| \Phi^{(.,j_0)} - \widehat{\Phi}_{j+1} \|_2 $ converges to zero in probability based on assumption A3. Similarly, the same procedure can be applied to the interval $ [t_{j_0}, t_{j_0+1} \wedge \widehat{t}_{j+1}] $ which leads to $ \| \Phi^{(.,j_0+1)} - \widehat{\Phi}_{j+1} \|_2 $ converges to zero in probability as well. This yields to a contradiction to the assumption A3, and therefore, the proof is complete.

The proof of the first part is similar to the second part. Hence, a brief sketch is provided. Assume $ |\widehat{\mathcal{A}}_n | < m_0 $. This means there exist an isolated true break point, say $ t_{j_0} $. More specifically, there exists an estimated break point $ \widehat{t}_j $ such that, $ t_{j_0} - t_{j_0-1} \vee \widehat{t}_j \geq n \gamma_n / 3 $ and $ t_{j_0+1} \wedge \widehat{t}_{j+1} - t_{j_0} \geq n \gamma_n / 3 $. Now, similar arguments as explained in details in the second part can be applied to both intervals $ [t_{j_0-1} \vee \widehat{t}_j, t_{j_0}] $ and $ [t_{j_0}, t_{j_0+1} \wedge \widehat{t}_{j+1}] $ which leads to $ \| \Phi^{(.,j_0 + 1)} - \Phi^{(.,j_0)} \|_2 $ converges to zero and therefore contradicts with assumption A3. This completes the proof.

}

\end{proof}

\begin{proof}[Proof of Theorem~\ref{thm_selection}]
For the first part we show that (a) $ \mathbb{P} ( \widetilde{m} < m_0 ) \rightarrow 0 $, and (b) $ \mathbb{P} ( \widetilde{m} > m_0 ) \rightarrow 0 $. For (a), we know, from Theorem~\ref{thm_Hausdorff}, that there exists points $ \widehat{t}_j \in \mathcal{A}_n $ such that $ \max_{1 \leq j \leq m_0} | \widehat{t}_j - t_j  | \leq n \gamma_n $. By similar arguments as in Lemma~\ref{lemma_selection}, we get that there exists a constant $ K > 0 $ such that:
\begin{eqnarray}\label{eq:thm4:m0}
L(\widehat{t}_1, ..., \widehat{t}_{m_0} ;\eta_n) 
& \leq & \sum_{t=q}^{T} || \varepsilon_t ||_2^2 + K m_0 n \gamma_n {d_n^\star}^2.
\end{eqnarray}
To see this, we only show the calculations for one of the estimated segments. Suppose $ s_{i-1} < t_j < s_i $ with $ |t_j - s_{i-1}| \leq n \gamma_n $. Denote the estimated coefficient in the segment $ (s_{i-1}, s_i) $ by $ \widehat{\theta} $. Similar to case (b) in the proof of Lemma~\ref{lemma_selection}, we have:
\begin{eqnarray}\label{eq:57}
\sum_{t=t_j}^{s_i-1} || y_t - \widehat{\theta} Y_{t-1}   ||_2^2 & \leq & \sum_{t=t_j}^{s_i-1} || \varepsilon_t ||_2^2 + c_3 | s_i - t_j | \, || \Phi^{(.,j+1)} -  \widehat{\theta} ||_2^2 \nonumber \\
& + &  c^\prime \sqrt{| s_i - t_j | \log p} \, || \Phi^{(.,j+1)} -  \widehat{\theta} ||_1 \nonumber \\
& \equiv & \sum_{t=t_j}^{s_i-1} || \varepsilon_t ||_2^2 + I + II.
\end{eqnarray}

Now, by the convergence rate of the error (see, e.g., case (b) in the proof of Lemma~\ref{lemma_selection}),
\begin{eqnarray}\label{eq:58}
I & \leq & 4 c_3 | s_i - t_j | d_n^\star {\left( c \sqrt{\frac{\log p}{| s_i - t_j |}} + M_\Phi d_n^\star \frac{n \gamma_n}{| s_i - t_j |} \right)}^2 \nonumber \\
& = & O_p \left( n \gamma_n {d_n^\star}^2 \right),
\end{eqnarray}
and
\begin{eqnarray}\label{eq:59}
II & \leq & c^\prime \sqrt{| s_i - t_j | \log p} \, d_n^\star \left(  c \sqrt{\frac{\log p}{| s_i - t_j |}} + M_\Phi d_n^\star \frac{n \gamma_n}{| s_i - t_j |} \right) \nonumber \\
& = & O_p \left( n \gamma_n {d_n^\star}^2 \right).
\end{eqnarray}
Applying a similar argument to the smaller sub-segment $ (s_{i-1}, t_j) $, we get:
\begin{eqnarray}\label{eq:60}
\sum_{t=s_{i-1}}^{t_j-1} || y_t - \widehat{\theta} Y_{t-1}   ||_2^2 & \leq & \sum_{t=s_{i-1}}^{t_j-1} || \varepsilon_t ||_2^2 + c_3 | t_j - s_{i-1} | \, || \Phi^{(.,j)} -  \widehat{\theta} ||_2^2 \nonumber \\
& + & c^\prime \sqrt{| t_j - s_{i-1} | \log p} \, || \Phi^{(.,j)} -  \widehat{\theta} ||_1 \nonumber \\
& \leq & \sum_{t=s_{i-1}}^{t_j-1} || \varepsilon_t ||_2^2 + 2 c_3 | t_j - s_{i-1} | \, \left( || \Phi^{(.,j+1)} -  \widehat{\theta} ||_2^2 + || \Phi^{(.,j+1)} - \Phi^{(.,j)} ||_2^2 \right)  \nonumber \\
& + & c^\prime \sqrt{| t_j - s_{i-1} | \log p} \, \left( || \Phi^{(.,j+1)} -  \widehat{\theta} ||_1 + || \Phi^{(.,j+1)} -  \Phi^{(.,j)} ||_1 \right) \nonumber \\
& = & \sum_{t=s_{i-1}}^{t_j-1} || \varepsilon_t ||_2^2 + O_p \left( n \gamma_n {d_n^\star}^2 \right).
\end{eqnarray}

Finally,
\begin{eqnarray}\label{eq:61}
\eta_{(s_{i-1}, s_i)} || \widehat{\theta} ||_1 & \leq & \eta_{(s_{i-1}, s_i)} \left( || \Phi^{(.,j+1)} -  \widehat{\theta} ||_1 + || \Phi^{(.,j+1)} ||_1  \right) \nonumber \\
& = & O_p (d_n^\star).
\end{eqnarray}

Combining \eqref{eq:57}--\eqref{eq:61} leads to:
\begin{equation}
\sum_{t=s_{i-1}}^{s_i-1} || y_t - \widehat{\theta} Y_{t-1}   ||_2^2 + \eta_{(s_{i-1}, s_i)} || \widehat{\theta} ||_1 = \sum_{t=s_{i-1}}^{s_i-1} || \varepsilon_t ||_2^2 + O_p \left( n \gamma_n {d_n^\star}^2 \right).
\end{equation}

Adding these equations over all $ m_0 + 1 $ segments leads to equation~\ref{eq:thm4:m0}.

Now, applying Lemma~\ref{lemma_selection}, we get:
\begin{eqnarray}
IC( \widetilde{t}_1, ..., \widetilde{t}_{\widetilde{m}} ) & = & L_n (\widetilde{t}_1, ..., \widetilde{t}_{\widetilde{m}}; \eta_n ) + \widetilde{m} \omega_n \nonumber \\
& > & \sum_{t=q}^{T} || \varepsilon_t ||_2^2 + c_1 \Delta_n - c_2 \widetilde{m} n \gamma_n {d_n^\star}^2 + \widetilde{m} \, \omega_n \nonumber \\
& \geq & L(\widehat{t}_1, ..., \widehat{t}_{m_0} ;\eta_n) + m_0 \omega_n + c_1 \Delta_n - c_2 m_0 n \gamma_n d_n^\star - (m_0 -  \widetilde{m}) \omega_n \nonumber \\
& \geq & L(\widehat{t}_1, ..., \widehat{t}_{m_0} ;\eta_n) + m_0 \omega_n,
\end{eqnarray}
since $ \lim_{n \rightarrow \infty} n \gamma_n {d_n^\star}^2 / \omega_n \leq 1 $, and $ \lim_{n \rightarrow \infty} m_0 \omega_n / \Delta_n = 0 $. This proves part (a). To prove part (b), note that a similar argument as in Lemma~\ref{lemma_selection} shows that 
\begin{equation}
L_n ( \widetilde{t}_1, ..., \widetilde{t}_{\widetilde{m}}  ; \eta_n) \geq \sum_{t=q}^{T} || \varepsilon_t ||_2^2 - c_2 \widetilde{m} n \gamma_n {d_n^\star}^2.
\end{equation}
A comparison between $ IC (\widetilde{t}_1, ..., \widetilde{t}_{\widetilde{m}}) $ and $ IC (\widehat{t}_1, ..., \widehat{t}_{m_0}) $ leads to:
\begin{eqnarray}
\sum_{t=q}^{T} || \varepsilon_t ||_2^2 - c_2 \widetilde{m} n \gamma_n {d_n^\star}^2 + m \omega_n & \leq & IC (\widetilde{t}_1, ..., \widetilde{t}_{\widetilde{m}}) \nonumber \\
& \leq & IC (\widehat{t}_1, ..., \widehat{t}_{m_0}) \nonumber \\
& \leq & \sum_{t=q}^{T} || \varepsilon_t ||_2^2 + K m_0 n \gamma_n {d_n^\star}^2 + m_0 \omega_n,
\end{eqnarray}
which means
\begin{equation}\label{eq:unnumbered1}
(\widetilde{m} - m_0) \omega_n \leq c_2 \widetilde{m} n \gamma_n {d_n^\star}^2 + K m_0 n \gamma_n {d_n^\star}^2. 
\end{equation}
However, \eqref{eq:unnumbered1} contradicts with the fact that $ m_0 n \gamma_n {d_n^\star}^2 / \omega_n \rightarrow 0 $. This completes the first part of the theorem. 

For the second part, let $ B = 2 K / c $, and suppose there exists a point $ t_i $ such that $ \min_{1 \leq j \leq m_0} | \widetilde{t}_j - t_j | \geq B m_0 n \gamma_n {d_n^\star}^2 $. Then, by similar argument as in Lemma~\ref{lemma_selection}, we can show that:
\begin{eqnarray}
\sum_{t=d}^{T} || \varepsilon_t ||_2^2 + c B m_0 n \gamma_n {d_n^\star}^2 & < & L_n ( \widetilde{t}_1, ..., \widetilde{t}_{m_0} ) \nonumber \\
& \leq & L_n ( \widehat{t}_1, ..., \widehat{t}_{m_0} ) \nonumber \\
& \leq & \sum_{t=q}^{T} || \varepsilon_t ||_2^2 + K m_0 n \gamma_n {d_n^\star}^2, 
\end{eqnarray} 
which contradicts with the way $ B $ was selected. This completes the proof of the theorem.
\end{proof}

\begin{proof}[Proof of Theorem~\ref{thm_parameter_consistency}]

The proof of this theorem is similar to that of Proposition~4.1 in \citet{Basu_2015}. The two main components of the proof is (i) verifying the restricted eigenvalue (RE) for $ \hat{\Gamma} = I_p \otimes (\mathcal{X}_{\textbf{r}}^\prime \mathcal{X}_{\textbf{r}}/N) $, and (ii) verifying the deviation bound for $ \left|\left| \hat{\gamma} - \hat{\Gamma} \Phi \right|\right|_\infty $ where $ \hat{\gamma} = (I_p \otimes \mathcal{X}_{\textbf{r}}^\prime) \textbf{Y}_{\textbf{r}}/N $. Once these two are verified, the rest of the proof is applying deterministic arguments used in Proposition~4.1 in \citet{Basu_2015}. Therefore, here we proof (i) and (ii) only. 

Condition (i) means that there exist $ \alpha, \tau > 0 $ such that for any $ \theta \in \mathbb{R}^{\tilde{\pi}} $, we have 
$$ \theta^\prime \hat{\Gamma} \theta \geq \alpha || \theta ||_2^2 - \tau || \theta ||_1^2, $$ with probability at least $ 1 - c_1 \exp(- c_2 N) $ for large enough constants $ c_1, c_2 > 0 $. Based on Lemma~B.1 in \citet{Basu_2015}, it is enough to show the RE for $ S = \mathcal{X}_i^\prime \mathcal{X}_i/N $, where $ \mathcal{X}_i $ is the $i$th block component of $ \mathcal{X}_{\textbf{r}} $. Applying Proposition~2.4 in \citet{Basu_2015}, we have for any $ v \in \mathbb{R}^{pq} $ with $ ||v||_2 \leq 1 $, and any $ \eta > 0 $:
$$ \mathbb{P} \left( \left| v^\prime \left(S - \frac{N_i}{N} \Gamma_i(0) \right) v \right| > c \eta    \right) \leq 2 \exp (- c_3 N \min(\eta^2,\eta) ). $$ Now, to make the above probability hold uniformly on all the vectors $ v $, we apply the discretization Lemma~F2 in \citet{Basu_2015} and also Lemma~12 in the supplementary materials of \citet{loh2012} to get:

$$ \left| v^\prime \left(S - \frac{N_i}{N} \Gamma_i(0) \right) v \right| \leq \alpha ||v||_2^2 + \alpha/k ||v||_1^2, $$

with high probability at least $ 1 - c_1 \exp(- c_2 N)  $, for all $ v \in \mathbb{R}^{pq} $, some $ \alpha > 0 $ and with an integer $ k = \lceil{ c_4 N/\log (pq)}\rceil $ with some $ c_4 > 0 $. This implies that 
$$ v^\prime S v \geq v^\prime \frac{N_i}{N} \Gamma_i(0)v - \alpha ||v||_2^2 - \alpha/k ||v||_1^2 \geq \alpha ||v||_2^2 - \alpha/k ||v||_1^2, $$
since $ N_i \geq \Delta_n - 4 R_n $, $ N = n + q - 1 - 2 m_0 R_n $, and assuming $ \Delta_n \geq \varepsilon n $ implies that $ N_i / N \geq \varepsilon \geq 2 \alpha $. 

The deviation condition (DC) here means that there exist a large enough constant $ C^\prime > 0 $ such that 
$$ \left|\left| \hat{\gamma} - \hat{\Gamma} \Phi \right|\right|_\infty \leq C^\prime \sqrt{\frac{\tilde{\pi}}{N}},  $$
with probability at least $ 1 - c_1 \exp( -c_2 \log \tilde{q} ) $. To verify this condition here, observe that $ \hat{\gamma} - \hat{\Gamma} \Phi = \mbox{vec} \left( \mathcal{X}_{\textbf{r}}^\prime E_{\textbf{r}} \right)/N $. Therefore, denoting the $ h$--th column block of $ \mathcal{X}_{\textbf{r}} $ by $ \mathcal{X}_{\textbf{r},(h)} $, for $ h = 1, ..., (m_0+1)q $, we have:

$$ \left|\left| \hat{\gamma} - \hat{\Gamma} \Phi \right|\right|_\infty =  \max_{1 \leq k,l \leq p; 1 \leq h \leq (m_0+1)d} \left| e_k^\prime  \mathcal{X}_{\textbf{r},(h)}^\prime E_{\textbf{r}} e_l \right|. $$
Now, for a fixed $ k, l, h $, applying Proposition~2.4(b) in \citet{Basu_2015} gives:
$$ \mathbb{P} \left(  \left| e_k^\prime  \mathcal{X}_{\textbf{r},(h)}^\prime E_{\textbf{r}} e_l \right| > k_1 \eta  \right) \leq 6 \exp( - k_2 N \min(\eta^2,\eta)),  $$
for large enough $ k_1, k_2 > 0 $, and any $ \eta > 0 $. Now, setting $ \eta = C^\prime \sqrt{\frac{\tilde{\pi}}{N}} $, and taking the union over all the $ \tilde{\pi} $ cases for $ k, l, h $ yield the desired result. This completes the proof of this theorem.

\end{proof}


\subsection*{Appendix~C: Details of Estimation Algorithms}
In this section, we provide details of the algorithm for solving the optimization problem \eqref{eq_estimation}, as well as the proposed backward elimination algorithm (BEA) for the second-stage screening.  

To describe the algorithm for solving the optimization problem \eqref{eq_estimation}, let $ S(. ; \lambda) $ be the element-wise soft-thresholding operator which maps its input $ x $ to $ x - \lambda $ when $ x > \lambda $, $ x + \lambda $ when $ x < - \lambda $, and $ 0 $ when $ |x| \leq \lambda $. Recall that throughout the paper, for a $ m \times n $ matrix $ A $, $ \| A \|_\infty = \max_{1 \leq i \leq m, 1 \leq j \leq n} |a_{ij}| $. The algorithm is as follows:

\begin{itemize}
\item[(i)] Set the initial values for all parameters to be zero; i.e. $ \theta_i^{(0)} = 0 $, for $ i = 1, \ldots, n $. 
\item[(ii)] For each $ i = 1, \ldots, n $, calculate the $ (h+1)$--th iteration of the parameters $ \theta_i^{(h+1)} $ using the KKT conditions of problem \eqref{eq_estimation}, presented in Lemma~\ref{lemma_KKT} of Appendix~A. More specifically, 
\begin{equation}\label{eq_bcd}
{\theta_i^{\prime}}^{(h+1)} = { \left( \sum_{l=i}^{n} Y_{l-1} Y_{l-1}^\prime \right) }^{-1} S \left( \sum_{l=i}^{n} Y_{l-1} y_{l}^{\prime} - \sum_{j \ne i} \left( \sum_{l = \max(i,j)}^{n} Y_{l-1} Y_{l-1}^\prime \right) {\theta_j^\prime}^{(h)} ; \lambda_{1,n} \right),
\end{equation}
where $ Y_l^\prime = \left( y_l^\prime \ldots y_{l-q+1}^\prime  \right)_{1 \times pq} $.
\item[(iii)]
\begin{itemize}
\item[(a)] If $ \max_{1 \leq i \leq n} \| {\theta_i}^{(h+1)} - {\theta_i}^{(h)} \|_\infty < \delta$, where $\delta$ is the tolerance set to $ 10^{-3} $ in our implementation, stop the iteration and denote the final estimate by $ \Theta^{(intermediate)} $.
\item[(b)] If $ \max_{1 \leq i \leq n} \| {\theta_i}^{(h+1)} - {\theta_i}^{(h)} \|_\infty \geq \delta$, set $ h = h + 1 $. Go to step (ii).
\end{itemize}
\item[(iv)] Apply soft--thresholding to $ \Theta^{(intermediate)} $ to find the optimizer in equation \eqref{eq_estimation}. In other words, $ \widehat{\Theta} = S (\Theta^{(intermediate)}; \lambda_{2,n}) $.
\end{itemize} 

Note that in this algorithm, the whole block of $ \theta_i$ with $ p^2 q $ elements is updated at once, which reduces the computation time dramatically.

Our backward elimination algorithm (BEA) for the second-stage screening is as follows:

\begin{itemize}
\item[(i)] Set $ m = | \widehat{\mathcal{A}}_n | $. Let $ \textbf{s} = \lbrace  s_1, 
\ldots, s_m \rbrace $ be the selected points and define $ W_m^\star = \mathrm{IC}(s_1, \ldots, s_m; \eta_n) $.
\item[(ii)] For each $ i = 1, \ldots, m $, calculate $ W_{m,i} = \mathrm{IC}( \textbf{s} \backslash \lbrace s_i \rbrace; \eta_n )  $. Define $ W_{m-1}^\star = \min_{i}  W_{m,i} $. 
\item[(iii)]
\begin{itemize}
\item[(a)] If $ W_{m-1}^\star >  W_m^\star $, then no further reduction is needed. Return $\widehat{\mathcal{A}}_n$ as the estimated change points.
\item[(b)] If $ W_{m-1}^\star \leq  W_m^\star $, and $ m > 1 $, set $ j = \mbox{argmin}_{i} W_{m,i} $, set $ \textbf{s} = \textbf{s} \backslash \lbrace s_j \rbrace $ and $ m = m - 1 $. Go to step (ii).
\item[(c)] If $ W_{m-1}^\star \leq  W_m^\star $ and $ m = 1 $, all selected points are removed. Return the empty set. 
\end{itemize}
\end{itemize}

\subsection*{Appendix~D: Additional Simulation Results}

In this section, two additional simulation scenarios are described and the empirical results are reported.

\emph{Simulation Scenario 4 (Randomly structured $ \Phi $ and break points close to the center)}. As in Scenario 1, in this case we set $ t_1 = 100 $ and $ t_2 = 200 $. However, the coefficients matrices are chosen to be randomly structured.  {
The autoregressive coefficients for simulation scenarios 1 and 2 are displayed in Figure~\ref{fig_phi}. The 1-off diagonal values for the three segments are -0.6, 0.75, and -0.8, respectively.}
However, the autoregressive coefficients for this scenario are chosen to be randomly structured as displayed in Figure~\ref{fig_phi_2}.

The selected break points in this scenario are shown in the middle part of Figure~\ref{fig_sim_3}. The mean and standard deviation of locations of the selected break points, relative to the sample size $ T $, as well as the percentage of simulation runs where break points are correctly identified are shown in Table~\ref{table_sim_2}. The results suggest that, among all simulation scenarios, this setting, with randomly structured $ \Phi$'s, is the most challenging for our method in terms of detecting the number of break points. In this setting, the detection rate drops to $ 99\% $ compared to $100\%$ in the previous scenarios, and the standard deviation of the selected break point locations are higher than the first scenario. The percentage of runs where true break points are within $ R_n$-radius of the estimated points also drops to $ 96\% $ compared to $100\%$ in Scenarios 1, 2 and 3.

\begin{figure}[h]
\begin{center}
\includegraphics[width=0.5\textwidth, clip=TRUE, trim=0mm 25mm 0mm 30mm]{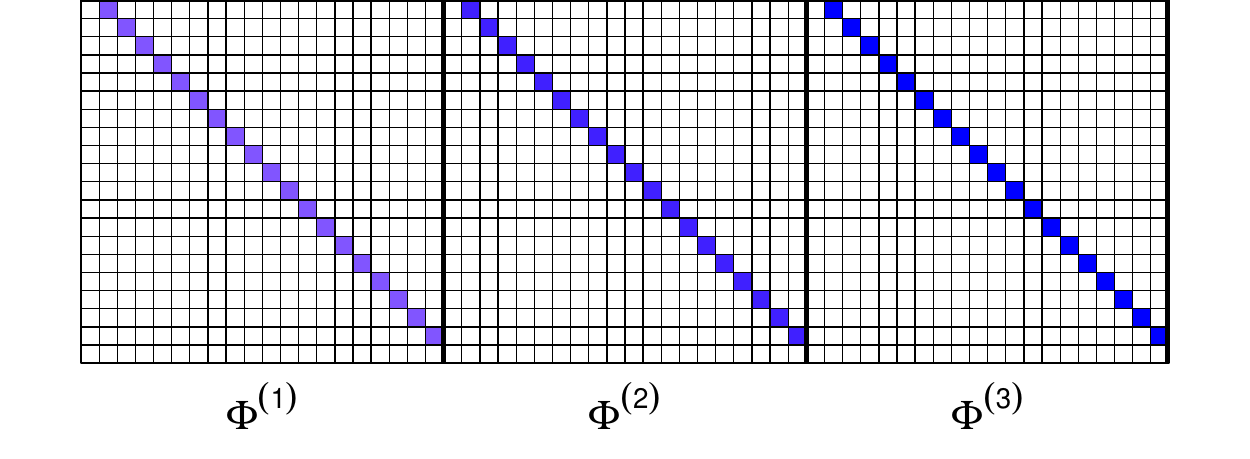}
\caption{True autoregressive coefficients for the three segments in Simulation Scenarios 1 and 2.}
\label{fig_phi}
\end{center}
\end{figure}

\begin{figure}[t!]
\begin{center}
\includegraphics[width=0.5\textwidth, clip=TRUE, trim=0mm 25mm 0mm 30mm]{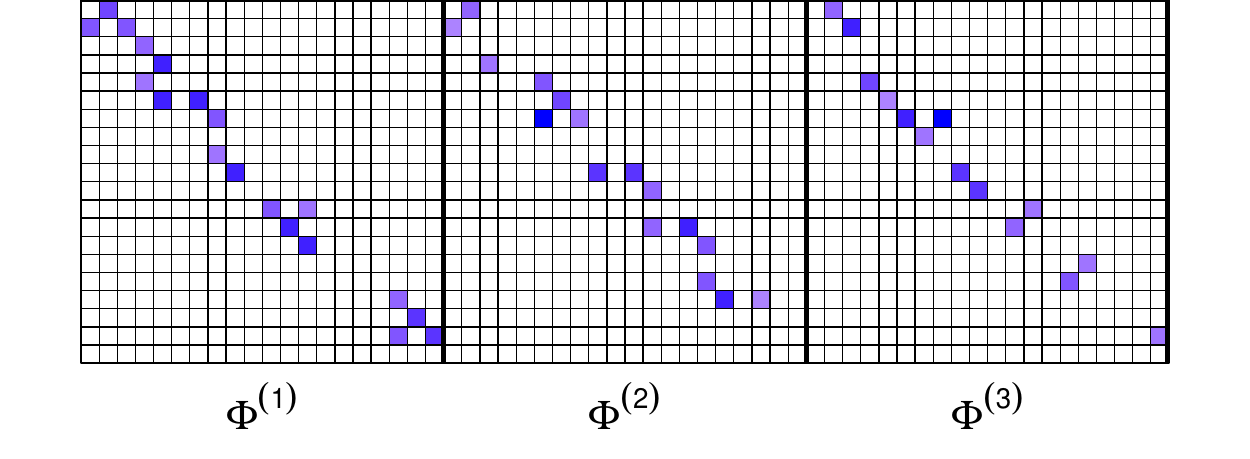}
\caption{True autoregressive coefficients for the three segments in Simulation Scenario 4.}\label{fig_phi_2}
\end{center}
\end{figure}

The inferior performance of the proposed method in the fourth simulation scenario could be due to the fact that the $ L_2 $-distance between the consecutive autoregressive coefficients are less than the previous two cases. {
The $ L_2 $ norm of the consecutive differences of the VAR parameters in simulation 1 and 2 are 5.88 and 6.76 whereas in simulation 4, they are 4.44 and 4.49. This $ 35\% $ reduction in the $ L_2 $-distance between the consecutive autoregressive coefficients }would make it harder to identify the exact location of the break points. In contrast, the sparse changes in coefficient matrices makes this setting more favorable for SBS-MVTS. Nonetheless, estimates from our method are as good or better than those from SBS-MVTS.

Table~\ref{table:sim4_AR} summarizes the results for autoregressive parameter estimation in this simulation scenarios. The table shows mean and standard deviation of relative estimation error (REE), as well as true positive (TPR) and false positive rates (FPR) of the estimates. The results suggest that the proposed method performs well in terms of parameter estimation. {
In simulation scenario 4, the performance of SBS--MVTS is better in estimation and in true positive rate. One reason for this good performance is that in this scenario, the selected break points of SBS--MVTS method are close enough to the true break points which makes it unnecessary to remove the $ R_n$--radius of them in order to ensure stationarity. However, in real data applications, since the ground truth is unknown, this removal becomes necessary.  }

\begin{table}
\hspace{-0.5cm}
\caption{\label{table_sim_2} Results for Simulation Scenario 4. The table shows mean and standard deviation of estimated break point locations, the percentage of simulation runs where break points are correctly detected (selection rate), and the percentage of simulation runs where true break points are within the $ R_n$-radius of the estimated break points ($ R_n$-selection rate).}
\centering
\begin{tabular}{lcccccc}
  \hline
method & break points & truth & mean & std & selection rate & $R_n$-selection rate \\ 
  \hline
    \hline
  SBS-MVTS & 1 & 0.3333 & 0.3238 & 0.0206 & 0.98 & -- \\ 
  \, & 2 & 0.6667 & 0.6569 & 0.0324 & 0.92 & -- \\
  Our method  & 1 & 0.3333 & 0.3323 & 0.0124 & 0.99 & 0.98 \\ 
  \, & 2 & 0.6667 & 0.6620 & 0.0200 & 0.99 & 0.96 \\  
   \hline
\end{tabular}
\end{table}

\begin{table}
\hspace{-1cm}
\caption{\label{table:sim4_AR} Results of parameter estimation for simulation scenario 4. The table shows mean and standard deviation of relative estimation error (REE), true positive rate (TPR), and false positive rate (FPR) for estimated coefficients.}
\centering
\begin{tabular}{lccccc} 
  \hline
 & Method & REE & SD(REE) & TPR & FPR \\
  \hline
    \hline 
  & Our Method   & 0.5263 & 0.0558 & 0.94 & 0.03 \\ 
  & SBS-MVTS     & 0.2757 & 0.1099 & 1.00 & 0.04  \\ 
   \hline
\end{tabular}

\end{table}

\begin{figure}[t!]
\begin{center}
\includegraphics[width=0.32\textwidth, height=0.15\textheight, clip=TRUE, trim=0cm 0cm 0cm 2cm]{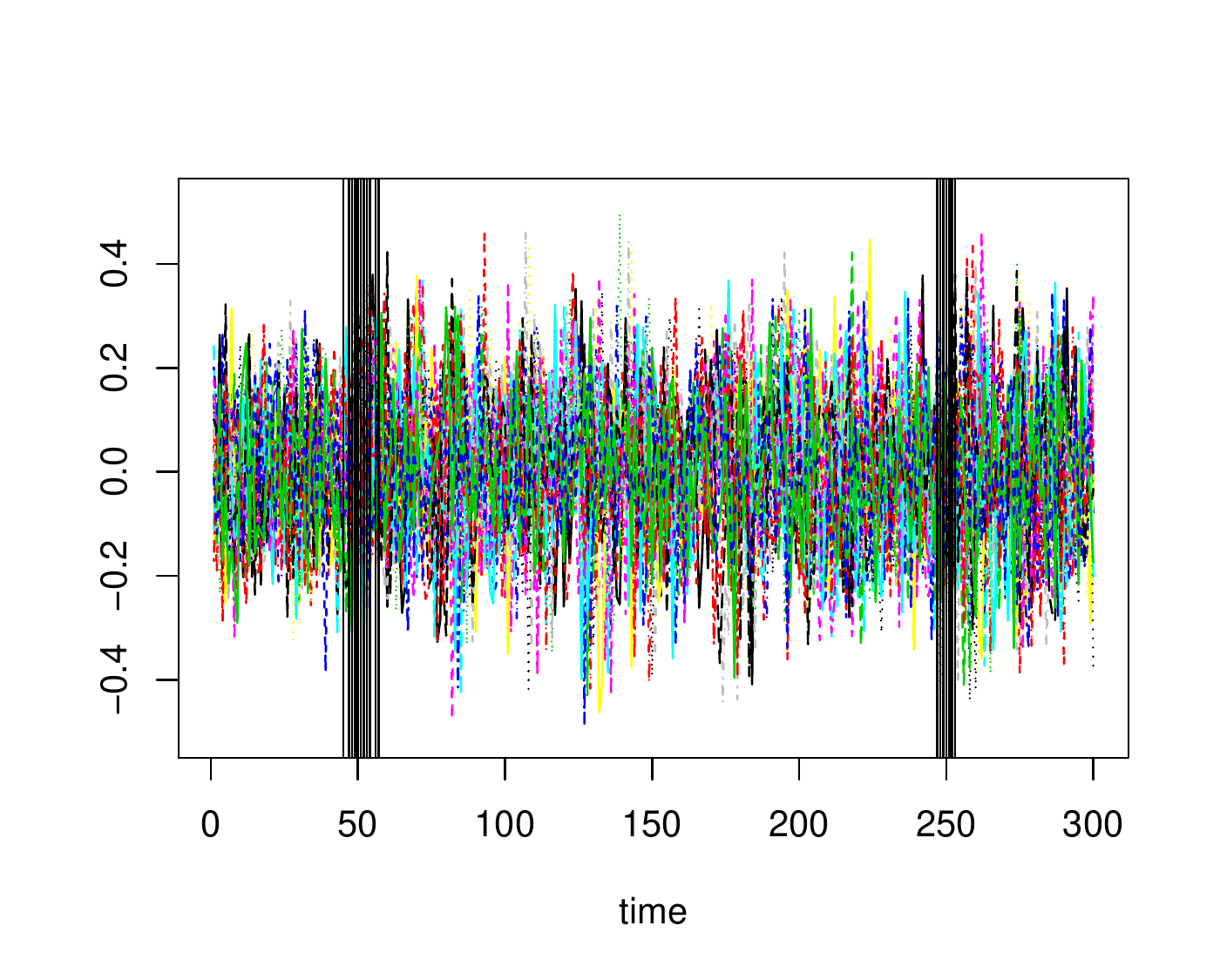}
\includegraphics[width=0.32\textwidth, height=0.15\textheight, clip=TRUE, trim=0cm 0cm 0cm 2cm]{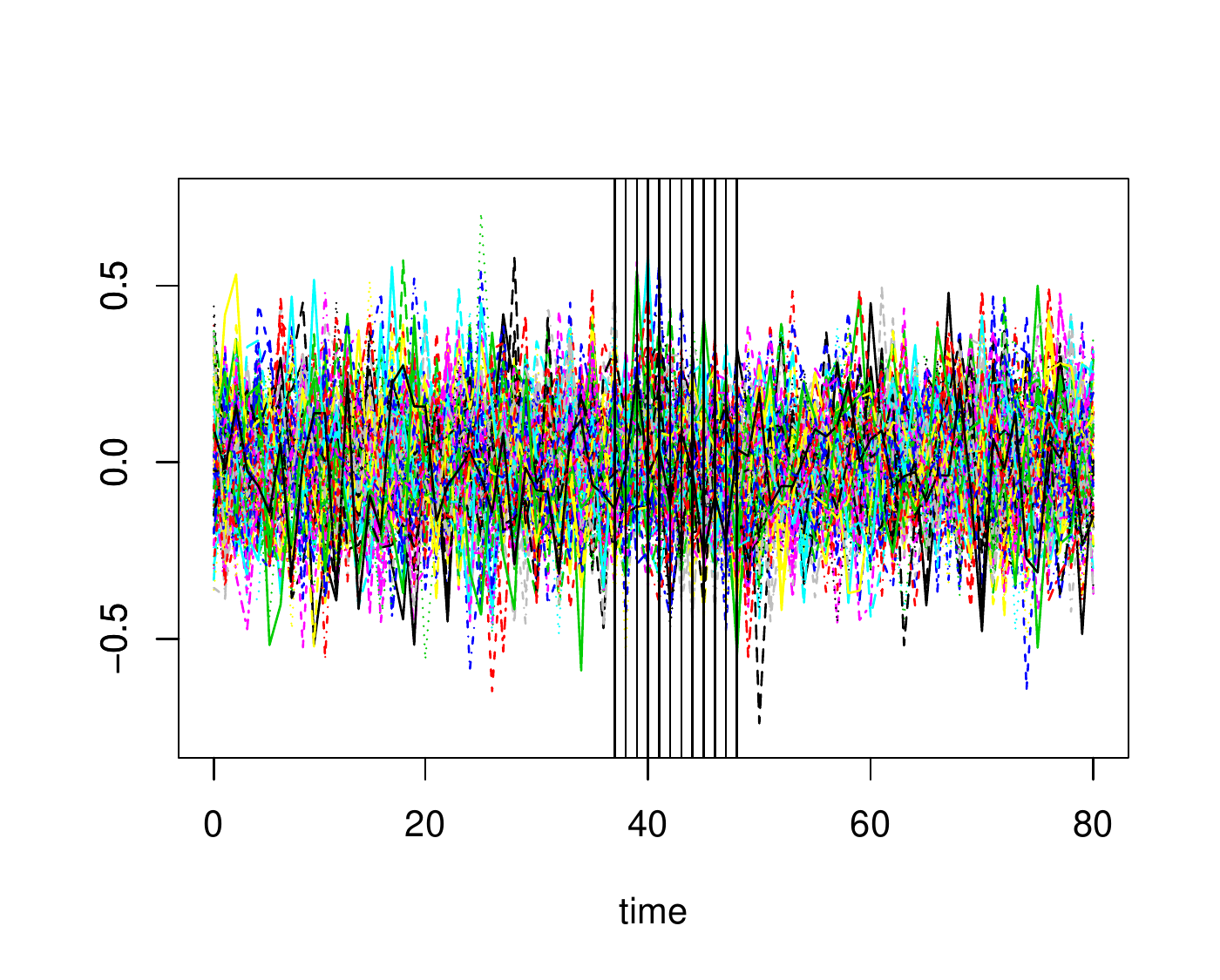}
\includegraphics[width=0.32\textwidth, height=0.15\textheight, clip=TRUE, trim=0cm 0cm 0cm 2cm]{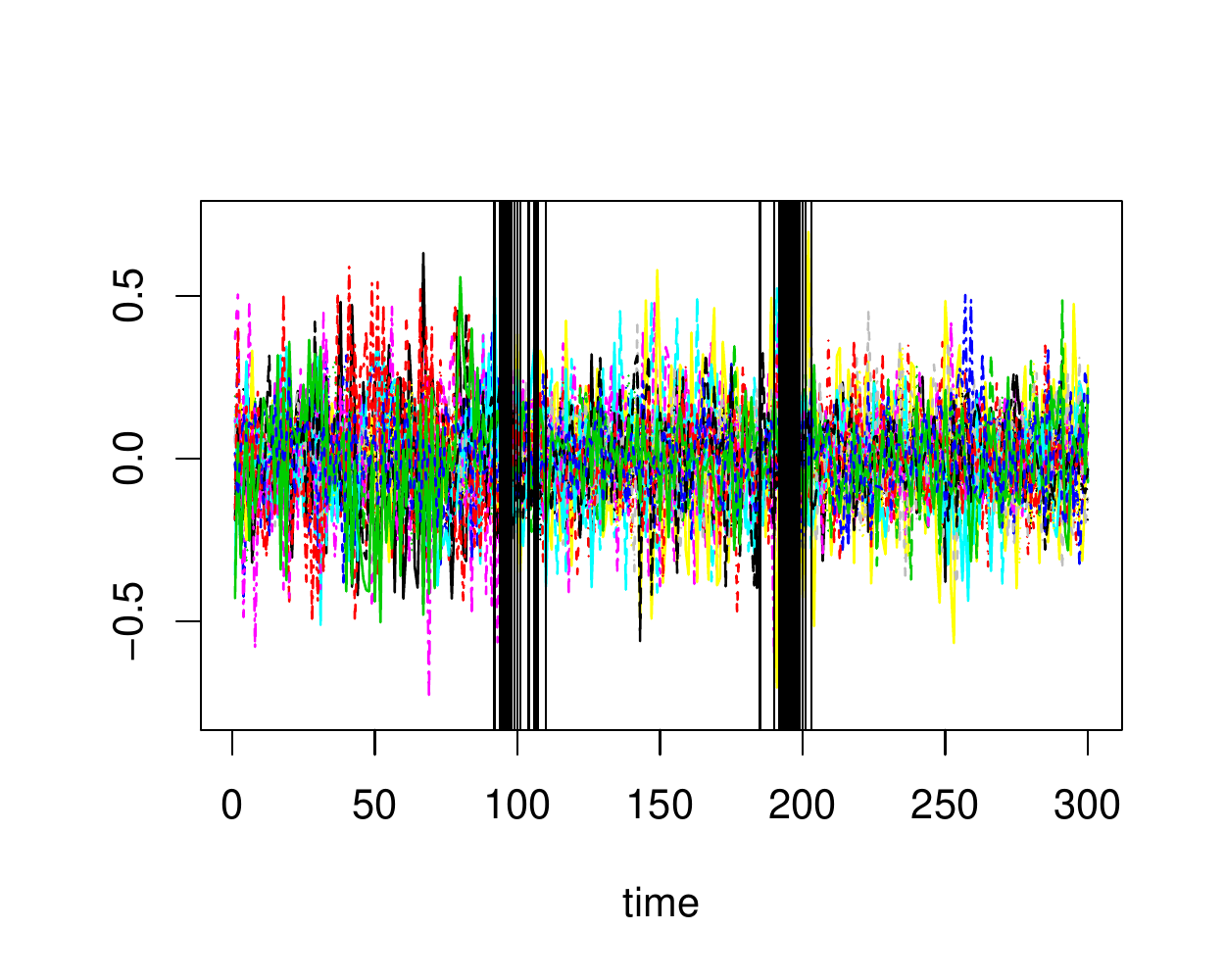}
\vspace{-0.5cm}
\caption{Estimated break points using our method for all 100 runs from Simulation Scenarios 2 (left), 3 (middle) and 4 (right).}\label{fig_sim_3}
\end{center}
\end{figure}

\emph{Simulation Scenario 5 (Simple $ \Phi $ and correlated error term)}. As in Scenario 1, in this case we set $ t_1 = 100 $ and $ t_2 = 200 $ with $ T = 300 $. The coefficients matrices are chosen to be the same as in simulation scenario~1 as displayed in Figure~\ref{fig_phi}. However, the covariance matrix of the error terms is dense. More specifically, $ \Sigma_{\varepsilon} = 0.01 \, (( \sigma_{ij} ))_{T \times T} $ with $ \sigma_{ij}  = {0.5}^{|i-j|} $. The reason to add this simulation scenario is to see the effect of additional correlation structure of the noise term on the performance of our method both in terms of detection and estimation. 

\begin{table}
\hspace{-0.5cm}
\caption{\label{table_sim_5} Results for Simulation Scenario 5. The table shows mean and standard deviation of estimated break point locations, the percentage of simulation runs where break points are correctly detected (selection rate), and the percentage of simulation runs where true break points are within the $ R_n$-radius of the estimated break points ($ R_n$-selection rate).}
\centering
\begin{tabular}{lcccccc}
  \hline
method & break points & truth & mean & std & selection rate & $R_n$-selection rate \\ 
  \hline
    \hline
  SBS-MVTS & 1 & 0.3333 & 0.3688 & 0.0413 & 0.68 & -- \\ 
              \, & 2 & 0.6667 & 0.6119 & 0.0945 & 0.80 & -- \\
  Our method  & 1 & 0.3333 & 0.3251& 0.0139 & 1 & 1\\ 
                  \, & 2 & 0.6667 & 0.6507 & 0.0213 & 1 & 1\\  
   \hline
\end{tabular}
\end{table}

Table~\ref{table_sim_5} reports the performance of our method and the SBS-MVTS developed in \citep{cho2015multiple} in the simulation scenario~5 in terms of detection of break points. More specifically, the mean and standard deviation of locations of the selected break points, relative to the sample size $ T $, as well as the percentage of simulation runs where break points are correctly identified are shown in Table~\ref{table_sim_5}. As seen from this table, our method performs very well in this scenario which confirms the applicability of our method in the case of correlated error terms.  

\begin{table}
\hspace{-1cm}
\caption{\label{table:sim5_AR} Results of parameter estimation for simulation scenario 5. The table shows mean and standard deviation of relative estimation error (REE), true positive rate (TPR), and false positive rate (FPR) for estimated coefficients.}
\centering
\begin{tabular}{lccccc} 
  \hline
 & Method & REE & SD(REE) & TPR & FPR \\
  \hline
    \hline 
  & Our Method   & 0.6012 & 0.0699 & 0.93 & 0.04 \\ 
  & SBS-MVTS     & 0.8005 & 0.1693 & 0.70 & 0.01  \\ 
   \hline
\end{tabular}

\end{table}

Table~\ref{table:sim5_AR} summarizes the results for autoregressive parameter estimation in the simulation scenario~5. The table shows mean and standard deviation of relative estimation error (REE), as well as true positive (TPR) and false positive rates (FPR) of the estimates. The results suggest that the proposed method performs well in terms of parameter estimation and is superior to the naive approach using the detected points of SBS-MVTS method and applying regularization method for parameter estimation. 

%

\end{document}